\newtheorem{definition}{Definition}
\newtheorem{assumption}{Assumption}
\newtheorem{theorem}{Theorem}
\newtheorem{remark}{Remark}
\newtheorem{lemma}{Lemma}
\newtheorem{corollary}{Corollary}
\def\BibTeX{{\rm B\kern-.05em{\sc i\kern-.025em b}\kern-.08em
    T\kern-.1667em\lower.7ex\hbox{E}\kern-.125emX}}
\begin{document}
\title{Decentralized Optimization with Amplified Privacy via Efficient Communication}
\author{Wei Huo, Changxin Liu, \IEEEmembership{Member, IEEE}, Kemi Ding, Karl Henrik Johansson, \IEEEmembership{Fellow, IEEE},  Ling Shi, \IEEEmembership{Fellow, IEEE}
\thanks{W. Huo and L. Shi are with the Department of Electronic and Computer Engineering, Hong Kong University of Science and Technology, Clear Water Bay, Kowloon, Hong Kong (email: whuoaa@connect.ust.hk, eesling@ust.hk).}
\thanks{C. Liu is with the Key Laboratory of Smart Manufacturing in Energy Chemical Process, Ministry of Education, East China University of Science and Technology, Shanghai 200237, China (email: changxinl@ecust.edu.cn).}
\thanks{K. H. Johansson is with the Division of Decision and Control Systems, School of Electrical Engineering and Computer Science, KTH Royal Institute of Technology, and also with Digital Futures, SE-10044 Stockholm, Sweden (email: kallej@kth.se).}
\thanks{K. Ding is with the School of System Design and Intelligent Manufacturing, Southern University of Science and Technology, Shenzhen, 518055, China (email: dingkm@sustech.edu.cn).}}

\maketitle

\begin{abstract}
Decentralized optimization is crucial for multi-agent systems, with significant concerns about communication efficiency and privacy.
{This paper explores the role of efficient communication in decentralized stochastic gradient descent algorithms for enhancing privacy preservation. 
We develop a novel algorithm that incorporates two key features: random agent activation and sparsified communication.}
Utilizing differential privacy, we demonstrate that these features reduce noise without sacrificing privacy, thereby amplifying the privacy guarantee and improving accuracy.
Additionally, we analyze the convergence and the privacy-accuracy-communication trade-off of the proposed algorithm.
Finally, we present experimental results to illustrate the effectiveness of our algorithm.
%Many studies have aimed to reduce communication costs or ensure privacy for decentralized optimization, yet they often treat these concerns independently.
%privacy and proposes a novel algorithm that integrates random activation and sparsification to enhance privacy preservation.
\end{abstract}

\begin{IEEEkeywords}
Decentralized stochastic optimization; differential privacy; communication efficiency
\end{IEEEkeywords}

\section{Introduction}
\label{sec:introduction}
Decentralized optimization enables multiple agents to collaborate in minimizing a shared objective without a central coordinator.
This approach has diverse applications, including large-scale model training in artificial intelligence~\cite{lian2017can} and spectrum sensing in cognitive networks~\cite{zeng2010distributed}.
%, and monitoring in smart grids~\cite{yang2016distributed}.
Various decentralized optimization methods have been developed, such as decentralized stochastic gradient descent (SGD)~\cite{lian2017can}, gradient tracking~\cite{pu2021distributed} and alternating direction method of multipliers~\cite{shi2014linear}.
%alternating direction method of multipliers (ADMM)~\cite{shi2014linear}.
%gradient tracking~\cite{pu2021distributed}.
%, and alternating direction method of multipliers (ADMM)~\cite{shi2014linear}. 
Despite the advancement of decentralized optimization techniques, issues such as communication overload and privacy breaches arise from agents broadcasting messages, especially in large-scale networks with mobile devices.
%the context of large-scale model training on mobile devices, an increasingly popular application nowadays.
For instance, in GPT-2~\cite{radford2019language}, which comprises approximately 1.5 billion parameters, transmitting these parameter at every iteration would result in prohibitively high communication costs. 
% Take GPT-2~\cite{radford2019language} as an example, which comprises approximately 1.5 billion parameters.
% If all agents were to transmit these parameter messages at every iteration, the communication cost would be prohibitively high.
Additionally, storing data locally on agents does not provide sufficient privacy protection, as demonstrated by recent inference attacks in~\cite{zhu2019deep, liu2022membership}.
% despite the local storage of data on agents, this alone does not provide sufficient privacy protection. 
% Recent inference attacks in~\cite{zhu2019deep, liu2022membership}
% , exemplified by those proposed in Fredrikson et al.~\cite{fredrikson2015model}, Zhu et al.~\cite{zhu2019deep}, and Liu et al.~\cite{liu2022membership}, 
% have shown that the sharing of local model updates or gradients can lead to privacy violations.
Therefore, there is a pressing need for integrated algorithms that effectively address communication and privacy challenges.
% Hence, these concerns underscore the necessity for integrated algorithms that effectively address communication and privacy challenges.

To address the communication bottleneck in decentralized optimization problems, some studies focus on reducing the number of communication rounds. 
For example, event-triggered methods have been used to balance the trade-off between communication costs and convergence performance~\cite{huo2024distributed}.
% Yi et al.~\cite{yi2018dynamic} introduced a dynamic event-triggered mechanism for decentralized optimization, while Huo et al. designed a stochastic event-triggered algorithm~\cite{huo2024distributed} for distributed noncooperative games to better balance the trade-off between communication costs and convergence performance. 
Other works employ compression techniques, such as quantization methods to decrease transmitted bit count~\cite{reisizadeh2019exact} and sparsification techniques to reduce message size during transmission~\cite{wang2021error}. 
% to reduce the communication overload per iteration, such as quantization methods to decrease transmitted bit count~\cite{reisizadeh2019exact} and sparsification techniques to reduce message size during transmission~\cite{wang2021error}.
However, these approaches did not explicitly consider privacy preservation. 
%Despite achieving enhanced communication efficiency, these approaches did not explicitly consider privacy preservation. 

%To facilitate decentralized optimization with a rigorous privacy guarantee, differential privacy (DP), pioneered by Dwork~\cite{dwork2006differential}, has emerged as the de facto standard among various privacy-preserving frameworks, primarily due to its strong privacy guarantee and effectiveness in data analysis tasks.
Decentralized optimization with privacy protection increasingly relies on differential privacy (DP) as a leading standard. 
DP, pioneered by Dwork~\cite{dwork2006differential}, is favored for its privacy assurances and efficiency in data analysis tasks.
%The fundamental approach of DP involves injecting noise into algorithms to obfuscate attackers' inferences, with noise intensity dependent on the sensitivity of the operation within the algorithm.
Huang et al.~\cite{huang2015differentially} introduced a differentially private distributed optimization algorithm by perturbing transmitted data with Laplacian noise. However, this method slowed convergence due to the linearly decaying learning rate.
% they designed the learning rate to be linearly decaying to ensure the sensitivity of the algorithm also decreases linearly, consequently slowing convergence. 
Ding et al.~\cite{ding2021differentially} enhanced the convergence rate to linear based on a gradient tracking scheme, but at the expense of increased communication load due to the necessity for agents to transmit two variables. 
While stochastic quantization schemes have been employed solely to ensure DP, the DP is established only for each iteration, risking cumulative privacy loss over multiple iterations~\cite{wang2022quantization, huo2024compression}.
% Recognizing the inherent randomness of stochastic compression scheme, Wang and Ba{\c{s}}ar~\cite{wang2022quantization} and Huo et al.~\cite{huo2024compression} employed stochastic quantization schemes solely to ensure DP, but the DP is established only for each iteration, risking cumulative privacy loss over multiple iterations.
Liu et al.~\cite{liu2024distributed} developed a class of differentially private distributed dual averaging algorithms and implemented agent subsampling procedures to bolster DP guarantees. For a comprehensive overview, see~\cite{liu2024survey}. Despite these advancements, the applicability of these methodologies is confined to convex optimizations, whereas many practical tasks entail non-convex optimization problems, such as neural network configurations in deep learning.
%Nonetheless, all these works are restricted to convex optimizations, but the objective functions in many practical tasks are non-convex, like deep learning with complicated neural networks. 

%However, $\varepsilon$-DP in~\cite{chen2023differentially} is proved only for each iteration, leading to a cumulative privacy loss of $T\varepsilon$ after $T$ iterations.

%Although significant advancements have been made in decentralized optimization, communication burden and privacy leakage arise from all nodes broadcasting original messages. These issues often co-occur in large-scale decentralized networks where nodes are mobile clients, which highlights the necessity for integrated algorithms that address both communication and privacy concerns.

Previous research has predominantly focused on either saving communication costs or preserving privacy in decentralized convex optimization~\cite{yi2018dynamic, huo2024distributed, reisizadeh2019exact, wang2021error, huang2015differentially, ding2021differentially, liu2024distributed}, we delve into decentralized non-convex optimization problems under joint communication and DP constraints.
%{\color{blue}Notably, for a centralized and convex setup, \cite{farokhi2021gradient} investigated the interplay between gradient sparsification and DP, and developed differentially private SGD with sparsification that achieves a better privacy-accuracy tradeoff, provided that the privacy budget is tight enough.}
Recent works by Xie et al.~\cite{xie2023compressed, xie2023differentially} integrated compression and DP for decentralized optimization. However, their approaches treating communication reduction and perturbation as independent operations led to cumulative errors, degrading optimization accuracy over finite iterations.
{For a centralized and convex setup, Farokhi~\cite{farokhi2021gradient} investigated the impact of gradient sparsification on DP, and developed differentially private SGD with sparsification that achieves a better privacy-accuracy tradeoff, provided that the privacy budget is tight enough.}
{In the federated setting, some studies used compression during uplink transmission to enhance privacy~\cite{hu2023federated, chen2024privacy}. However, Chen at al.~\cite{chen2024privacy} only considered central DP, which is weaker than agent-level privacy. 
Additionally, communication in decentralized networks is much more complicated than in master-slave settings.}
To address these challenges, 
%and inspired by an intuition that reducing communications limits the information available to eavesdroppers,
we aim to investigate the interplay between communication efficiency and DP.
Integrating communication reduction with noise perturbation to enhance privacy and quantifying this enhancement present ongoing challenges.
% \textcolor{blue}{To address this challenge,} we investigate the interplay between communication compression and privacy preservation.
% Inspired by an intuition that fewer communications reduce the information available to eavesdroppers, 
To overcome these obstacles, we propose a decentralized optimization algorithm based on the framework shown in Fig.~\ref{fig: framework}. 
\begin{figure}[t]  
	\centering
	\includegraphics[width=0.8\linewidth]{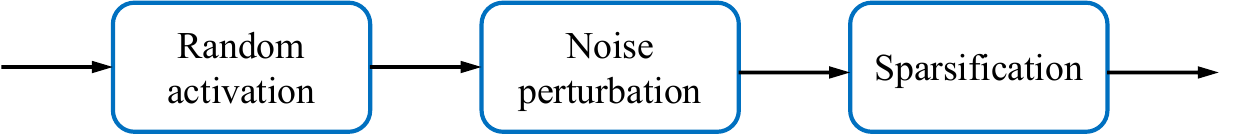}
	\caption{DP amplified by random activation and sparsification.} 
	\label{fig: framework}
\end{figure}
In our approach, agents are randomly activated during each iteration, with only active agents adding noise to their update, followed by message sparsification.
This framework improves communication efficiency by reducing communication rounds and transmitted data size.
Random agent activation ensures that only a randomly selected subset of data samples contributes to gradient calculations, thus improving privacy protection.
Sparsification also reduces gradient sensitivity, leading to decreased privacy loss in each communication round. 
Our method allows for reducing noise intensity without compromising DP, achieving a better balance between privacy, accuracy, and communication efficiency. Theoretical analysis is provided to demonstrate the convergence of our approach and its privacy guarantee.
% In other words, we can reduce the noise intensity without sacrificing DP, and better balance the privacy-utility-communication trade-off.
% We provide theoretical analysis to establish the convergence of our approach and its rigorous privacy guarantee.
The main contributions of our work are summarized as follows:
\begin{itemize}
\item[1)] We introduce a novel differentially private decentralized non-convex optimization algorithm that incorporates random agent activation and message sparsification~(\textbf{Algorithm~\ref{algo: one}}). 
Unlike previous studies that focused solely on communication reduction~\cite{yi2018dynamic, huo2024distributed, reisizadeh2019exact, wang2021error} or privacy preservation~\cite{huang2015differentially, ding2021differentially, liu2024distributed}, our algorithm simultaneously enhances both communication efficiency and privacy preservation. 
{Additionally, in contrast to~\cite{hu2023federated, chen2024privacy}, our approach does not require a central node for global information aggregation.}
\item[2)] We present a rigorous DP analysis of the designed algorithm~(\textbf{Theorem~\ref{thm: privacy}}), quantifying the amplified privacy level attributed to random activation and sparsification. 
Our findings indicate that reducing the agent activation probability and the number of coordinates during sparsification can lower the required noise for achieving $(\varepsilon, \delta)$-DP, unlike the works of Xie et al.~\cite{xie2023compressed, xie2023differentially} which overlook the interplay between compression and privacy preservation.
To our knowledge, this is the first study to analyze how efficient communication impacts privacy amplification in decentralized non-convex optimization.
% Unlike the findings of Xie et al.~\cite{xie2023compressed, xie2023differentially}, which ignored the interplay between compression and privacy preservation,
% our results indicate that reducing the agent activation probability and the number of coordinates during sparsification can reduce the required noise for achieving $(\varepsilon, \delta)$-DP. 
% In essence, improving communication efficiency strengthens privacy guarantees. 
% Our study, to the best of our knowledge, is the first to analyze how efficient communication impacts privacy amplification in decentralized non-convex optimization.
\item[3)] We establish the non-asymptotic convergence of our algorithm in finding first-order stationary points and demonstrate a better trade-off between privacy, accuracy, and communication efficiency~(\textbf{Theorem~\ref{thm: convergence}}, \textbf{Corollary~\ref{cor: optimal}}).
%\textcolor{blue}{compared to the exiting optimal error bound under DP~\cite{kairouz2021distributed}}. 
% Specifically, the optimization error is characterized by $O(kp/\varepsilon^{2})$, where $k$, $p$, and $\varepsilon$ represent the number of communicated coordinates, activation probability, and DP parameter, respectively. 
% \textcolor{blue}{Our optimization error bound, compared to the existing optimal bound of $\mathcal{O}(d/\varepsilon^{2})$ where $d$ is the dimension of the optimized variable, is further minimized by the sparsification ratio and agent activation probability.}
\end{itemize}

The paper is organized as follows. Section~\ref{sec: preliminaries} lays out the preliminaries and problem formulations.
%are provided in Section~\ref{sec: preliminaries}.
In Section~\ref{sec: algo}, we propose a decentralized optimization algorithm with amplified DP.
Section~\ref{sec: privacy} discusses the DP analysis and Section~\ref{sec: convergence} delves into the convergence analysis.
% Then, details on the DP analysis are shown in Section~\ref{sec: privacy} and the convergence analysis is presented in Section~\ref{sec: convergence}. 
Section~\ref{sec: sim} presents numerical simulations to illustrate our findings. 
Finally, Section~\ref{sec: conclusions} covers conclusions and future research.

\emph{Notation:} 
Let $\mathbb{R}^{p}$ and $\mathbb{R}^{p \times q}$ denote $p$-dimensional vectors and $p \times q$-dimensional matrices, respectively.
$I_{p} \in \mathbb{R}^{p \times p}$ represents a $p \times p$-dimensional identity matrix, $\mathbf{1}_{p} \in \mathbb{R}^{p}$ is a $p$-dimensional vector, and $[d]$ denotes the set of integers $\{1, 2, \dots, d \}$.
The notation $\text{diag} \{a_{1}, a_{2}, \dots, a_{N}\}$ denotes a diagonal matrix with elements $a_{1}, a_{2}, \dots, a_{N}$.
Let $[\cdot]_{j}$ signify the $j$-th coordinate of a vector, $c$ represent a set of integers, and $[x]_{c}$ be a vector containing elements $[x]_{j}$ for $j \in c$.
Denote $\|\cdot\|$ as the $\ell_{2}$-norm for a vector and the induced-$2$ norm for a matrix, while $\|\cdot \|_{F}$ as the Frobenius norm for a matrix.
$\mathcal{N}(\mu, \sigma^{2})$ stands for the Gaussian distribution with expectation $\mu$ and standard deviation $\sigma$, and $\text{\ttfamily Bern}(p)$ denotes the Bernoulli distribution with probability $p$ for a value of $1$ and with probability $1-p$ for a value $0$.
% of a random variable which takes the value $1$ with probability $p$ and the value $0$ with probability $1-p$.
We use $\mathbb{P}\{U\}$ to represent the probability of event $U$, and $\mathbb{E}[x]$ to be the expected value of random variable $x$.
The notation $O(\cdot)$ is used to describe the asymptotic upper bound. Mathematically, $h(n) = O(g(n))$ if there exist positive constants $C$ and $n_{0}$ such that $0 \leq h(n) \leq Cg(n)$ for all $n \geq n_{0}$. Similarly, the notation $\Omega(\cdot)$ provides the asymptotic lower bound, i.e., $h(n) = \Omega(g(n))$ if there exist positive constants $C$ and $n_{0}$ such that $0 \leq Cg(n) \leq h(n)$ for all $n \geq n_{0}$.

\section{Preliminaries and Problem Formulation} \label{sec: preliminaries}
This section presents the basic setup of decentralized non-convex optimization and a class of decentralized momentum SGD algorithms. We will discuss the potential privacy leakage in traditional algorithms and introduce some concepts of DP. Finally, we will formulate our problems.
\subsection{Basic Setup}
We consider the finite-sum stochastic optimization problem
%optimization problems distributed across $n$ agents of the form
\begin{equation} \label{eq: problem}
	\min_{x \in \mathbb{R}^{d}} \ f(x) = \frac{1}{n} \sum_{i=1}^{n} f_{i}(x), \ f_{i}(x) = \mathbb{E}_{\zeta_{i} \sim \mathcal{D}_{i}} l_{i}(x, \zeta_{i}),
\end{equation}
where $n$ is the number of agents, $\mathcal{D}_{1}, \dots, \mathcal{D}_{n}$ are distributions of the local dataset on every agent, $l_{i}: \mathbb{R}^{d} \times \mathcal{D}_{i} \to \mathbb{R}$ are possibly non-convex loss functions. 
This scenario encompasses the crucial scenario of minimizing empirical risk in distributed learning applications~\cite{liu2024distributed}.
We make the following assumptions on the loss function:
\begin{assumption} \label{assum: f} 
Each loss function $f_{i}: \mathbb{R}^{d} \to \mathbb{R}$ for $i \in [n]$ satisfies the following conditions.
\begin{itemize}
    \item[\romannumeral1)] The function $f_{i}$ is $L$-smooth, i.e., for any $x, y \in \mathbb{R}^{d}$,
	\begin{equation*}
	f_{i}(y) \leq f_{i}(x) + \left< \nabla f_{i}(x), y-x  \right> + \frac{L}{2}\|y-x \|^{2}.
		%\| \nabla f_{i}(y) - \nabla f_{i}(x) \| \leq L \|y-x \|, 
		%\ \forall x, y \in \mathbb{R}^{d}, i \in [n],
	\end{equation*}
  \item[\romannumeral2)] The variance of stochastic gradients computed at $f_{i}$ is bounded, i.e.,  
  %The function $f_{i}$ has a bounded local variance, i.e., 
  $\mathbb{E}_{\zeta_{i}} \left[ \| \nabla l_{i}(x, \zeta_{i}) - \nabla f_{i}(x) \|^{2} \right] \leq \varsigma_{i}^{2}$, for all $x\in \mathbb{R}^{d}$, where $\mathbb{E}_{\zeta_{i}}[\cdot]$ denotes the expectation over $\zeta_{i} \sim \mathcal{D}_{i}$.
\end{itemize}
%Each function $f_{i}: \mathbb{R}^{d} \to \mathbb{R}$ for $i \in [n]$ is $L$-smooth, i.e., for all $x, y \in \mathbb{R}^{d}$ and $i \in [n]$,
 %    \begin{equation*}
	% f_{i}(y) \leq f_{i}(x) + \left< \nabla f_{i}(x), y-x  \right> + \frac{L}{2}\|y-x \|^{2},
	% 	%\| \nabla f_{i}(y) - \nabla f_{i}(x) \| \leq L \|y-x \|, 
	% 	%\ \forall x, y \in \mathbb{R}^{d}, i \in [n],
	% \end{equation*}

	%and the function $f_{i}$ has a bounded local variance, i.e., $\mathbb{E}_{\zeta_{i}} \| \nabla l_{i}(x, \zeta_{i}) - \nabla f_{i}(x) \|^{2} \leq \varsigma_{i}^{2}$, for all $x\in \mathbb{R}^{d}$ and $i \in [n]$,
	%variance of the stochastic gradient is bounded on each node:
%	\begin{align*}
%		\mathbb{E}_{\zeta_{i}}\| \nabla l_{i}(x, \zeta_{i}) - \nabla f_{i}(x) \|^{2} &\leq \varsigma_{i}^{2}, \\
%		 \mathbb{E}_{\zeta_{i}}\| \nabla l_{i}(x, \zeta_{i})\|^{2} &\leq G^{2}, \forall x\in \mathbb{R}^{d}, i \in [n],
%	\end{align*}
	%where $\mathbb{E}_{\zeta_{i}}[\cdot]$ denotes the expectation over $\zeta_{i} \sim \mathcal{D}_{i}$. 
\end{assumption}
 %We also denote $\bar{\varsigma}^{2}:= \frac{1}{n}\sum_{i=1}^{n}\varsigma_{i}^{2}$ for convenience.
\begin{assumption} \label{assum: bounded}
	%The loss function $\nabla l_{i}(x, \zeta_{i})$ has $G/\sqrt{d}$-bounded gradients, i.e., 
 For any data sample $\zeta_{i} \sim \mathcal{D}_{i}$, we have $\left |[\nabla l_{i}(x, \zeta_{i})]_{j} \right| \leq G/\sqrt{d}$ for all $x\in \mathbb{R}^{d}$, $i \in [n]$, and $j \in [d]$.
\end{assumption}
Assumption~\ref{assum: f} is standard in non-convex optimization literature~\cite{reddi2020adaptive, xu2023distributed}. Assumption~\ref{assum: bounded} characterizes the sensitivity of the gradient query
%the sensitivity of each coordinate of the gradient 
$\nabla l_{i}(x, \zeta_{i})$ and implies $\mathbb{E}_{\zeta_{i}}[\| \nabla l_{i}(x, \zeta_{i})\|^{2}] \leq G^{2}$, which can be enforced by the gradient clipping technique~\cite{chen2020understanding}.

\subsection{Decentralized Momentum SGD}
In decentralized networks, each agent is only allowed to communicate with its local neighbors defined by the network topology, given as a weighted graph $\mathcal{G} = ([n], \mathcal{E})$, with the edge set $\mathcal{E}$ representing the communication links along which messages can be exchanged. We assign a positive weight $0< w_{ij} < 1$ to every edge, and $w_{ij}=0$ for disconnected agents $(i,j) \notin \mathcal{E}$. The following assumption about the weight matrix $W$ is used in our work.
\begin{assumption} \label{assum: W}
	The weight matrix $W$ is symmetric ($W^{\top} = W$) and doubly stochastic ($W\mathbf{1} = \mathbf{1}$, $\mathbf{1}^{\top}W = \mathbf{1}^{\top}$), with eigenvalues $1 = |\lambda_{1}(W)|> | \lambda_{2}(W) | \geq \cdots \geq |  \lambda_{n}(W) |$ and spectral gap $\rho := 1-|\lambda_{2}(W)|\in(0,1]$. 
\end{assumption}
A classical choice of $W$ is $W = I - \frac{1}{\iota} L$, where $\iota \in (d_{\max}, \infty)$, $d_{\max}$ is the maximum degree of $\mathcal{G}$, and $L$ is a Laplacian matrix of $\mathcal{G}$. It will also be convenient to define $\phi := \|I -W \|_{2} \in [0,2]$.

In decentralized SGD algorithms, using momentum instead of the plain gradient in the iteration allows acceleration~\cite{yu2019linear}. At iteration $t$, each agent $i$ stores its private variable, denoted as $x_{i,t} \in \mathbb{R}^{d}$, and a momentum variable, $m_{i,t}$. Agent $i$ updates these variables as 
%a stochastic gradient descent and a consensus average step as
\begin{subequations} \label{eq: traditional}
\begin{align}
	\label{eq: tm_update} m_{i,t+1} =& g_{i,t} + \beta m_{i,t}, \\
	\label{eq: tx_update} x_{i,t+1} =& x_{i,t} - \alpha m_{i,t+1} + \gamma \sum_{j \in [n]} w_{ij}(x_{j,t} - x_{i,t}),
\end{align}
\end{subequations}
where $g_{i,t} = \nabla l_{i}(x_{i,t}, \zeta_{i,t})$ is the stochastic gradient of local loss over $x_{i,t}$ with $\zeta_{i,t}$ randomly sampled from $\mathcal{D}_{i}$, $0< \beta < 1$ is a momentum factor,
 $\alpha > 0$ and $\gamma > 0$ are the SGD step size and the consensus step size, respectively.

Note that directly transmitting the original local variable at each iteration might waste communication resources and cause privacy leakage.

\subsection{Privacy Leakage and DP}
The local datasets usually contain sensitive information about agents. If problem~\eqref{eq: problem} is solved in an insecure environment, the information leakage will pose a threat to personal and property security. 
%In this paper, we consider the following adversary model~\cite{liu2024distributed}:
% \begin{definition}
% 	(Adversary Model) 
% 	The attackers can be
% 	\begin{itemize}
% 	\item[\romannumeral1)] Honest-but-curious agents that follow the designed protocol but are curious about their neighbors' private data and may infer it from the shared messages.
% 	\item[\romannumeral2)] Outside eavesdroppers who can intercept all shared messages in the execution of the protocol but will not actively inject false messages into or interrupt message transmissions.  
% 	\end{itemize}
% \end{definition}
Specifically, let us consider the worst case, i.e., the attacker knows the information of $\alpha$, $\beta$, $\gamma$, and $W$ in~\eqref{eq: traditional}. Once intercepting $x_{i,t}$ for $t \geq 0$, the attacker can infer
\begin{equation*}
	m_{i,t+1} = \frac{x_{i,t} - x_{i,t+1} - \gamma \sum_{j \in [n]} w_{ij}(x_{j,t} - x_{i,t})}{\alpha}.
\end{equation*}
Then, the attacker can learn the exact value of the gradient by $g_{i,t} = m_{i,t+1} - \beta m_{i,t}$.
With the learned $g_{i,t}$, even raw data like pixel-wise accuracy for images and token-wise matching for texts can be precisely reversely inferred by deep leakage from gradient attackers~\cite{zhu2019deep}. 

To design a privacy-preserving algorithm, we utilize the concept of DP in this work. 
%DP is a rigorous notion of privacy that has become the de facto standard for measuring privacy risk. 
It refers to the property of a randomized algorithm $\mathcal{A}$ that the presence or absence of an individual in a dataset cannot be distinguished based on the output of $\mathcal{A}$~\cite{dwork2006differential}.
Formally, we introduce the following definition of DP in the context of decentralized stochastic optimization.

\begin{definition} \label{defn: DP}
Consider a communication network $\mathcal{G}$, in which each node has its dataset $\mathcal{D}_{i}$. Let $\{s_{i,t}, i \in [n] \}$ denote the set of messages exchanged among the nodes at iteration $t$. A distributed algorithm $\mathcal{A}$ satisfies $(\varepsilon, \delta)$-DP during $T$ iterations, if for every pair of neighboring dataset $\mathcal{D} = \cup_{i \in [n]}\mathcal{D}_{i}$ and $\mathcal{D}^{\prime} = \cup_{i \in [n]}\mathcal{D}_{i}^{\prime}$, and for any set of possible outputs $\mathcal{O}$ during $T$ iterations we have
	\begin{equation} \label{eq: dp_defn}
	\begin{aligned}
		& \mathbb{P}\{ \{s_{i,t}, i \in [n], t \in [T] \} \in \mathcal{O} | \mathcal{D} \} \\
		\leq & e^{\varepsilon} \mathbb{P}\{ \{s_{i,t}, i \in [n], t \in [T] \} \in \mathcal{O} | \mathcal{D}^{\prime} \} + \delta.
	\end{aligned}
	\end{equation}
\end{definition}
Definition~\ref{defn: DP} states that the output distributions of neighboring datasets exhibit small variation. 
The factor $\varepsilon$ in~\eqref{eq: dp_defn} represents the upper bound of privacy loss by algorithm $\mathcal{A}$, and $\delta$ denotes the probability of breaking this bound. Therefore, a smaller $\varepsilon$ correspond to a stronger privacy guarantee.
Gaussian mechanism (GM) is a commonly employed technique to achieve $(\varepsilon, \delta)$-DP, as given in the following lemma.
%and $\delta$ correspond to a stronger privacy guarantee. 
%The Gaussian mechanism is one of the commonly employed techniques to achieve $(\varepsilon, \delta)$-DP.

\begin{lemma} \label{lem: GM}
(Gaussian Mechanism~\cite{balle2018improving}) 
A GM $\mathcal{M}$ for a vector-valued computation $r: \mathcal{D} \to \mathbb{R}^{d}$ is obtained by computing the function $r$ on the input data $\zeta \in \mathcal{D}$ and then adding random Gaussian noise perturbation $\nu \sim \mathcal{N}(0, \sigma^{2}I_{d})$ to the output, i.e., 
\begin{equation*}
	\mathcal{M} = r(\zeta) + \nu.
\end{equation*}	
The GM $\mathcal{M}$ is $\left(\frac{\sqrt{2 \log (1.25/ \delta) }\Delta}{\sigma}, \delta \right)$-DP for any two neighboring dataset $\mathcal{D}$ and $\mathcal{D}^{\prime}$, where $\Delta$ denotes the sensitivity of $r$, i.e., $\Delta = \sup_{\mathcal{D}, \mathcal{D}^{\prime}} \| r(\mathcal{D}) - r(\mathcal{D}^{\prime}) \|$. 
\end{lemma}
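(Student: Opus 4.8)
The plan is to reduce the multivariate problem to a scalar one and then control the tail of the privacy loss random variable. First I would observe that the $(\varepsilon,\delta)$-DP guarantee in~\eqref{eq: dp_defn} is governed by the worst-case pair of neighboring inputs, so by the definition of the sensitivity $\Delta$ it suffices to analyze the two output densities $p_{\mathcal{D}} = \mathcal{N}(r(\mathcal{D}), \sigma^{2} I_{d})$ and $p_{\mathcal{D}^{\prime}} = \mathcal{N}(r(\mathcal{D}^{\prime}), \sigma^{2} I_{d})$ whose means differ by a vector $v := r(\mathcal{D}) - r(\mathcal{D}^{\prime})$ with $\|v\| \le \Delta$. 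Exploiting the spherical symmetry of the isotropic Gaussian, I would align $v$ with a single coordinate axis; since only the noise component along $v$ affects the likelihood ratio, the analysis collapses to comparing $\mathcal{N}(0,\sigma^{2})$ and $\mathcal{N}(\|v\|,\sigma^{2})$ on the real line.

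Next I would introduce the privacy loss random variable $L(o) := \log\frac{p_{\mathcal{D}}(o)}{p_{\mathcal{D}^{\prime}}(o)}$. Writing $o = r(\mathcal{D}) + \nu$ with $\nu \sim \mathcal{N}(0,\sigma^{2} I_{d})$ and expanding the two Gaussian densities, a direct computation yields $L = \frac{\langle \nu, v\rangle}{\sigma^{2}} + \frac{\|v\|^{2}}{2\sigma^{2}}$, so that under $o \sim p_{\mathcal{D}}$ the loss $L$ is itself Gaussian with mean $\frac{\|v\|^{2}}{2\sigma^{2}}$ and variance $\frac{\|v\|^{2}}{\sigma^{2}}$. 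The key structural fact exploited here is that the log-ratio of two shifted isotropic Gaussians is an affine function of a Gaussian, hence Gaussian.

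I would then invoke the standard sufficient condition for $(\varepsilon,\delta)$-DP: it is enough to exhibit a measurable set of outputs on which the density ratio never exceeds $e^{\varepsilon}$ and whose complement carries probability at most $\delta$ under $p_{\mathcal{D}}$; equivalently, it suffices to show $\mathbb{P}_{o\sim p_{\mathcal{D}}}\{ L(o) > \varepsilon \} \le \delta$, together with the symmetric statement obtained by swapping $\mathcal{D}$ and $\mathcal{D}^{\prime}$. Substituting the Gaussian law of $L$ and applying the tail estimate $\mathbb{P}\{Z > t\} \le \frac{\sigma}{t\sqrt{2\pi}}\, e^{-t^{2}/(2\sigma^{2})}$ for a centered Gaussian $Z \sim \mathcal{N}(0,\sigma^{2})$, I would verify that the choice $\varepsilon = \frac{\sqrt{2\log(1.25/\delta)}\,\Delta}{\sigma}$ forces the tail probability below $\delta$, using $\|v\| \le \Delta$ to bound the mean and variance of $L$ in the worst case.

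The main obstacle I anticipate is pinning down the explicit constant $1.25$ appearing in $\varepsilon$. This is not a single clean inequality but a short chain of estimates on the Gaussian tail, and it relies on the usual regime assumption (typically $\varepsilon \le 1$) under which the classical bound is valid; a careless step either inflates the constant or breaks the $\le \delta$ conclusion. An alternative, following~\cite{balle2018improving}, would be to express the exact guarantee through the standard Gaussian CDF $\Phi$ and then bound the resulting threshold, trading the delicate constant-chasing for an analytic estimate of $\Phi$.
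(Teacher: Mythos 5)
This lemma is not proved in the paper at all: it is imported verbatim from the cited reference (Balle and Wang, 2018), so there is no internal proof to compare against. Your proposal is therefore doing genuinely more work than the paper, and it is a correct outline of the \emph{classical} Gaussian-mechanism proof (the Dwork--Roth argument): reduce to a worst-case pair with $\|v\|\le\Delta$, use rotational invariance to collapse to one dimension, observe that the privacy loss $L=\frac{\langle\nu,v\rangle}{\sigma^{2}}+\frac{\|v\|^{2}}{2\sigma^{2}}$ is Gaussian under $p_{\mathcal{D}}$, invoke the standard sufficient condition $\mathbb{P}_{p_{\mathcal{D}}}\{L>\varepsilon\}\le\delta$ (with the symmetric direction following by replacing $v$ with $-v$), and close with the Mills-ratio tail bound. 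All of these steps are sound. The one caveat you correctly flag is essential: the constant $\sqrt{2\log(1.25/\delta)}$ is only valid in the regime where the resulting $\varepsilon$ is at most $1$; as stated, the lemma is silently assuming this, and the paper's downstream use is consistent with it (Theorem~\ref{thm: privacy} takes $\varepsilon\in(0,1]$, and the per-step budget satisfies $\varepsilon_{t}^{2}\le 1/5$ in~\eqref{eq: eps_t}). The difference between the two routes is worth noting: the reference the paper cites does not use your tail-bound argument but characterizes the Gaussian mechanism exactly through the Gaussian CDF, $\Phi\left(\frac{\Delta}{2\sigma}-\frac{\varepsilon\sigma}{\Delta}\right)-e^{\varepsilon}\Phi\left(-\frac{\Delta}{2\sigma}-\frac{\varepsilon\sigma}{\Delta}\right)\le\delta$, which is tighter and valid for all $\varepsilon>0$; the classical proof you sketch buys self-containedness and elementary tools at the cost of the $\varepsilon\le 1$ restriction and looser constants. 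Either suffices for the lemma as used here.
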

Lemma~\ref{lem: GM} indicates that achieving $(\varepsilon, \delta)$-DP requires adjusting the noise intensity based on the privacy guarantee $\varepsilon$ and $\delta$, as well as the sensitivity $\Delta$.

\subsection{Problem Statement} \label{subsec: pro}
In this paper, we aim to answer the following questions:
\begin{itemize}
\item[(a)] Can we design an algorithm to reduce communication costs as well as guarantee DP for decentralized optimization problems~\eqref{eq: problem}?
\item[(b)] To achieve $(\varepsilon, \delta)$-DP in the proposed algorithm, what level of noise intensity, i.e., $\sigma$, shall we add?
\item[(c)] What optimization accuracy of the proposed algorithm can we achieve?
%and what is the iteration complexity?
\end{itemize}

\section{Algorithm Development} \label{sec: algo}
In this section, we develop an algorithm to answer problem (a) in Section~\ref{subsec: pro}.
The framework depicted in Fig.~\ref{fig: framework} is introduced to simultaneously enhance communication efficiency and amplify DP. We will outline the design of algorithms for each component within this framework.
% We proposed the framework shown in Fig.~\ref{fig: framework} to simultaneously achieve improved communication efficiency and amplified DP. We will introduce how we design algorithms in each part of this framework. 
\subsection{Noise Perturbation}
In terms of achieving DP for optimization methods, there are two common methods. The first type disturbs the output of a non-private algorithm, and the second type perturbs the gradient. The former involves recursively estimating the time-varying sensitivity of updates, which makes the propagation of DP noise and its effect on convergence difficult to quantify. 
Therefore, we adopt the latter approach in this work and introduce noise to perturb the stochastic gradient in~\eqref{eq: tm_update}, allowing more flexibility to conduct both privacy and convergence analysis.

By perturbing the gradient, the update of momentum becomes
\begin{equation*}
	m_{i,t+1} = g_{i,t} + \theta_{i,t} + \beta m_{i,t},
\end{equation*}
where $\theta_{i,t} \sim \mathcal{N}(0, \sigma^{2}I_{d})$ is the Gaussian noise and $\sigma$ is its standard deviation. The noise is independently drawn by agent $i$ at each iteration.

%\subsection{Sparsified Gossip Communication}
\subsection{Sparsified Communication}
To reduce the message size, we introduce the compression technique. Commonly used compressors include quantizers and sparsifiers. 
Quantizers involve reducing the precision of messages by mapping them to a smaller set of discrete values with fewer bits, 
like random $b$-bits quantizers~\cite{huo2024compression},
%like $b$-bits $q$-norm quantizers,
while sparsifiers focus on transmitting partial elements of variables and setting the rest to zero, like $\text{\ttfamily Random}_{k}$~\cite{hu2023federated}, $\text{\ttfamily Top}_{k}$~\cite{yi2022communication}, and Bernoulli sparsifier~\cite{chen2024privacy}.  
Lemma~\ref{lem: GM} indicates that the required variance of Gaussian noise depends on the sensitivity of queries. 
Comparing quantizers with sparsifiers, we find that sparsifiers have the potential to diminish the sensitivity of messages, thereby amplifying the DP guarantee. 
Additionally, quantization would destroy the distribution of injected Gaussian perturbation, rendering privacy analysis intractable.
%, while quantizers do not have this effect. 
Therefore, we use sparsifiers in our work to enhance privacy preservation. 

The $\text{\ttfamily Random}_{k}$ and Bernoulli sparsifiers both randomly determine which coordinates to be kept in transmission, but they may discard coordinates that are actually important, which inevitably degrades model accuracy when $k$ is small. On the other hand, $\text{\ttfamily Top}_{k}$ keeps the coordinates with the largest magnitude and can achieve higher model accuracy for small $k$. Hence, we utilize the $\text{\ttfamily Top}_{k}$ sparsifier in our algorithm, which is formally defined as follows:
\begin{definition}
	($\text{\ttfamily Top}_{k}$ Sparsification): Given $x \in \mathbb{R}^{d}$ and a parameter $k \in [d]$, the largest $k$ coordinates in magnitude is selected among the vector $x$, i.e., 
	\begin{equation*}
		\mathcal{S}(x) := x \odot e ,
	\end{equation*}
	where $\odot$ denotes the Hadamard (element-wise) product, the element of $x$ is reordered as $|[x]_{1}| \geq |[x]_{2}| \geq \cdots |[x]_{d}|$, and $e$ satisfies $[e]_{l} = 1$ for $l \leq k$ and $[e]_{l} = 0$ for $l > k$.
\end{definition}

%Although the $\text{\ttfamily Top}_{k}$ sparsifier offers advantages over $\text{\ttfamily Random}_{k}$ and Bernoulli sparsifiers, it exhibits bias and would cause divergence due to error accumulation.
However, the mean square error of sparsification typically scales with the norm of the input vectors. 
Since the local states can be any value, sparsification can result in a large mean square error, which is difficult to handle. 
To address this issue, we introduce local copies of agents and treat the true local states as reference signals. 
By carefully designing appropriate step sizes, we guide these local copies toward the true local states, ensuring their differences remain bounded. 
Sparsifying this difference allows us to effectively manage the mean square error and achieve improved convergence results.
%Inspired by the compressed gossip scheme in Koloskova et al.~\cite{koloskova2019decentralized}.
Specifically, we let each agent $i \in [n]$ store and update its own local variable $x_{i,t}$ as well as the variables $\hat{x}_{j,t}$ for all neighbors $j: (i,j) \in \mathcal{E}$. The communication algorithm can be summarized as
\begin{subequations} \label{eq: communication}
	\begin{align}
	x_{i,t+1} =& x_{i,t} - \alpha m_{i,t+1} + \gamma \sum_{j \in [n]} w_{ij}(\hat{x}_{j,t} - \hat{x}_{i,t}), \\
	s_{i,t} =& \mathcal{S}(x_{i,t+1} - \hat{x}_{i,t}), \\
	\hat{x}_{j,t+1} =& s_{j,t} + \hat{x}_{j,t}, \quad \forall j: (i,j) \in \mathcal{E}.
\end{align}
\end{subequations}
The agents communicate the sparsified updates $s_{i,t}$ with neighbors and update the variable $\hat{x}_{j,t}$ for all their neighbors. These $\hat{x}_{i,t}$ are available to all the neighbors of node $i$ and represent the ``publicly available" copies of the private $x_{i,t}$. It is worth noting that $\hat{x}_{i,t} \neq x_{i,t}$ generally due to the sparsification.
%This sparsified communication scheme preserves the averages of the iterates even in the presence of sparsification errors.

\subsection{Random Activation}
To further reduce the communication cost of the network and introduce randomness to amplify privacy, we activate each node randomly at each iteration. 
If node $i$ is active, it will update its variables using a stochastic gradient and send out the sparsified update. Otherwise, it will update without SGD and stay idle in transmission.

Denote $\eta_{i,t}$ to indicate whether agent $i$ transmits data at time $t$, i.e., 
\begin{equation*}
	\eta_{i,t} = \left\{
	\begin{array}{cl}
		1, & \text{agent $i$ is active in transmission}, \\
		0, & \text{agent $i$ is inactive in transmission}, \\
	\end{array}
	\right.
\end{equation*}
where $\mathbb{P}[\eta_{i,t}=1]=p$ and $\mathbb{P}[\eta_{i,t}=0]=1-p$ with $\frac{1}{2} \leq p \leq 1$. Then, the updates of local variables become
\begin{subequations} \label{eq: algo_3}
	\begin{align}
	\label{eq: m_update} m_{i,t+1} =& \eta_{i,t}(g_{i,t} + \theta_{i,t}) + \beta m_{i,t}, \\
	\label{eq: x_update} x_{i,t+1} =& x_{i,t} - \eta_{i,t}\left( \alpha m_{i,t+1} \right) + \gamma \sum w_{ij}(\hat{x}_{j,t} - \hat{x}_{i,t}), \\
 \label{eq: s_update} s_{i,t} =&  \mathcal{S}(x_{i,t+1} - \hat{x}_{i,t}) , \\
	\label{eq: replica_update} \hat{x}_{j,t+1} =& \eta_{j,t}s_{j,t} + \hat{x}_{j,t}, \quad \forall j:(i,j) \in \mathcal{E} .
\end{align}
\end{subequations}

In summary, our proposed framework in Fig.~\ref{fig: framework} underpins the algorithm designed to amplify DP through efficient communication, as detailed in Algorithm~\ref{algo: one}, called DO-ADP.
% based on the proposed framework in Fig.~\ref{fig: framework}, our designed algorithm that amplifies DP via efficient communication is shown in Algorithm~\ref{algo: one}.
{At each iteration, DO-ADP uses communication resources at a rate of $pk/d \leq 100\%$, reducing communication costs via partial agent activation and sparsification. 
In the following section, we will analyze how DO-ADP leverages the features from the designed communication reduction scheme to enhance DP preservation. 
}
%\textcolor{blue}{Previous works have employed diminishing noise perturbation~\cite{huang2015differentially, ding2021differentially, xie2023compressed}, which may com the privacy-preserving performance, we use the constant noise level to ensure a strong privacy guarantee.}

\begin{figure}[t]
  \begin{algorithm}[H]
	\caption{Decentralized Optimization with Amplified Differential Privacy (DO-ADP)}
	\begin{algorithmic}[1] \label{algo: one}
	\renewcommand{\algorithmicrequire}{\textbf{Input:}}
    \renewcommand{\algorithmicensure}{\textbf{Initialize:}}
    \REQUIRE Public information $W$, consensus step size $\gamma$, SGD step size $\alpha$, noise variance $\sigma$, momentum factor $\beta$, and the total number of iterations $T$.
    \ENSURE  $x_{i,0} = x_{0} \in \mathbb{R}^{n}$, $\hat{x}_{i,0} = \mathbf{0}$, and $m_{i,0} = \mathbf{0}$.
	\FOR {$k = 0, 1, 2,  \dots, T-1$}
	\FOR{for each $i \in \mathcal{N}$}
\STATE Draw $\eta_{i,t} \sim \text{\ttfamily Bern}(p)$.
\IF{node $i$ is active ($\eta_{i,t}=1$)}
\STATE Sample $\zeta_{i,t}$ uniformly from local dataset $\left\{\zeta_{i}^{(1)}, \dots, \zeta_{i}^{(q)} \right \}$ and compute the gradient $g_{i,t} = \nabla f_{i}(x_{i,t}, \zeta_{i,t})$.
\STATE Update the local momentum with the gradient and weight decay
\begin{equation*}
	m_{i,t+1} = g_{i,t} + \theta_{i,t} + \beta m_{i,t},
\end{equation*}
where $\theta_{i,t} \sim \mathcal{N}(0, \sigma^{2})$.
\STATE $x_{i,t+1} = x_{i,t} - \alpha m_{i,t+1} + \gamma \sum w_{ij}(\hat{x}_{j,t} - \hat{x}_{i,t})$.
\STATE Sparsifies the update: $s_{i,t} = \mathcal{S}(x_{i,t+1} - \hat{x}_{i,t})$.
\STATE Send $s_{i,t}$.
\ELSE
\STATE $m_{i,t+1} = \beta m_{i,t}$,
\STATE $x_{i,t+1} = x_{i,t} + \gamma \sum w_{ij}(\hat{x}_{j,t} - \hat{x}_{i,t})$.
\ENDIF
\STATE Receives $s_{j,t}$ from active neighbors.
\STATE Updates the local replica $\hat{x}_{j,t+1} = s_{j,t} + \hat{x}_{j,t}$ for active neighbors and $\hat{x}_{k,t+1} = \hat{x}_{k,t}$ for inactive neighbors.
    \ENDFOR
    \ENDFOR
	\end{algorithmic}
  \end{algorithm}
\end{figure}

\section{Privacy Analysis} \label{sec: privacy}
In this section, we provide a privacy guarantee of Algorithm~\ref{algo: one} and address question (b) stated in Section~\ref{subsec: pro}.

To analyze the effect of sparsification communication on privacy performance, let $c_{i}^{t}$ denote the selected coordinate set for participating agent $i$ via $\text{\ttfamily Top}_{k}$ at round $t$, i.e., $\mathcal{S}(\cdot) = [\cdot]_{c_{i}^{t}}$. Therefore, the transmitted message is
\begin{equation*}
\begin{aligned}
	s_{i,t} =& [x_{i,t+1} - \hat{x}_{i,t}]_{c_{i}^{t}} =   [x_{i,t+1}]_{c_{i}^{t}} -  [\hat{x}_{i,t}]_{c_{i}^{t}}\\
	=& \left[x_{i,t} -\eta_{i,t} \alpha \beta m_{i,t} + \gamma \sum w_{ij}(\hat{x}_{j,t} - \hat{x}_{i,t})\right]_{c_{i}^{t}} \\
	& \quad - \alpha \eta_{i,t}  \left[ g_{i,t} + \theta_{i,t}\right]_{c_{i}^{t}} -  [\hat{x}_{i,t}]_{c_{i}^{t}}.
\end{aligned}
\end{equation*}
An important observation is that only the value in $c_{i}^{t}$ are transmitted among neighboring agents. Hence, the privacy level will remain the same if we only add noise to the gradient with the selected coordinate.
In other words,
% we can provide the same privacy guarantee even if we reduce the amount of added noise. Therefore, we can improve the accuracy in optimization.
% According to the above analysis, we conclude that 
what matters in privacy protection is the sparsified noise gradient update, which can be represented as $\left[ g_{i,t} + \theta_{i,t}\right]_{c_{i}^{t}}
	= \left[g_{i,t}\right]_{c_{i}^{t}} + \left[\theta_{i,t} \right]_{c_{i}^{t}}$.
% \begin{equation*}
% \left[ g_{i,t} + \theta_{i,t}\right]_{c_{i}^{t}}
% 	= \left[g_{i,t}\right]_{c_{i}^{t}} + \left[\theta_{i,t} \right]_{c_{i}^{t}} ,
% \end{equation*}
%for client $i$ at iteration $t$. 
Let $\Delta$ be the $\ell_{2}$-sensitivity of $\left[g_{i,t}\right]_{c_{i}^{t}}$, then we have
\begin{align} \label{eq: sensivity}
	\Delta^{2} 
	=& \max_{\mathcal{D}_{i}, \mathcal{D}_{i}^{\prime}} \left \| \left[\hat{g}_{i,t}\right]_{c_{i}^{t}} - \left[\hat{g}_{i,t}^{\prime}\right]_{c_{i}^{t}} \right \|^{2} \nonumber \\
	=& \max_{\mathcal{D}_{i}, \mathcal{D}_{i}^{\prime}} \left\|  \left[\nabla l_{i}(x_{i,t}, \zeta_{i,t}) - \nabla l_{i}(x_{i,t}, \zeta_{i,t}^{\prime}) \right]_{c_{i}^{t}} \right \|^{2} \nonumber \\
	\leq & \frac{4k G^{2} }{d},
\end{align}
where the last inequality holds from Assumption~\ref{assum: bounded}.
Compared with traditional optimization algorithms with DP where the sensitivity is $2G$~\cite{huang2015differentially, liu2024distributed}, the sensitivity in our algorithm is reduced by a ratio of $\sqrt{k/d}$ due to the sparsification, thereby reducing the required noise intensity.

Based on the sensitivity in~\eqref{eq: sensivity} and some useful properties of DP shown in Lemmas~4--6 in Appendix~A, we have the privacy guarantee for Algorithm~\ref{algo: one}.
\begin{theorem} \label{thm: privacy}
Suppose Assumption~\ref{assum: bounded} holds. Given parameters $\varepsilon \in (0,1]$, and $\delta_{0} \in (0,1]$, if
	\begin{equation} \label{eq: noise}
		\sigma^{2} \geq \frac{160 k p^{2} T \log(1.25/\delta_{0})G^{2}}{ q^{2} d \varepsilon^{2} }
	\end{equation}
	and $T \geq \frac{ q^{2} \varepsilon^{2}}{4 p^{2}}$, then Algorithm~\ref{algo: one} is $(\varepsilon, \delta)$-DP for some constant $\delta \in (0,1]$.
\end{theorem}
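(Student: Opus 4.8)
The plan is to assemble the global $(\varepsilon,\delta)$ guarantee from three ingredients: a per-iteration Gaussian mechanism, privacy amplification from the combined randomness of activation and data sampling, and composition across iterations and agents, closing with a post-processing argument that lifts the guarantee from the raw noisy gradients to the transmitted messages.

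First I would treat the quantity released by an active agent at round $t$ as the sparsified noisy gradient $[g_{i,t}+\theta_{i,t}]_{c_i^t}=[g_{i,t}]_{c_i^t}+[\theta_{i,t}]_{c_i^t}$, whose $\ell_2$-sensitivity is $\Delta\le 2\sqrt{k}G/\sqrt{d}$ by~\eqref{eq: sensivity}. Applying the Gaussian Mechanism (Lemma~\ref{lem: GM}) with standard deviation $\sigma$ yields a base per-round guarantee $(\hat\varepsilon,\delta_0)$ with $\hat\varepsilon=\sqrt{2\log(1.25/\delta_0)}\,\Delta/\sigma$. A short calculation shows the two hypotheses are exactly what make the later lemmas applicable: substituting the noise bound~\eqref{eq: noise} into $\hat\varepsilon$ and then invoking $T\ge q^2\varepsilon^2/(4p^2)$ forces $\hat\varepsilon\le 1$, the regime in which the amplification and composition bounds are tight.

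Next I would quantify the amplification. At each round the record that distinguishes the neighboring datasets influences the output only when its host agent is activated (probability $p$) \emph{and} that record is the one drawn from the local set of size $q$ (probability $1/q$); the two independent coins combine into an effective subsampling rate $p/q$. Feeding the base mechanism into the subsampling-amplification lemma (one of Lemmas~4--6 in Appendix~A) turns $(\hat\varepsilon,\delta_0)$ into $\big(\tfrac{2p}{q}\hat\varepsilon,\ \tfrac{p}{q}\delta_0\big)$ per round, using $e^{\hat\varepsilon}-1\le 2\hat\varepsilon$ for $\hat\varepsilon\le 1$. Because the neighboring datasets differ in a single agent's record, parallel composition across agents means only that agent's releases contribute, so there is no factor-$n$ blow-up; advanced (adaptive) composition over the $T$ rounds then gives a total loss of order $\sqrt{2T\log(1/\delta')}\cdot\tfrac{2p}{q}\hat\varepsilon$ with slack parameter $\delta'$. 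Requiring this to be at most $\varepsilon$, substituting $\hat\varepsilon$ and $\Delta^2=4kG^2/d$, and fixing $\delta'$ to a constant reproduces the threshold $\sigma^2\ge 160kp^2T\log(1.25/\delta_0)G^2/(q^2d\varepsilon^2)$, the numerical constant absorbing the Gaussian-mechanism factor $2\log(1.25/\delta_0)$, the factor-two amplification bound, the leading composition term, and the chosen $\delta'$; the accumulated $\delta$ is then some constant in $(0,1]$.

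Finally, the messages $s_{i,t}$, the local states $x_{i,t}$, the replicas $\hat x_{i,t}$, and the momenta $m_{i,t}$ are deterministic functions of the activated, sparsified noisy gradients produced up to time $t$, so the post-processing invariance of DP extends the guarantee from the noisy-gradient sequence to the entire observable transcript of Algorithm~\ref{algo: one}. The main obstacle I anticipate is the amplification step: justifying that the activation coin may be folded into the subsampling rate (producing the $p^2$ rather than $p$ dependence in~\eqref{eq: noise}) and that the momentum recursion does not re-expose a gradient, so that releasing each $g_{i,t}+\theta_{i,t}$ once and composing only $T$ times is legitimate even though each gradient propagates through all later momenta. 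Carrying the constants through the chained amplification and composition bounds so that they collapse to exactly $160$ is the other delicate bookkeeping point.
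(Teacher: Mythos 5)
Your proposal follows essentially the same route as the paper's proof: a per-round Gaussian mechanism on the sparsified gradient with sensitivity $2\sqrt{k}G/\sqrt{d}$, subsampling amplification at effective rate $p/q$ (combining activation and data sampling, with the linearization $e^{\varepsilon_t}-1\le 2\varepsilon_t$ giving the factor $2p\varepsilon_t/q$), advanced composition over the $T$ rounds, and post-processing to cover the transmitted messages $s_{i,t}$. The only difference is bookkeeping: the paper's hypotheses actually force $\varepsilon_t^2\le 1/5$ (not merely $\varepsilon_t\le 1$), which is what guarantees $\varepsilon_t'\le 0.9$ as required by the composition lemma and makes the final bound $\sqrt{3/5}\,\varepsilon+\varepsilon^2/5\le\varepsilon$ close, exactly the constant-tracking you flagged as the delicate point.
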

\begin{proof}
	%The proof is provided in Appendix~\ref{app: thm_dp}.
%\textbf{DP at each local computation:}
We first analyze the DP at each local iteration.
The noise added to each coordinate of $[g_{i,t}]_{c_{i}^{t}} = [\nabla l_{i}(x_{i,t}, \zeta_{i,t})]$ is drawn from the Gaussian distribution $\mathcal{N}(0, \sigma^{2})$. Then based on~\eqref{eq: sensivity} and Lemma~\ref{lem: GM}, $[g_{i,t}]_{c_{i}^{t}} + [\theta_{i,t}]_{c_{i}^{t}}$, denoted as $\mathcal{M}_{t}$, achieves $(\varepsilon_{t}, \delta_{0})$-DP with 
\begin{equation*}
	\varepsilon_{t} = \frac{2\sqrt{2k \log (1.25/\delta_{0})}G}{\sigma  \sqrt{d}}
\end{equation*} 
for any $\delta_{0} \in [0,1]$. Due to the conditions on $\sigma$ and $T$, we obtain
\begin{equation} \label{eq: eps_t}
	\varepsilon_{t}^{2} = \frac{8k\log(1.25/\delta_{0})C^{2}}{\sigma^{2}d} \leq \frac{q^{2}\varepsilon^{2}}{20p^{2} T} \leq \frac{1}{5} .
\end{equation}
Recall from Algorithm~\ref{algo: one} that, at each iteration, each agent is randomly sampled with probability $p$, and each active agent randomly selects a data sample from $q$ instances to compute stochastic gradients.
Denote $\mathcal{A}_{t}$ the composition of $\mathcal{M}_{t}$ and the subsampling (including random activation and sparsification) procedure. Upon using Lemma~4 in Appendix~A and~\eqref{eq: eps_t}, we obtain that $\mathcal{A}_{t}$ is $(\varepsilon_{t}^{\prime}, p\delta_{0}/q)$-DP with
\begin{equation*}
	\varepsilon_{t}^{\prime} = \frac{2 p \varepsilon_{t}}{q} \geq \frac{p(e^{\varepsilon_{t}}-1)}{q} \geq \ln\left(1+ \frac{p(e^{\varepsilon_{t}}-1)}{q} \right).
\end{equation*} 
Additionally, because of~\eqref{eq: eps_t}, we get $\varepsilon_{t}^{\prime} = 2 p \varepsilon_{t}/q \leq 2 \varepsilon_{t} \leq 0.9$ and 
\begin{equation*}
	\sum_{t=1}^{T} \varepsilon_{t}^{\prime 2} \leq \frac{4 p^{2}}{q^{2}} \sum_{t=1}^{T} \frac{8k\log(1.25/\delta_{0})C^{2}}{\sigma^{2}d} = \frac{1}{5} \sum_{t=1}^{T}\frac{\varepsilon^{2}}{T} \leq 1.
\end{equation*}

%\textbf{DP after $T$ iterations:}
Then, after $T$ iterations, we consider the composition of $\mathcal{A}_{1}, \dots, \mathcal{A}_{T}$, denoted by $\mathcal{A}$. 
Based on the advanced composition rule for DP in Lemma~5, we obtain $\mathcal{A}$ is $(\tilde{\varepsilon}, \tilde{\delta})$-DP with
\begin{equation*}
	\tilde{\varepsilon} = \sqrt{\sum_{t=1}^{T}2\varepsilon_{t}^{ \prime 2} \log \left( e + \frac{\sqrt{\sum_{t=1}^{T} \varepsilon_{t}^{\prime 2}}}{\delta^{\prime}} \right) } + \sum_{t=1}^{T} \varepsilon_{t}^{ \prime 2}
\end{equation*}
and $\tilde{\delta} = 1-(1-\delta^{\prime})(1-p \delta_{0})^{T}$ for any $\delta^{\prime} \in (0,1]$. By setting $\delta^{\prime} = \sqrt{\sum_{t=1}^ {T} \varepsilon_{t}^{\prime 2}}$, there holds
\begin{align*}
	\tilde{\varepsilon} =& \sqrt{\sum_{t=1}^{ T}2\varepsilon_{t}^{\prime 2} \log \left( e + 1 \right) } + \frac{1}{5}\varepsilon^{2} \\
	\leq & \sqrt{3\sum_{t=1}^{T}\varepsilon_{t}^{\prime 2}} + \frac{1}{5} \varepsilon = \sqrt{\frac{3}{5}\varepsilon^{2}} + \frac{1}{5}\varepsilon \leq  \varepsilon,
\end{align*}
By setting $\delta =\tilde{\delta}$, we demonstrate that $\mathcal{A}$ is $(\varepsilon, \delta)$-DP.

%\textbf{DP after postprocessing:} 
The transmitted messages $\{ s_{i,t}, i \in [n], t \in [T] \}$ are computed based on the output of $\mathcal{A}$, i.e., perturbed gradients. By the post-processing property of DP in Lemma~6, Algorithm~\ref{algo: one} also satisfies $(\varepsilon, \delta)$-DP specified in Definition~\ref{defn: DP}.
\end{proof}
%It can be verified from the proof of Theorem~\ref{thm: privacy} that Algorithm~\ref{algo: one} is $(\hat{\varepsilon}, \delta)$-DP after $t \leq T$ iterations with $\hat{\varepsilon} \leq \sqrt{\frac{3t}{5T}}\varepsilon + \frac{t}{5T}\varepsilon^{2}$.
\begin{remark}
	In traditional differential private distributed optimization algorithms, the required Gaussian noise variance typically is $\Omega \left( \frac{T\log(1.25/\delta_{0})G^{2}}{q^{2}\varepsilon^{2}} \right)$~\cite{huang2019dp}. We reduced the required noise to $\Omega \left( \frac{k p^{2} T\log(1.25/\delta_{0})G^{2}}{q^{2}d\varepsilon^{2}} \right)$ with a ratio $\frac{kp^{2}}{d}$, where $k/d$ is from the $\text{\ttfamily Top}_{k}$ sparsification and $p^{2}$ is from the random activation. This result indicates that the required noise can be further reduced without compromising the DP guarantee by decreasing the number of selected coordinates, $k$, and activation probability $p$, which is consistent with our intuition since fewer communicated messages lower the risk of privacy leakage. 
\end{remark}

\begin{remark}
   In contrast to the findings in Xie et al.~\cite{xie2023compressed, xie2023differentially}, which do not demonstrate any impact of compressed communication on privacy preservation and optimization accuracy, we employ random activation for subsampling and use sparsification to decrease the sensitivity of the algorithm. 
   Leveraging these characteristics of efficient communication, the proposed algorithm reduces the noise intensity for certain privacy levels and better balances privacy and accuracy.
\end{remark}

\section{Convergence Analysis} \label{sec: convergence}
In this section, we will provide the convergence analysis of Algorithm~\ref{algo: one}. 
%We first introduce the property of the $\text{\ttfamily Top}_{k}$ sparsification.
We first bound the momentum variable for agent $i$ at each iteration.
\begin{lemma} \label{lem: m}
	Under Assumptions~\ref{assum: f} and~\ref{assum: bounded}, for the sequence $\{m_{i,t}, i \in [n], t \geq 1 \}$ generated by Algorithm~\ref{algo: one}, we have that
	\begin{equation*}
		\mathbb{E}\left[ \| m_{i,t} \|^{2} \right] \leq \frac{ p( G^{2} + \sigma^{2}d)}{(1-\beta)^{2}}, \ i \in [n], \ t \geq 1.
	\end{equation*}
\end{lemma}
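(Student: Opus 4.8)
The plan is to unroll the momentum recursion~\eqref{eq: m_update} and then control the resulting weighted sum by convexity, so that no cross-time correlations ever have to be estimated. Since $m_{i,0}=\mathbf{0}$, iterating $m_{i,t+1}=\eta_{i,t}(g_{i,t}+\theta_{i,t})+\beta m_{i,t}$ produces the closed form
\[
m_{i,t}=\sum_{s=0}^{t-1}\beta^{\,t-1-s}\,\eta_{i,s}\bigl(g_{i,s}+\theta_{i,s}\bigr),
\]
which exhibits $m_{i,t}$ as a geometrically weighted combination of the per-step update vectors $u_{i,s}:=\eta_{i,s}(g_{i,s}+\theta_{i,s})$.

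Next I would introduce the weights $w_s=\beta^{\,t-1-s}$ with partial sum $W=\sum_{s=0}^{t-1}w_s=\tfrac{1-\beta^{t}}{1-\beta}$, and apply Jensen's inequality to the convex map $\|\cdot\|^2$ against the probability weights $w_s/W$. This yields the pointwise bound $\|m_{i,t}\|^2\le W\sum_{s=0}^{t-1}w_s\|u_{i,s}\|^2$. The value of this step is that it linearizes the square across time, so that after taking expectations one only faces single-index second moments $\mathbb{E}[\|u_{i,s}\|^2]$ and never has to reason about the correlation between $g_{i,s}$ and $g_{i,s'}$ for $s\ne s'$.

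It then remains to evaluate $\mathbb{E}[\|u_{i,s}\|^2]$. Because $\eta_{i,s}\in\{0,1\}$ we have $\eta_{i,s}^2=\eta_{i,s}$, and $\eta_{i,s}$ is drawn independently of the fresh sample $\zeta_{i,s}$ and of the noise $\theta_{i,s}$, so $\mathbb{E}[\|u_{i,s}\|^2]=p\,\mathbb{E}[\|g_{i,s}+\theta_{i,s}\|^2]$. Since $\theta_{i,s}$ is zero-mean and independent of $g_{i,s}$, the cross term vanishes, giving $\mathbb{E}[\|g_{i,s}+\theta_{i,s}\|^2]=\mathbb{E}[\|g_{i,s}\|^2]+\sigma^2 d\le G^2+\sigma^2 d$, where the gradient bound follows from Assumption~\ref{assum: bounded} via the tower rule (condition on $x_{i,s}$, then average over the state). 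Substituting $\mathbb{E}[\|u_{i,s}\|^2]\le p(G^2+\sigma^2 d)$ and summing the geometric series gives $\mathbb{E}[\|m_{i,t}\|^2]\le W^2\,p(G^2+\sigma^2 d)$, and bounding $1-\beta^{t}\le 1$ (so $W\le (1-\beta)^{-1}$) produces the claimed $\tfrac{p(G^2+\sigma^2 d)}{(1-\beta)^2}$.

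The only genuinely delicate point is exactly the one the convexity step is designed to sidestep: a direct expansion of $\|\sum_s w_s u_{i,s}\|^2$ would leave cross terms $\mathbb{E}[\langle g_{i,s},g_{i,s'}\rangle]$ that cannot simply be dropped, since consecutive stochastic gradients are coupled through the evolving iterate $x_{i,s}$. Taking the Jensen route rather than an $\|a+b\|^2\le(1+c)\|a\|^2+(1+1/c)\|b\|^2$ recursion both removes any need to track these correlations and keeps the constant exactly $(1-\beta)^{-2}$.
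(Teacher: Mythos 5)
Your proposal is correct and follows essentially the same route as the paper's own proof: unroll the momentum recursion, apply Jensen's inequality to the geometrically weighted sum (the paper's normalizer $v_t$ is exactly your $W$), and bound the per-step second moment by $p(G^2+\sigma^2 d)$ using $\eta_{i,s}^2=\eta_{i,s}$, independence, and Assumption~\ref{assum: bounded}. Your write-up in fact spells out the per-step moment computation that the paper's appendix leaves implicit, but the argument and the constant are identical.
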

\begin{proof}
	The proof is provided in Appendix~B.
\end{proof}

Then, we can derive the upper bound of the mean square distance between the local optimization variables $x_{i,t}, i \in [n]$ and their average $\bar{x}_{t} = \frac{1}{n}\sum_{i=1}^{n} x_{i,t}$.
\begin{lemma} \label{lem: x}
Under Assumptions~\ref{assum: f}--\ref{assum: W}, for the sequence $\{ x_{i,t}, i \in [n], t \in [T]\}$ generated by Algorithm~\ref{algo: one}, we have that for all $t \geq 1$
\begin{align*}
	\sum_{i=1}^{n}\mathbb{E}\left[ \|\bar{x}_{t} - x_{i,t} \|^{2} \right] \leq \frac{8 p \alpha^{2}(G^{2}+ \sigma^{2}d)n}{c^{2}(1-\beta)^{2}}
 %\frac{4\alpha^{2}(G^{2} + \sigma^{2}d)n}{(1-\beta)^{2}}\left(1 + \frac{4}{c^{2}} \right),
\end{align*}
where $c = \frac{\rho^{2}pk}{82d}$.
\end{lemma}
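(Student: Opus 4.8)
The plan is to adapt the consensus-contraction analysis for decentralized methods with compressed communication (of CHOCO-SGD type) to our setting with random activation, momentum, and injected noise. First I would stack the per-agent quantities into matrices $X_t,\hat{X}_t,M_t\in\mathbb{R}^{n\times d}$ whose $i$-th rows hold $x_{i,t},\hat{x}_{i,t},m_{i,t}$, and write the updates~\eqref{eq: algo_3} as
\begin{equation*}
X_{t+1} = X_t - \alpha D_t M_{t+1} + \gamma(W-I)\hat{X}_t, \qquad \hat{X}_{t+1} = \hat{X}_t + D_t S_t,
\end{equation*}
where $D_t=\mathrm{diag}(\eta_{1,t},\dots,\eta_{n,t})$ and $S_t$ stacks the sparsified increments $s_{i,t}$. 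Using the averaging projector $J=\frac{1}{n}\mathbf{1}\mathbf{1}^\top$ and Assumption~\ref{assum: W} (which gives $J(W-I)=0$ and $(W-I)J=0$), I would project onto the disagreement subspace to obtain
\begin{equation*}
(I-J)X_{t+1} = (I-J)X_t + \gamma(W-I)(I-J)\hat{X}_t - \alpha(I-J)D_t M_{t+1},
\end{equation*}
which makes explicit the coupling between the consensus error $(I-J)X_t$ and the public-copy error $X_t-\hat{X}_t$.

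Next I would track the two errors jointly. The key observation for the copy error is that taking expectation over the activations $\eta_{i,t}$ turns the $\mathrm{Top}_k$ contraction $\|x-\mathcal{S}(x)\|^2\le(1-k/d)\|x\|^2$ into an \emph{effective} compression factor $1-pk/d$, because an idle agent leaves its public copy unchanged; this is precisely the quantity that produces the $pk/d$ appearing in $c$. I would then introduce the Lyapunov function
\begin{equation*}
\Phi_t = \mathbb{E}\big[\|(I-J)X_t\|_F^2\big] + \omega\,\mathbb{E}\big[\|X_t-\hat{X}_t\|_F^2\big]
\end{equation*}
for a weight $\omega>0$ to be tuned, derive one-step recursions for each term via Young's inequality to split the cross terms, and add them. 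The driving term is the momentum contribution $\alpha(I-J)D_t M_{t+1}$, which I would bound in expectation using Lemma~\ref{lem: m} together with $\eta_{i,t}\le 1$, giving a per-step injection of order $A:=\alpha^2 p(G^2+\sigma^2 d)n/(1-\beta)^2$.

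The technical heart is choosing the gossip step size $\gamma$ and the weight $\omega$ so that the coupled system contracts as $\Phi_{t+1}\le(1-c)\Phi_t + B$ with $c=\rho^2 pk/(82d)$; the constant $82$ arises from the particular Young's-inequality splits required to dominate the consensus/copy cross terms together with the gossip mixing governed by $\rho$, exactly as in the compressed-gossip literature. The same split that extracts the rate $c$ from the dominant term forces the momentum perturbation to be weighted by $c^{-1}$, so $B$ is of order $A/c$; unrolling the scalar recursion and summing the geometric series of ratio $1-c$ contributes a further $c^{-1}$, yielding the steady-state bound $\Phi_t=O(A/c^2)$ after checking the base case. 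Since $\sum_{i=1}^{n}\mathbb{E}\big[\|\bar{x}_t-x_{i,t}\|^2\big]=\mathbb{E}\big[\|(I-J)X_t\|_F^2\big]\le\Phi_t$, this gives the claimed bound with the explicit constant. I expect the main obstacle to be this joint contraction step, since it requires simultaneously controlling the $\mathrm{Top}_k$ residual, the $\rho$-governed gossip mixing, and the activation-averaged expectations, and tuning $\gamma$ and $\omega$ so that the whole coupled system collapses to a single rate $1-c$ with the precise constants.
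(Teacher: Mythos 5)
Your proposal is correct and follows essentially the same route as the paper's proof: the paper likewise stacks the iterates into matrices, tracks the joint quantity $\mathbb{E}[\|X_t-\bar{X}_t\|_F^2]+\mathbb{E}[\|X_t-\hat{X}_t\|_F^2]$ (your $\Phi_t$ with $\omega=1$), exploits the activation-averaged $\text{\ttfamily Top}_k$ factor $1-pk/d$, tunes $\gamma$ via Young splits in the style of Koloskova et al.\ to obtain the contraction rate $c=\rho^2 pk/(82d)$, and feeds in Lemma~\ref{lem: m} as the driving term before unrolling the geometric recursion to get the $O(A/c^2)$ bound. The only cosmetic differences are your use of the projector $I-J$ and a free weight $\omega$, versus the paper's explicit auxiliary variable $O_t$ and unit weight.
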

\begin{proof}
	The proof is provided in Appendix~C.
\end{proof}

According to Lemmas~\ref{lem: m} and~\ref{lem: x}, we have the following convergence results for first-order stationary points.
\begin{theorem} \label{thm: convergence}
	Under Assumptions~\ref{assum: f}--\ref{assum: W}, if we choose $\gamma = \frac{\rho p k}{d(16\rho + \rho^{2} + 4 \phi^{2} + 2\rho \phi^{2}) - 8\rho p k}$ and $\alpha < \frac{(1-\beta)^{2}}{2L}$ for given $0< \beta < 1$, we have
	\begin{align*}
	 & \frac{1}{T} \sum_{t=0}^{T-1} \mathbb{E}\left[ \left\| \nabla f(\bar{x}_{t}) \right\|^{2} \right] \\
\leq & \frac{2(1-\beta)(f(x_{0}) - f^{*})}{\alpha p T} \\
&+ \frac{\alpha L(\beta + 2 + 4 \beta^{2}p)(G^{2} + \varsigma^{2} + \sigma^{2}d)}{n(1-\beta)^{3}} \\
& + \frac{8\alpha^{2}p(G^{2} + \sigma^{2}d)L^{2}}{(1-\beta)^{2}c^{2}}.
\end{align*}
% where 
% \begin{align*}
% 	% A =& \frac{\alpha \beta^{2}L}{(1-\beta)^{2}n} 
% 	% +  \frac{4\alpha^{2}L^{2}}{(1-\beta)^{2}}\left(1+\frac{4}{c^{2}} \right) \\
% 	% &+ \frac{2 \alpha L}{n(1-\beta)}  
% 	% + \frac{4\alpha \beta^{2}(1-p)L}{n(1-\beta)^{3}}.
%  B = 1+ \frac{\beta}{2(1-\beta)} + \frac{2(1-\beta) p n}{c^{2}} + \frac{2\beta^{2}p}{(1-\beta)^{2}}.
% \end{align*}
\end{theorem}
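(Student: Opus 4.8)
The plan is to run a descent-lemma argument on a virtual (auxiliary) averaged iterate that absorbs the momentum, and then convert the resulting telescoping inequality into a bound on the averaged gradient norm. First I would pass to the network average. Write $\bar{x}_t=\frac{1}{n}\sum_i x_{i,t}$ and introduce $\bar{m}_t=\frac{1}{n}\sum_i m_{i,t}$. Averaging \eqref{eq: x_update} and using that $W$ is symmetric and doubly stochastic (Assumption~\ref{assum: W}), the consensus term $\gamma\sum_{i,j}w_{ij}(\hat{x}_{j,t}-\hat{x}_{i,t})$ telescopes to zero, so the $\hat{x}$-copies and the sparsifier drop out of the mean dynamics, leaving $\bar{x}_{t+1}=\bar{x}_t-\frac{\alpha}{n}\sum_i\eta_{i,t}m_{i,t+1}$. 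Using $\eta_{i,t}^2=\eta_{i,t}$ together with \eqref{eq: m_update}, I would rewrite this as $\bar{x}_{t+1}=\bar{x}_t-\alpha\bar{m}_{t+1}+E_t$ with the activation-correction term $E_t:=\frac{\alpha\beta}{n}\sum_i(1-\eta_{i,t})m_{i,t}$. I then define the virtual iterate $z_t:=\bar{x}_t-\frac{\alpha\beta}{1-\beta}\bar{m}_t$; a short computation using $\bar{m}_{t+1}=\bar{g}^{\eta}_t+\beta\bar{m}_t$, where $\bar{g}^{\eta}_t:=\frac{1}{n}\sum_i\eta_{i,t}(g_{i,t}+\theta_{i,t})$, gives the clean recursion $z_{t+1}=z_t-\frac{\alpha}{1-\beta}\bar{g}^{\eta}_t+E_t$. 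Since $m_{i,0}=0$, we have $z_0=\bar{x}_0=x_0$.

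Next I would apply $L$-smoothness of $f$ to the virtual iterate, $f(z_{t+1})\le f(z_t)+\langle\nabla f(z_t),z_{t+1}-z_t\rangle+\frac{L}{2}\|z_{t+1}-z_t\|^2$, and take the expectation conditioned on the history $\mathcal{F}_t$ preceding the draws of $\eta_{i,t},\zeta_{i,t},\theta_{i,t}$. Using independence and $\mathbb{E}[\eta_{i,t}]=p$, $\mathbb{E}[g_{i,t}\mid\mathcal{F}_t]=\nabla f_i(x_{i,t})$, $\mathbb{E}[\theta_{i,t}]=0$, the linear term yields $-\frac{\alpha p}{1-\beta}\langle\nabla f(z_t),\frac{1}{n}\sum_i\nabla f_i(x_{i,t})\rangle$ plus a residual $\alpha\beta(1-p)\langle\nabla f(z_t),\bar{m}_t\rangle$ coming from $E_t$. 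The target $-\tfrac12\|\nabla f(\bar{x}_t)\|^2$ is extracted from the first inner product via $\langle a,b\rangle=\tfrac12(\|a\|^2+\|b\|^2-\|a-b\|^2)$, after replacing $\nabla f(z_t)$ by $\nabla f(\bar{x}_t)$ (the gap controlled by $\|z_t-\bar{x}_t\|=\frac{\alpha\beta}{1-\beta}\|\bar{m}_t\|$ and Lemma~\ref{lem: m}) and replacing $\frac{1}{n}\sum_i\nabla f_i(x_{i,t})$ by $\nabla f(\bar{x}_t)$ (the gap controlled by the consensus error $\sum_i\|\bar{x}_t-x_{i,t}\|^2$ through $L$-smoothness and Lemma~\ref{lem: x}). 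For the quadratic term I would expand $\|z_{t+1}-z_t\|^2$ and bound its conditional expectation using the bounded-gradient Assumption~\ref{assum: bounded} ($\mathbb{E}\|g_{i,t}\|^2\le G^2$), the variance bound $\varsigma^2$ from Assumption~\ref{assum: f}, the injected-noise variance $\sigma^2 d$, and Lemma~\ref{lem: m} for the momentum appearing in $E_t$; the bookkeeping of the activation indicators is what produces the factor $(\beta+2+4\beta^2 p)$ and the aggregate $(G^2+\varsigma^2+\sigma^2 d)$.

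Finally, I would sum the per-step inequality over $t=0,\dots,T-1$, telescope $\sum_t(\mathbb{E} f(z_t)-\mathbb{E} f(z_{t+1}))\le f(x_0)-f^*$, and divide through by the coefficient $\frac{\alpha p}{2(1-\beta)}$ multiplying $\frac{1}{T}\sum_t\mathbb{E}\|\nabla f(\bar{x}_t)\|^2$. The step-size restriction $\alpha<\frac{(1-\beta)^2}{2L}$ is exactly what keeps this coefficient positive once the $\frac{L}{2}\|z_{t+1}-z_t\|^2$ contribution and the $\|a-b\|^2$ gap terms are moved to the right-hand side. The consensus-error contribution, together with the specific $\gamma$ that fixes $c=\frac{\rho^2 pk}{82d}$ in Lemma~\ref{lem: x}, then produces the third term $\frac{8\alpha^2 p(G^2+\sigma^2 d)L^2}{(1-\beta)^2 c^2}$, while the quadratic/variance contribution gives the second, and the telescoped potential gap gives the first. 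I expect the main obstacle to be the momentum-plus-activation bookkeeping: the correction term $E_t$ neither vanishes in conditional expectation nor is a martingale increment, so the residual $\alpha\beta(1-p)\bar{m}_t$ must be absorbed via Lemma~\ref{lem: m} without degrading the leading $\frac{1}{\alpha p T}$ rate, and the cross terms between the activation indicators and the momentum inside $\|z_{t+1}-z_t\|^2$ must be tracked precisely to recover the stated constants rather than merely order-optimal ones.
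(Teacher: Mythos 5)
Your overall architecture coincides with the paper's (Appendix~D): pass to the network average, introduce a momentum-absorbing virtual iterate, apply $L$-smoothness at that iterate, control $\|z_t-\bar{x}_t\|^2$ and $\|z_{t+1}-z_t\|^2$ via Lemma~\ref{lem: m} (the paper's Lemmas~\ref{lem: zx} and~\ref{lem: z_update}), invoke Lemma~\ref{lem: x} for the consensus error, and telescope with the same role for the step-size restriction. However, your choice of virtual iterate differs from the paper's in a way that breaks the argument. The paper sets $z_t=\frac{1}{1-\beta}\bar{x}_t-\frac{\beta}{1-\beta}\bar{x}_{t-1}$, which equals $\bar{x}_t-\frac{\alpha\beta}{(1-\beta)n}\sum_{i}\eta_{i,t-1}m_{i,t}$, i.e., it is built from the activation-\emph{gated} momentum; the resulting increment is $z_{t+1}-z_t=-\frac{\alpha}{n(1-\beta)}\sum_i\eta_{i,t}(g_{i,t}+\theta_{i,t})-\frac{\alpha\beta}{n(1-\beta)}\sum_i(\eta_{i,t}-\eta_{i,t-1})m_{i,t}$, whose correction involves the difference $\eta_{i,t}-\eta_{i,t-1}$ of two identically distributed indicators. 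The paper treats this correction as zero-mean (Appendix~F), so it drops out of the linear (inner-product) term entirely and enters only the quadratic term at order $\alpha^2$ --- this is exactly the source of the $4\beta^2 p$ piece of the constant $(\beta+2+4\beta^2 p)$. Your iterate $z_t=\bar{x}_t-\frac{\alpha\beta}{1-\beta}\bar{m}_t$ uses the \emph{ungated} momentum, and your correction $E_t=\frac{\alpha\beta}{n}\sum_i(1-\eta_{i,t})m_{i,t}$ has conditional mean $\alpha\beta(1-p)\bar{m}_t\neq 0$, as you yourself note.

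This bias is not a bookkeeping nuisance; it is fatal to your route whenever $p<1$. The negative term you extract carries the coefficient $\frac{\alpha p}{2(1-\beta)}$, i.e., order $\alpha$, and the bias $\alpha\beta(1-p)\,\mathbb{E}\langle\nabla f(z_t),\bar{m}_t\rangle$ is also order $\alpha$. Lemma~\ref{lem: m} supplies only the moment bound $\mathbb{E}\|\bar{m}_t\|^2\le p(G^2+\sigma^2 d)/(1-\beta)^2$, a constant independent of $\alpha$ and $T$, and says nothing about the correlation between $\bar{m}_t$ and $\nabla f(z_t)$. Consequently, any Cauchy--Schwarz or Young split of the bias either inflates the $\|\nabla f\|^2$ coefficient beyond what the negative term can absorb, or leaves a per-step remainder of size $\Theta(\alpha)\cdot(G^2+\sigma^2 d)$; after summing over $T$ steps and dividing by $\frac{\alpha p T}{2(1-\beta)}$, the latter produces a non-vanishing additive constant of order $\beta^2(1-p)^2(G^2+\sigma^2 d)/p$ in the final bound. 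No such term appears in Theorem~\ref{thm: convergence}, all of whose terms vanish as $T\to\infty$ with tuned $\alpha$, so your argument cannot recover the stated result (it reduces to the paper's only when $p=1$, where $E_t\equiv 0$). The repair is precisely the paper's definition of the virtual iterate through iterate differences, so that the activation correction appears as $(\eta_{i,t}-\eta_{i,t-1})m_{i,t}$ and is handled as a zero-mean, $O(\alpha^2)$ perturbation rather than an $O(\alpha)$ bias.
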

\begin{proof}
	The proof is provided in Appendix~D.
\end{proof}

\begin{corollary} \label{cor: optimal}
Under Assumptions~\ref{assum: f}--\ref{assum: W}, if we choose $\gamma = \frac{\rho p k}{d(16\rho + \rho^{2} + 4 \phi^{2} + 2\rho \phi^{2}) - 8\rho p k}$ for given $0< \beta < 1$, there exists a constant step size $\alpha$ such that 
\begin{equation} \label{eq: con_1}
\begin{aligned}
     & \frac{1}{T} \sum_{t=0}^{T-1} \mathbb{E}\left[ \left\| \nabla f(\bar{x}_{t}) \right\|^{2} \right] \\
\leq & O \left(   \left( \frac{(1+p)(1+\sigma^{2}d)}{npT}  \right)^{\frac{1}{2}} +  
\left(   \frac{\sqrt{p(1+\sigma^{2}d)}}{cpT}  \right)^{\frac{2}{3}} + \frac{1}{pT}
\right).
\end{aligned}
\end{equation} \label{eq: con_2}
Furthermore, if we choose $\alpha = \sqrt{\frac{n}{T}}$ for $T \geq \frac{4nL^{2}}{(1-\beta)^{4}}$, there is
\begin{equation}
\begin{aligned}
    & \frac{1}{T} \sum_{t=0}^{T-1} \mathbb{E}\left[ \left\| \nabla f(\bar{x}_{t}) \right\|^{2} \right] \\
\leq & \frac{2(f(x_{0}) - f^{*})(1-\beta)}{p\sqrt{nT}} + \frac{ L(\beta + 2 + 4 \beta^{2}p)(G^{2} + \varsigma^{2} + \sigma^{2}d) }{(1-\beta)^{3}\sqrt{nT}} \\
&+ \frac{8np(G^{2} + \sigma^{2}d)L^{2}}{(1-\beta)^{2}c^{2}T}.
\end{aligned}
\end{equation}
% choose the local learning rate $\alpha = \Theta(\sqrt{\frac{n}{BT}})$ that satisfies the condition in Theorem~\ref{thm: convergence} and $T = O\left( \frac{q\varepsilon^{2}}{p^{2}k \log(1.25/\delta_{0})} \right)$. Then we have
% \begin{align} 
%      & \frac{1}{T} \sum_{t=0}^{T-1} \mathbb{E}\left[ \left\| \nabla f(\bar{x}_{t}) \right\|^{2} \right] \nonumber \\
% \leq & \frac{2(1-\beta)(f(x_{0}) - f^{*})}{p\sqrt{nT}} + \frac{2LB(\varsigma^{2} + G^{2} + \sigma^{2}d)}{(1-\beta)\sqrt{nT}}.
% \end{align}
% Using $T = O\left( \frac{q\varepsilon^{2}}{p^{2}k \log(1.25/\delta_{0})} \right)$, the iterates of Algorithm~\ref{algo: one} satisfy
% \begin{align} \label{eq: optimal_bound}
%     & \frac{1}{T} \sum_{t=0}^{T-1} \mathbb{E}\left[ \left\| \nabla f(\bar{x}_{t}) \right\|^{2} \right] \nonumber \\
% \leq & O\left(  \frac{ (f(x_{0}) - f^{*} + pLB(\varsigma^{2} + G^{2})  ) \sqrt{k \log(1.25/\delta_{0})} }{\sqrt{n}q\varepsilon}  \right).
% \end{align}

% if we choose $\gamma = \frac{\rho p k}{d(16\rho + \rho^{2} + 4 \phi^{2} + 2\rho \phi^{2}) - 8\rho p k}$, $\alpha < \frac{(1-\beta)^{2}}{2L}$ for given $0< \beta < 1$, and use $T = \mathcal{O}\left( \frac{(\varsigma^{2} + G^{2})q^{2} \varepsilon^{2}}{20kp^{2}\log(1.25/\delta_{0})G^{2}} \right)$, we have
% \begin{align} \label{eq: optimal_bound}
% 	&\frac{1}{T} \sum_{t=0}^{T-1} \mathbb{E}\left[ \left\| \nabla f(\bar{x}_{t}) \right\|^{2} \right] \nonumber \\
% \leq & O\left( \frac{(1-\beta) (f(x_{0}) - f^{*})kp\log(1/\delta_{0})G^{2}}{\alpha  \varepsilon^{2}} \right).
% \end{align}
\end{corollary}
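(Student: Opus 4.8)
The plan is to regard the right-hand side of Theorem~\ref{thm: convergence} as a single-variable function of the SGD step size $\alpha$ and to minimize it over the admissible range. I would first rewrite that bound in the compact form
\begin{equation*}
\frac{1}{T}\sum_{t=0}^{T-1}\mathbb{E}\left[\|\nabla f(\bar{x}_{t})\|^{2}\right] \le \frac{A}{\alpha p T} + B\alpha + C\alpha^{2},
\end{equation*}
where the $\alpha$-independent nonnegative constants are
\begin{align*}
A &= 2(1-\beta)(f(x_{0})-f^{*}), \\
B &= \frac{L(\beta+2+4\beta^{2}p)(G^{2}+\varsigma^{2}+\sigma^{2}d)}{n(1-\beta)^{3}}, \\
C &= \frac{8p(G^{2}+\sigma^{2}d)L^{2}}{(1-\beta)^{2}c^{2}}.
\end{align*}
The first term decreases in $\alpha$ while the other two grow, so the sum is controlled by balancing the $1/\alpha$ term against the linear and quadratic terms over the feasible range $\alpha\in(0,\alpha_{\max}]$ with $\alpha_{\max}=\frac{(1-\beta)^{2}}{2L}$, which is exactly the step-size restriction already imposed in Theorem~\ref{thm: convergence}.

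Next I would invoke the standard step-size tuning lemma for nonconvex SGD: for nonnegative $r_{0},b,e$, the choice $\alpha=\min\{\alpha_{\max},(r_{0}/(bT))^{1/2},(r_{0}/(eT))^{1/3}\}$ yields
\begin{equation*}
\frac{r_{0}}{\alpha T} + b\alpha + e\alpha^{2} \le 2\left(\frac{b r_{0}}{T}\right)^{1/2} + 2 e^{1/3}\left(\frac{r_{0}}{T}\right)^{2/3} + \frac{r_{0}}{\alpha_{\max}T}.
\end{equation*}
Applying this with $r_{0}=A/p$, $b=B$, $e=C$ produces precisely three terms. Absorbing $L$, $(1-\beta)$, $f(x_{0})-f^{*}$, $G$, $\varsigma$ into the hidden constants and using $\beta+2+4\beta^{2}p=O(1+p)$ together with $G^{2}+\varsigma^{2}+\sigma^{2}d=O(1+\sigma^{2}d)$, the square-root term becomes $O(((1+p)(1+\sigma^{2}d)/(npT))^{1/2})$, the cube-root term becomes $O(((p(1+\sigma^{2}d))^{1/2}/(cpT))^{2/3})$, and the residual term becomes $O(1/(pT))$, matching~\eqref{eq: con_1} term by term; the exponents $\tfrac{1}{2}$ and $\tfrac{2}{3}$ are exactly what balancing against the linear and quadratic terms produces.

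For the explicit rate in the second display I would instead substitute the concrete choice $\alpha=\sqrt{n/T}$ directly into the Theorem bound. Feasibility $\alpha\le\alpha_{\max}$ is equivalent to $\sqrt{n/T}\le(1-\beta)^{2}/(2L)$, i.e.\ $T\ge 4nL^{2}/(1-\beta)^{4}$, which is the stated hypothesis; under it the three terms evaluate, after simplifying $\alpha pT=p\sqrt{nT}$, $B\alpha$, and $C\alpha^{2}=Cn/T$, to the three summands written in the corollary. I expect the only delicate point to be the interaction with the cap $\alpha_{\max}$: when the balancing points $(r_{0}/(bT))^{1/2}$ or $(r_{0}/(eT))^{1/3}$ exceed $\alpha_{\max}$, one must verify that the active constraint $\alpha=\alpha_{\max}$ contributes only the residual $O(1/(pT))$ term and not a larger one, which is precisely what the tuning lemma guarantees. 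Beyond that, the remaining effort is bookkeeping of which parameters ($n,p,c,\sigma,T$) are tracked and which ($L,\beta,G,\varsigma$) are treated as constants; no new estimates are required, since Theorem~\ref{thm: convergence} already supplies the governing inequality.
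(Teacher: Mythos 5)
Your proposal is correct and follows essentially the same route as the paper: the paper likewise rewrites the bound of Theorem~\ref{thm: convergence} as $\frac{r_{0}}{\alpha T} + b\alpha + h\alpha^{2}$ with $r_{0} = \frac{2(1-\beta)(f(x_{0})-f^{*})}{p}$, $b$ and $h$ exactly your $B$ and $C$, invokes the same step-size tuning lemma (imported from Koloskova et al.\ as Lemma~\ref{lem: step_tune}) with the capped choice $\alpha = \min\bigl\{(r_{0}/(b(T+1)))^{1/2}, (r_{0}/(h(T+1)))^{1/3}, \frac{(1-\beta)^{2}}{2L}\bigr\}$, and obtains the second display by direct substitution of $\alpha = \sqrt{n/T}$ under the feasibility condition $T \geq \frac{4nL^{2}}{(1-\beta)^{4}}$. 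No gaps; your bookkeeping of which parameters are tracked versus absorbed matches the paper's.
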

\begin{proof}
    The proof is provided in Appendix~G.
\end{proof}
% \begin{proof}
% Based on Theorem~\ref{thm: privacy}, we have
% \begin{align*}
% 	& \frac{1}{T} \sum_{t=0}^{T-1} \mathbb{E}\left[ \left\| \nabla f(\bar{x}_{t}) \right\|^{2} \right] \\
% \leq & \frac{2(1-\beta)(f(x_{0}) - f^{*})}{\alpha p T} + A(\varsigma^{2} + G^{2}) \\
% & + A \frac{20 k p^{2} T \log(1.25/\delta_{0})G^{2}}{ q^{2}  \varepsilon^{2}}.
% \end{align*}
% Using $T = \mathcal{O}\left( \frac{(\varsigma^{2} + G^{2})q^{2} \varepsilon^{2}}{20kp^{2}\log(1.25/\delta_{0})G^{2}} \right)$, we get the result shown in~\eqref{eq: optimal_bound}.
% 	%The proof is provided in Appendix~\ref{app: cor_opt}.
% \end{proof}
% \begin{remark}
% In comparison to the prevailing best achievable error bound under DP, $O(\frac{dG^{2}}{\varepsilon^{2}})$, the error bound in~\eqref{eq: optimal_bound} is optimal in terms of its dependence on the privacy loss $\varepsilon$. 
% Moreover, our bound is $kp/d$ times smaller, signifying enhanced optimization accuracy. 
% %This result also implies that the number of iterations increases to achieve a lower error.
% \end{remark}

\begin{remark}
The first term in~\eqref{eq: con_1} indicates a linear speed-up compared to SGD on a single node. The noise level $\sigma$ impacts both lower and higher order terms, whereas sparsification affects only the higher order term. Therefore, the benefits of noise reduction would outweigh the impacts of information incompletion, thereby improving the trade-off between accuracy, privacy, and communication efficiency.
Additionally, as observed from~\eqref{eq: con_1}, a lower activation probability leads to slower convergence.
\end{remark}

\section{Numerical Experiments} \label{sec: sim}
In this section, we evaluate the performance of DO-ADP using two benchmark tests.
First, we apply DO-ADP to solve a decentralized logistic regression problem on the \emph{epsilon} dataset~\cite{sonnenburg2006large}, which is a distributed strongly convex problem. 
We compare our results with state-of-the-art algorithms to highlight the advantages of DO-ADP. 
Next, we use DO-ADP to tackle a decentralized non-convex problem by training a convolutional neural network (CNN) model on the MNIST dataset~\cite{deng2012mnist}. This demonstrates the algorithm's effectiveness in handling non-convex problems.
%Moreover, we compare it with some state-of-the-art to illustrate the advantages of our algorithm. 
Throughout the experiments, we consider a graph 
%$\mathcal{G} = ([n], \mathcal{E})$ with $\mathcal{E} = \{ (i, (i+1) \ \text{mod} \ n), (i, (i+2)\ \text{mod} \ n), (i, (i+3)\ \text{mod} \ n), (i, (i-1)\ \text{mod} \ n), (i,(i-2)\ \text{mod} \ n), (i,(i-3)\ \text{mod} \ n)| i \in [n] \}$, 
$\mathcal{G} = ([n], \mathcal{E})$ with $\mathcal{E} = \{ (i, (i \pm j) \ \text{mod} \ n), | i \in [n], j=1, 2, 3 \}$, 
where $\text{mod}$ denotes the modulo operation. 
The data samples are randomly and evenly distributed among $n=20$ agents.
%We randomly assign the data samples evenly among the $n=20$ agents. 

\subsection{Decentralized Logistic Regression}
The \emph{epsilon} dataset consists of 400,000 samples, each with 2,000 features.
Agent $i$ has local cost function
\begin{align*}
f_{i}(x) = \frac{1}{q} \sum_{j=1}^{q} \log (1+ \exp(-b_{j}a_{j}^{\top}x)) + \frac{1}{2q}\|x \|^{2},
\end{align*}
where $a_{j} \in \mathbb{R}^{d}$ and $b_{j} \in \{-1,1 \}$ are the data samples, and $q$ is the number of samples in the local dataset. The ground truth $f(x^{*})$ is obtained using LogisticSGD optimizer from scikit-learn~\cite{pedregosa2011scikit}.
In iteration $t$, agent $i$ randomly samples a data point form $\{\zeta_{i}^{(1)} , \dots, \zeta_{i}^{(q)}\}$ and updates its states using DO-ADP. The step sizes are $\alpha = 0.001$, $\gamma = 0.05$, and $\beta = 0.15$.
{One epoch is completed after sampling $q$ times.} 
We run the algorithm for six epochs and repeat each experiment five times.
\begin{figure}[t]
	\centering
	\begin{subfigure}[t]{0.35\linewidth}
		\includegraphics[width=\linewidth]{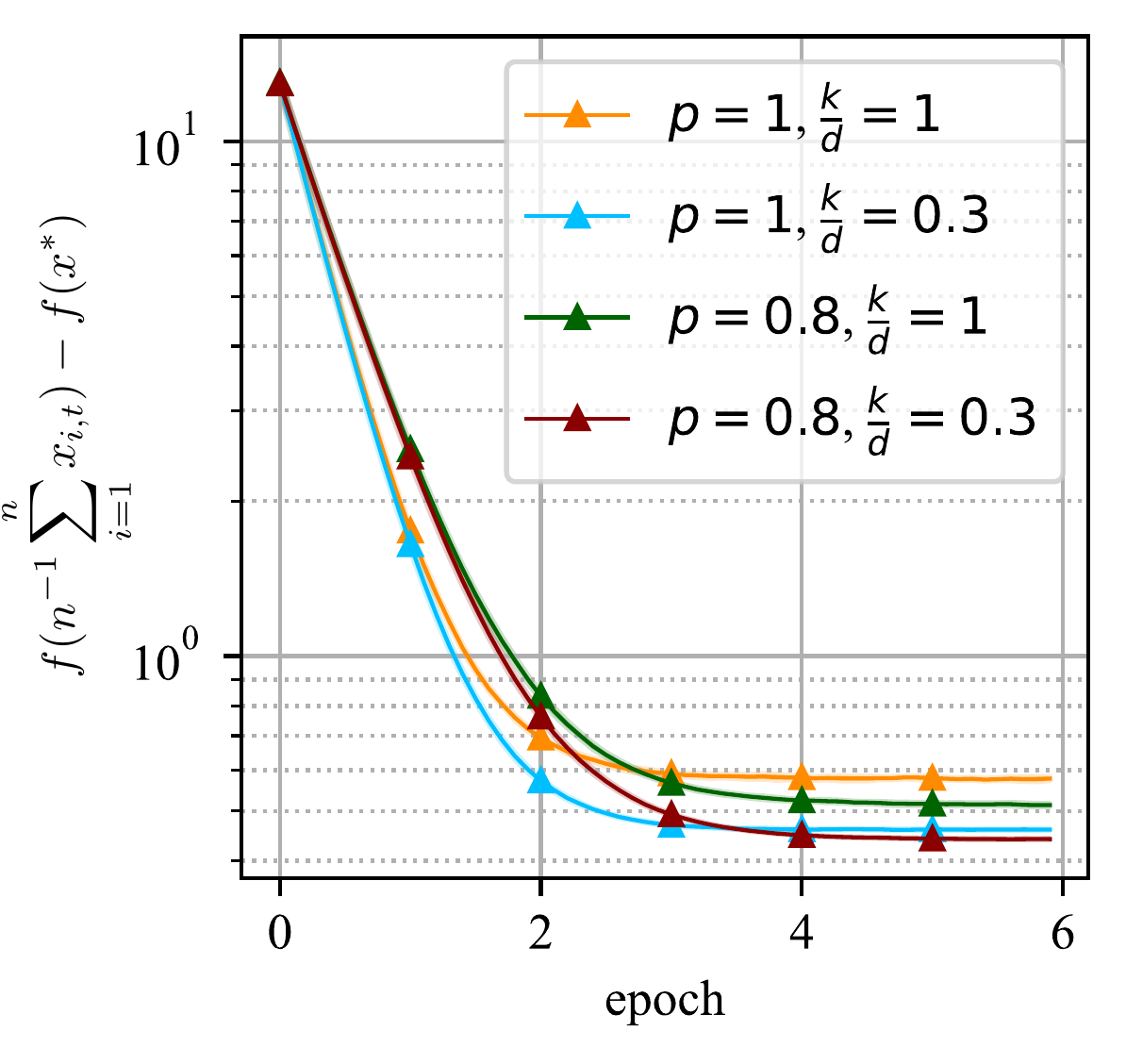}
  \captionsetup{font = small}
		\caption{Convergence under $\varepsilon=0.01$.}
  %\captionsetup{font = small}
		\label{fig: convergence}
	\end{subfigure}
	%\hfill
	\begin{subfigure}[t]{0.35\linewidth}
		\includegraphics[width=\linewidth]{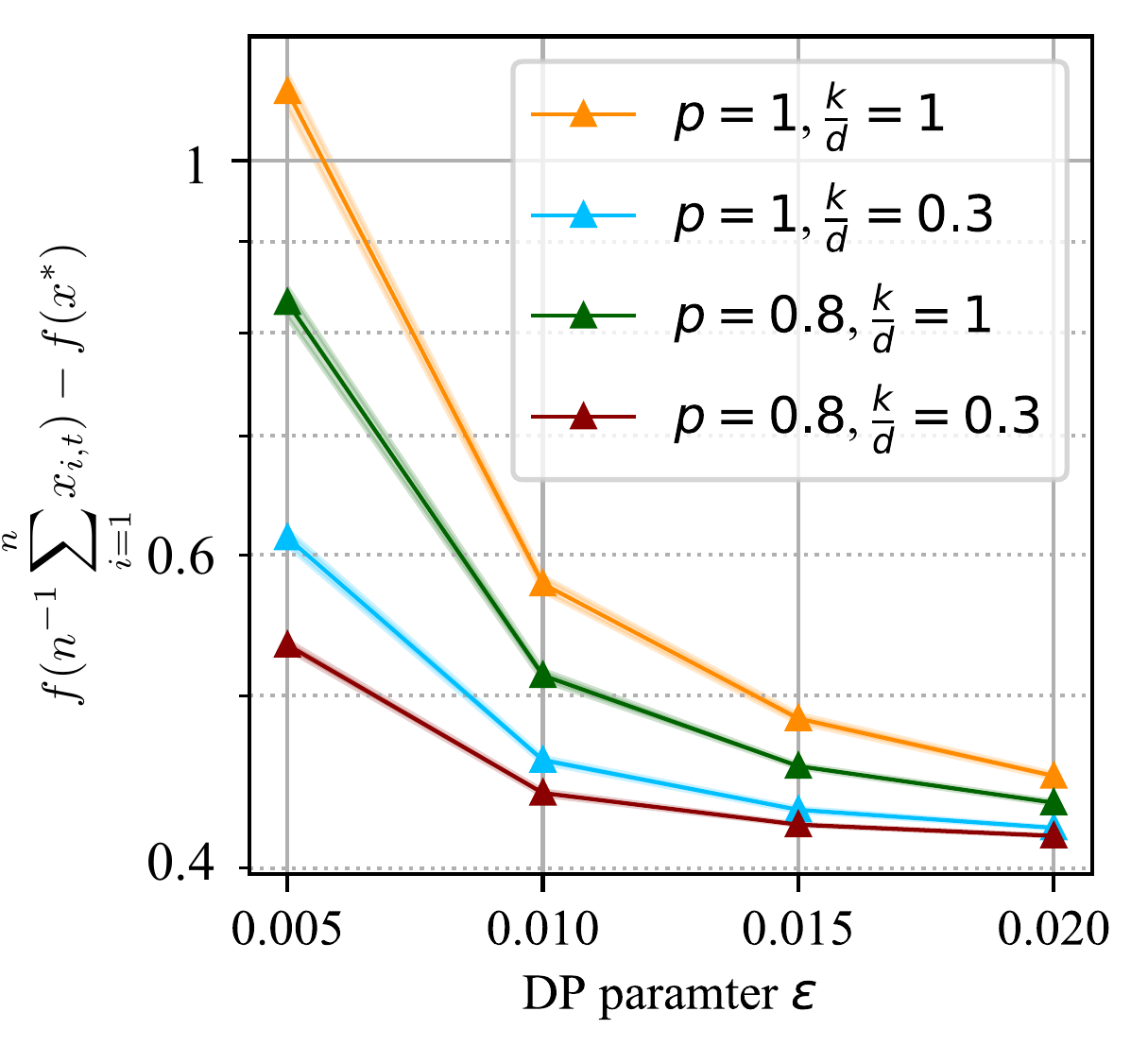}
  \captionsetup{font = small}
		\caption{Suboptimality under different DP parameters.} 
		\label{fig: tradeoff} 
	\end{subfigure}
\caption{Performance of DO-ADP.
{During each iteration, the communication resource utilization rates for the yellow, green, blue, and red lines are $100\%$, $80\%$, $30\%$, and $24\%$, respectively.}}
\label{fig: com}
\end{figure}
Fig.~\ref{fig: com} indicates that a higher sparsification ratio $k/d$ and activation probability $p$, result in a higher communication resource utilization rate and larger optimization error, which is consistent with Corollary~\ref{cor: optimal}. 
This suggests that, in this specific example, reduced noise may have greater benefits on convergence performance than the negative impacts of reduced information completeness.
Fig.~\ref{fig: tradeoff} illustrates the trade-off between privacy and optimization accuracy.
It shows that a higher value of $\varepsilon$, representing a less stringent privacy requirement, leads to reduced optimization error.
Moreover, the performance gap between different communication resource utilization rates is more significant when $\varepsilon$ is smaller, i.e., a tighter DP requirement.

We conduct a comparison between DO-ADP and several existing algorithms. 
The differentially private distributed optimization algorithm (DPDO) in~\cite{huang2015differentially} uses diminishing Laplacian noise to preserve DP.
The differentially private distributed dual average algorithm (DP-DDA) in~\cite{liu2024distributed} employs node sampling to enhance privacy, where inactive nodes do not update using gradient but still transmit data to neighbors. Since all agents transmit $d$-dimensional vectors per iteration in DPDO and DP-DDA, these algorithms utilize $100\%$ of communication resources per iteration. 
The compressed differentially private distributed gradient tracking algorithm (CPGT) in~\cite{xie2023compressed, xie2023differentially} adds Laplacian noise to data and then transmits compressed versions.
Table~\ref{tab: compare} offers a brief comparison of these existing algorithms and DO-ADP.
\begin{table}[t] 
	\caption{Comparison of different algorithms.}
	 \centering
  \footnotesize
	 \begin{tabular}{|m{0.25\linewidth}<{\centering}|m{0.25\linewidth}<{\centering}|m{0.25\linewidth}<{\centering}|}
	 	\hline
	 	Algorithm & Communication saving & Privacy amplification \\
	 	\hline
	 	DPDO~\cite{huang2015differentially} & $\times$   & $\times$ \\
	 	\hline
	 	DP-DDA~\cite{liu2024distributed} 
   & $\times$ & $\checkmark$ \\
	 	\hline
	 	CPGT~\cite{xie2023compressed} 
   & $\checkmark$ & $\times$ \\
	 	\hline
	 	DO-ADP & $\checkmark$  & $\checkmark$ \\
	 	\hline
	 \end{tabular}
  \label{tab: compare}
\end{table}
Fig.~\ref{fig: soa_svm} depicts the comparison results at the same privacy level, $\varepsilon = 0.01$.
\begin{figure}[t]  
	\centering
	\includegraphics[width=0.4\linewidth]{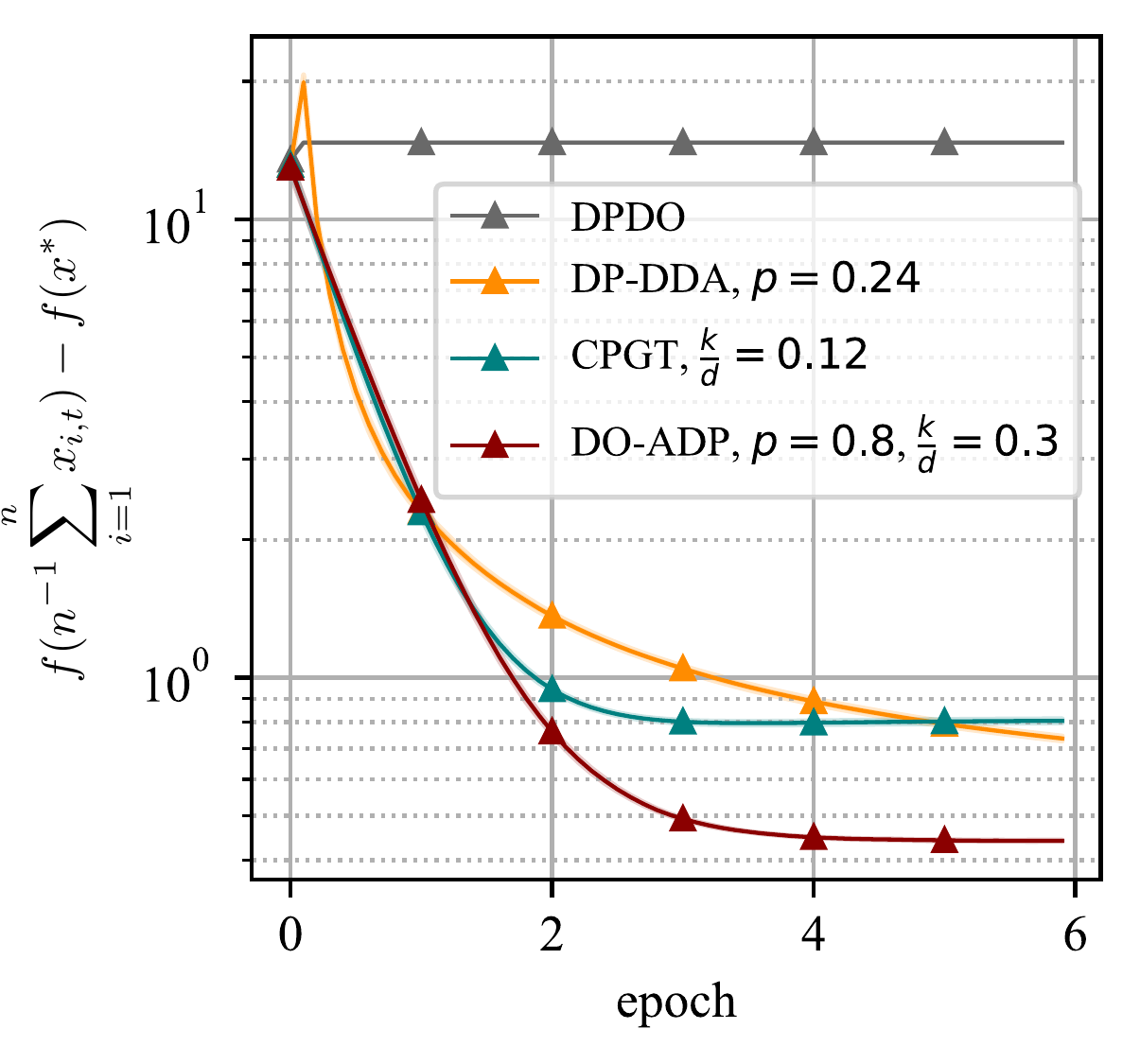}
	\caption{Convergence performance of DPDO, DP-DDA, GPGT, and DO-ADP with $\varepsilon = 0.01$.
{GPGT and DO-ADP utilize $24\%$ of communication resources per iteration, while DPDO and DP-DDA use $100\%$.}
 } 
	\label{fig: soa_svm}
\end{figure}
In DO-ADP, we set $p=0.8$ and $k/d=0.3$, allowing agents to use only  $24\%$ communication resources, hence reducing noise intensity by $24\%$. 
For fairness, we set the node subsampling ratio in DP-DDA to $24\%$, ensuring the same noise reduction. 
Since CPGT transmits two variables per iteration, we set its sparsification ratio as $k/d=0.12$ to match our resource usage. 
Fig.~\ref{fig: soa_svm} shows that our algorithm achieves the best optimization accuracy. 
%Moreover, we can observe that within the same time iteration, DO-ADP converges faster than DP-DDA. 

\subsection{Decentralized CNN Training}
The MNIST dataset has 60,000 binary images for training and 10,000 binary images for testing. 
Each agent has a local copy of the CNN model with three conventional layers. 
We set $\alpha = 0.01$, $\gamma = 0.05$, and $\beta = 0.15$. 
%Each experiment three times 
The training and testing results are shown in Fig.~\ref{fig: cnn} and Table~\ref{tab: CNN_trade}, which demonstrates our theoretical results in the non-convex scenario.
% The training and testing results with $\varepsilon = 1$ are shown in Fig.~\ref{fig: cnn}, which demonstrates that reduced communication resource utilization decreases training suboptimality, confirming Corollary~\ref{cor: optimal} in the non-convex scenario.
%It is observed that the smaller communication resource utilization reduces the training suboptimality, thereby confirming Corollary~\ref{cor: optimal} in the non-convex case. 
\begin{figure}[t]  
	\centering
	\includegraphics[width=0.6\linewidth]{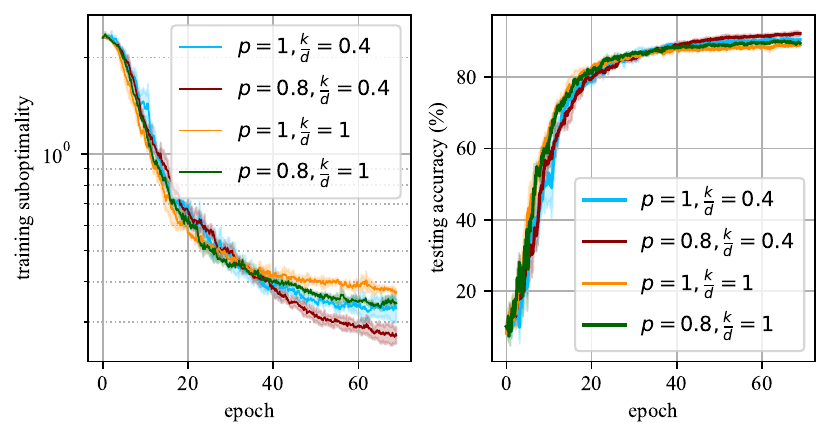}
	\caption{Decentralized CNN training under DO-ADP with $\varepsilon = 1$.
{Communication resource utilization rates per iteration for the yellow, green, blue, and red lines are $100\%$, $80\%$, $40\%$, and $32\%$, respectively.}} 
	\label{fig: cnn}
\end{figure}
\begin{table}[t] 
	\caption{Test accuracy $(\%)$ of decentralized CNN training under different $\varepsilon$.}
	 \centering
  \tiny
	 \begin{tabular}{|m{0.1\linewidth}<{\centering}|m{0.15\linewidth}<{\centering}|m{0.15\linewidth}<{\centering}|m{0.15\linewidth}<{\centering}|m{0.15\linewidth}<{\centering}|}
	 	\hline
	 	\multirow{2}{*}{ }& $p=1$ & $p=0.8$ & $p=1$ & $p=0.8$ \\
    & $\frac{k}{d} = 1$ & $\frac{k}{d} = 1$ & $\frac{k}{d} = 0.4$ & $\frac{k}{d} = 0.4$ \\
   %$p=1, \frac{k}{d} = 1$ & $p=0.8, \frac{k}{d} = 1$ & $p=1, \frac{k}{d} = 0.4$ & $p=0.8, \frac{k}{d} = 0.4$ \\
	 	\hline
	 	$\varepsilon=1$ & $89.58$   & $90.26$ & $90.98$ & $93.35$ \\
	 	\hline
	 	$\varepsilon=0.9$ & $88.06$   & $90.11$ & $90.75$ & $91.27$ \\
	 	\hline
	 	$\varepsilon=0.8$ & $87.88$   & $87.91$ & $90.11$ & $90.37$ \\
	 	\hline
	 \end{tabular}
  \label{tab: CNN_trade}
\end{table}
Additionally, the testing accuracy tends to increase as communication costs decrease in Fig.~\ref{fig: cnn}, suggesting that efficient communication enhances the trade-off between privacy and accuracy.

\section{Conclusion} \label{sec: conclusions}
This paper investigates the impact of efficient communication on DP for decentralized optimization. 
We propose a novel framework that integrates random agent activation and communication sparsification to simultaneously improve communication efficiency and privacy preservation. 
The DP analysis suggests that increased communication efficiency facilitates stronger privacy guarantees and improved trade-offs among communication, utility, and privacy.

There are several promising directions for future research. 
Firstly, investigating the impact of the stochastic event-triggered mechanism on DP amplification is intriguing, as it introduces additional randomness in communication. 
Additionally, extending the proposed DP amplification framework to directed graphs presents an intriguing opportunity.

\bibliographystyle{IEEEtran}
\bibliography{HW.bib}

\appendices
%\section{Some Useful Lemmas about Differential Privacy}
\section{Some Useful Lemmas}
\label{app: dp_property}

\begin{lemma} \label{lem: subsample}
(Privacy for Subsampling~\cite{steinke2022composition})
%Let $U \subset [n]$ be a random subset. For a dataset 
	Suppose $\mathcal{M}$ is an $(\varepsilon, \delta)$-DP mechanism. 
	Let $\text{\ttfamily PoisSample}_{p}: \mathcal{U} \to \mathcal{U}$ be the subsampling operation that independently outputs each element in $\mathcal{U}$ with probability $p$, and $\text{\ttfamily UniSample}_{r_{1}, r_{2}}: \mathcal{U}^{r_{1}} \to \mathcal{U}^{r_{2}}$ be the subsampling operation that takes a dataset belonging to $\mathcal{U}^{r_{1}}$ as input and selects uniformly at random a subset of $r_{2} \leq r_{1}$ elements from the input dataset.
	Consider the mechanism
	\begin{equation*}
		\mathcal{A}(\mathcal{D}) = \text{\ttfamily PoisSample}_{p}(\mathcal{R}_{1}, \dots, \mathcal{R}_{n}),
	\end{equation*} 
	where $\mathcal{R}_{i} = \text{\ttfamily DataSample}_{q,1}(\mathcal{M}(\nabla f_{i}(x_{i,t}, \zeta_{i}^{(1)})), \dots, \\ \mathcal{M}(\nabla f_{i}(x_{i,t}, \zeta_{i}^{(q)})))$, is $(\varepsilon^{\prime}, \delta^{\prime})$-DP for $\varepsilon^{\prime} = \log(1+p (e^{\varepsilon} - 1)/q)$ and $\delta^{\prime} = p\delta/q$.
\end{lemma}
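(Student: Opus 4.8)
The plan is to recognize $\mathcal{A}$ as the composition of two \emph{independent} subsampling layers applied to the $(\varepsilon,\delta)$-DP base mechanism $\mathcal{M}$, and to invoke a privacy-amplification-by-subsampling bound of the form $(\varepsilon,\delta)\mapsto(\log(1+\lambda(e^{\varepsilon}-1)),\,\lambda\delta)$ once for each layer, where $\lambda$ is the effective sampling rate of that layer. The inner layer is the uniform selection $\text{\ttfamily DataSample}_{q,1}$ of one sample out of $q$, whose rate is $\lambda=1/q$; the outer layer is the Poisson activation $\text{\ttfamily PoisSample}_{p}$ of agents, whose rate is $\lambda=p$. Chaining the two bounds and simplifying the resulting nested logarithm should produce exactly $\varepsilon^{\prime}=\log(1+p(e^{\varepsilon}-1)/q)$ and $\delta^{\prime}=p\delta/q$.

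First I would fix a pair of neighboring datasets $\mathcal{D},\mathcal{D}^{\prime}$ that differ in a single data sample, say sample $j^{\ast}$ of agent $i^{\ast}$, and track how this single change propagates. The only per-agent statistic depending on the differing sample is $\mathcal{R}_{i^{\ast}}$, so all $\mathcal{R}_{i}$ with $i\neq i^{\ast}$ are distributionally identical under $\mathcal{D}$ and $\mathcal{D}^{\prime}$ and may be treated as public. I would then apply the inner amplification to $\mathcal{R}_{i^{\ast}}$: because $\text{\ttfamily DataSample}_{q,1}$ picks the differing coordinate $j^{\ast}$ with probability exactly $1/q$ and otherwise returns an output identical in law under $\mathcal{D}$ and $\mathcal{D}^{\prime}$, the replace-one neighboring relation at the data level feeds the uniform-subsampling amplification lemma with rate $1/q$. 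This certifies that $\mathcal{R}_{i^{\ast}}$ is $(\varepsilon_{1},\delta_{1})$-DP with $\varepsilon_{1}=\log(1+(e^{\varepsilon}-1)/q)$ and $\delta_{1}=\delta/q$, equivalently $e^{\varepsilon_{1}}-1=(e^{\varepsilon}-1)/q$.

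Next I would apply the outer amplification. Viewing activation as Poisson subsampling of the single sensitive record $\mathcal{R}_{i^{\ast}}$, the output $\mathcal{A}(\mathcal{D})$ equals $\mathcal{A}(\mathcal{D}^{\prime})$ in law whenever agent $i^{\ast}$ is inactive (probability $1-p$) and is only $(\varepsilon_{1},\delta_{1})$-close whenever $i^{\ast}$ is active (probability $p$). The Poisson amplification lemma with rate $p$ then yields that $\mathcal{A}$ is $(\log(1+p(e^{\varepsilon_{1}}-1)),\,p\delta_{1})$-DP; substituting $e^{\varepsilon_{1}}-1=(e^{\varepsilon}-1)/q$ and $\delta_{1}=\delta/q$ gives the stated $\varepsilon^{\prime}$ and $\delta^{\prime}$.

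The main obstacle I anticipate is bookkeeping on the neighboring relation rather than any hard inequality: one must confirm that the two subsampling operations draw on disjoint sources of randomness and act independently, that the single data-sample change maps to a replace-one relation at the inner layer and to a single-record change at the outer layer, and that the amplification lemma is stated in a form whose hypotheses match each layer's neighboring relation (add/remove versus replace). Once these reductions are verified, the remaining work is the routine algebraic simplification of the nested logarithm, which collapses cleanly because the base amplification map composes multiplicatively in the factor $(e^{\varepsilon}-1)$.
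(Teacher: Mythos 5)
The paper never proves this lemma; it is imported verbatim from~\cite{steinke2022composition}, so there is no in-paper argument to compare against. Judged on its own merits, your proposal has a genuine gap at the outer (Poisson) layer, and it is precisely the issue you dismiss at the end as ``bookkeeping on the neighboring relation.'' After conditioning on the activation of agent $i^{*}$, \emph{both} output laws are two-sided mixtures with a common component: $P = (1-p)Q_{0} + pP_{1}$ and $P^{\prime} = (1-p)Q_{0} + pP_{1}^{\prime}$, where $Q_{0}$ is the law of the output with $i^{*}$ silent and $P_{1}, P_{1}^{\prime}$ are the laws with $i^{*}$ active. The standard Poisson-amplification lemma does not apply to this structure: it is proved for add/remove neighboring, where the mixture appears on one side only ($P^{\prime} = Q_{0}$ itself), and its proof uses exactly that fact. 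From the hypothesis you actually have, namely $P_{1}(E) \leq e^{\varepsilon_{1}}P_{1}^{\prime}(E) + \delta_{1}$, the only conclusion available is
\begin{align*}
P(E) &= (1-p)Q_{0}(E) + pP_{1}(E) \\
&\leq e^{\varepsilon_{1}}\left[(1-p)Q_{0}(E) + pP_{1}^{\prime}(E)\right] + p\delta_{1} \\
&= e^{\varepsilon_{1}}P^{\prime}(E) + p\delta_{1},
\end{align*}
i.e., the factor $p$ lands on $\delta$ but \emph{not} on $\varepsilon$. To move $p$ inside the logarithm one needs, in addition, that the active conditional $P_{1}$ be $(\varepsilon_{1},\delta_{1})$-close to the inactive conditional $Q_{0}$; this closeness is what add/remove neighboring supplies for free in the textbook lemma, and what the advanced-joint-convexity arguments (Balle et al., Steinke) consume in the replace-one case.

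For the inner layer this missing ingredient happens to be available---$\mathcal{M}$ is DP across \emph{all} pairs of samples, so $\mathcal{M}(\nabla f_{i}(x,\zeta^{(j^{*})}))$ is $(\varepsilon,\delta)$-close to each $\mathcal{M}(\nabla f_{i}(x,\zeta^{(j)}))$ and hence to their mixture, and the selected index is never revealed---so the bound $\bigl(\log(1+(e^{\varepsilon}-1)/q), \delta/q\bigr)$ is true, although your one-line justification skips the joint-convexity step that makes it true. For the outer layer it is not available: an inactive agent transmits nothing, so activation is observable, $P_{1}$ and $Q_{0}$ have disjoint supports, and no DP-closeness between them can hold. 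A minimal counterexample: take $n=q=1$ and $\delta=0$; the event ``the agent transmits and the message lies in $F$'' has probability $p\,\mathbb{P}\{\mathcal{M}(z)\in F\}$ under $\mathcal{D}$ and $p\,\mathbb{P}\{\mathcal{M}(z^{\prime})\in F\}$ under $\mathcal{D}^{\prime}$, so the likelihood ratio still attains $e^{\varepsilon}$, strictly exceeding $e^{\varepsilon^{\prime}} = 1+p(e^{\varepsilon}-1)$ whenever $p<1$. Hence per-round $\varepsilon$-amplification by the activation probability cannot be obtained by the route you propose---indeed, with observable activation it cannot be obtained at all; what random activation can legitimately buy is the factor $p$ on $\delta$ per round, or a reduction in the number of compositions across rounds (agent $i^{*}$ is active in only about $pT$ of the $T$ iterations). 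Your chaining step is therefore the point of failure, and the failure is substantive, not notational.
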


\begin{lemma} \label{lem: compose}
	(Composition of DP~\cite{kairouz2015composition})
	Given $T$ randomized algorithms $\mathcal{A}_{1}, \dots, \mathcal{A}_{T}$, each of which is $(\varepsilon_{i}, \delta_{i})$-DP with $\varepsilon_{i} \in (0, 0.9]$ and $\delta_{i} \in (0,1]$. Then $\mathcal{A}$ with $\mathcal{A}(\cdot) = (\mathcal{A}_{1}(\cdot), \dots, \mathcal{A}_{T}(\cdot))$ is $(\tilde{\varepsilon}, \tilde{\delta})$-DP with
	\begin{equation*}
		\tilde{\varepsilon} = \sqrt{\sum_{t=1}^{T}2\varepsilon_{t}^{2} \log \left( e + \frac{\sqrt{\sum_{t=1}^{T} \varepsilon_{t}^{2}}}{\hat{\delta}} \right) } + \sum_{t=1}^{T} \varepsilon_{t}^{2}
	\end{equation*}
	and
	\begin{equation*}
		\tilde{\delta} = 1 - (1-\hat{\delta})\prod_{t=1}^{T}(1-\delta_{t})
	\end{equation*}
	for any $\hat{\delta} \in (0,1]$.
\end{lemma}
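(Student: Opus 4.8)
The plan is to prove the composition bound through the \emph{privacy-loss random variable} together with a martingale concentration argument, which is the standard route to advanced composition and underlies \cite{kairouz2015composition}. Fix a pair of neighboring datasets $\mathcal{D}, \mathcal{D}'$. For each mechanism $\mathcal{A}_t$ I would define the privacy loss $L_t(o) = \log \frac{\mathbb{P}[\mathcal{A}_t(\mathcal{D}) = o]}{\mathbb{P}[\mathcal{A}_t(\mathcal{D}') = o]}$, evaluated at a random output $o \sim \mathcal{A}_t(\mathcal{D})$. Because the $\mathcal{A}_t$ act on the same pair $(\mathcal{D},\mathcal{D}')$, the privacy loss of the composed mechanism $\mathcal{A}$ is the sum $\sum_{t=1}^T L_t$, and under adaptive composition $\{L_t\}$ forms a martingale-difference-type sequence with respect to the filtration generated by the earlier outputs. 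The entire claim then reduces to showing that $\sum_{t=1}^T L_t \le \tilde{\varepsilon}$ except on an event of probability at most $\tilde{\delta}$, since the standard equivalence between a high-probability tail bound on the privacy loss and $(\tilde{\varepsilon}, \tilde{\delta})$-DP closes the argument.

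The first technical step is an \emph{approximate-DP decomposition}. I would show that $(\varepsilon_t, \delta_t)$-DP of $\mathcal{A}_t$ yields, for each $t$, a ``good'' event $\Omega_t$ of probability at least $1-\delta_t$ on which the privacy loss is deterministically bounded, $|L_t| \le \varepsilon_t$, and has conditional mean at most $\mu_t := \frac{(e^{\varepsilon_t}-1)\varepsilon_t}{e^{\varepsilon_t}+1} \le \varepsilon_t^2$ (using $\tanh(\varepsilon_t/2) \le \varepsilon_t/2$, which needs the restriction $\varepsilon_t \le 0.9$ only for the later higher-order control). Under independence the probability that all good events hold simultaneously is $\prod_{t=1}^T (1-\delta_t)$, which is exactly the factor appearing in the stated $\tilde{\delta}$; the failure of any $\Omega_t$ is absorbed into $\tilde{\delta}$ and produces the $1-\prod_t(1-\delta_t)$ piece.

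Conditioned on $\bigcap_{t} \Omega_t$, the sum $\sum_t L_t$ is a martingale with bounded increments $|L_t| \le \varepsilon_t$ and total drift $\sum_t \mu_t \le \sum_t \varepsilon_t^2$, so I would apply an Azuma--Hoeffding inequality to obtain $\mathbb{P}[\sum_t L_t - \sum_t \mu_t > \lambda] \le \exp(-\lambda^2/(2\sum_t \varepsilon_t^2))$. Setting this tail equal to $\hat{\delta}$ and solving for $\lambda$ gives the deviation term, and adding back the drift $\sum_t \varepsilon_t^2$ reproduces $\tilde{\varepsilon} = \sqrt{2\sum_t \varepsilon_t^2 \log(e + \sqrt{\sum_t \varepsilon_t^2}/\hat{\delta})} + \sum_t \varepsilon_t^2$. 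Combining the concentration-failure probability $\hat{\delta}$ with the good-event failures multiplicatively gives $\tilde{\delta} = 1-(1-\hat{\delta})\prod_{t=1}^T(1-\delta_t)$, and converting the privacy-loss tail back into the indistinguishability inequality of Definition~\ref{defn: DP} completes the proof.

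The main obstacle is obtaining the sharp logarithmic factor $\log(e + \sqrt{\sum_t \varepsilon_t^2}/\hat{\delta})$ rather than the cruder $\log(1/\hat{\delta})$ that a vanilla Azuma bound produces. This requires the refined moment-generating-function analysis of \cite{kairouz2015composition}: one does not simply bound each increment by its range but optimizes over the exponential-moment parameter using the exact worst-case law of $L_t$ for an $(\varepsilon_t,\delta_t)$-DP mechanism. Equivalently, one invokes their reduction showing that the worst case for composition is attained by a canonical randomized-response mechanism and analyzes the composition of these canonical mechanisms directly. Verifying that this worst-case domination holds \emph{heterogeneously}, with distinct $(\varepsilon_t,\delta_t)$, and that the optimized tail yields precisely the $e+\cdot$ form inside the logarithm is the delicate part; by contrast, the decomposition lemma and the back-conversion from a privacy-loss tail to $(\tilde{\varepsilon},\tilde{\delta})$-DP are routine.
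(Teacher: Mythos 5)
You should first be aware that the paper contains no proof of this lemma: it is imported, with citation, from Kairouz et al.~\cite{kairouz2015composition}, being (a slightly weakened form of) their heterogeneous advanced-composition theorem, where the mean-loss term $\sum_{t}\varepsilon_{t}(e^{\varepsilon_{t}}-1)/(e^{\varepsilon_{t}}+1)$ is relaxed to $\sum_{t}\varepsilon_{t}^{2}$ via $\tanh(\varepsilon_{t}/2)\le \varepsilon_{t}/2$. There is therefore no internal argument to compare your attempt against; the only meaningful comparison is with the cited source, whose overall strategy your sketch correctly mirrors (privacy-loss random variable, good-event decomposition of approximate DP, concentration of the summed loss, conversion of the tail bound back into $(\tilde{\varepsilon},\tilde{\delta})$-DP), and whose $\tilde{\delta}$ bookkeeping you reproduce correctly.

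However, as a proof of the stated bound your sketch has a genuine gap, and it is exactly the one you flag yourself — which cannot be treated as a routine refinement. Vanilla Azuma--Hoeffding on the conditioned martingale yields the deviation $\sqrt{2\sum_{t}\varepsilon_{t}^{2}\log(1/\hat{\delta})}$, and this is \emph{incomparable} to the stated $\sqrt{2\sum_{t}\varepsilon_{t}^{2}\log\left(e+\sqrt{\sum_{t}\varepsilon_{t}^{2}}/\hat{\delta}\right)}$: the latter is strictly smaller whenever $\sqrt{\sum_{t}\varepsilon_{t}^{2}}<1-e\hat{\delta}$, which is precisely the regime in which this paper invokes the lemma. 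In the proof of Theorem~\ref{thm: privacy}, the authors choose $\hat{\delta}=\sqrt{\sum_{t}\varepsilon_{t}^{\prime 2}}$ with $\sum_{t}\varepsilon_{t}^{\prime 2}\le\varepsilon^{2}/5$, so that the logarithm collapses to the absolute constant $\log(e+1)$ and $\tilde{\varepsilon}\le\sqrt{3/5}\,\varepsilon+\varepsilon/5\le\varepsilon$. Under your Azuma variant the same choice leaves a factor $\log(1/\hat{\delta})=\log(\sqrt{5}/\varepsilon)$, which diverges as $\varepsilon\to 0$, so the downstream conclusion $\tilde{\varepsilon}\le\varepsilon$ fails exactly when the privacy budget is tight. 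Hence the step you defer to ``the refined moment-generating-function analysis'' of~\cite{kairouz2015composition} --- the reduction of heterogeneous $(\varepsilon_{t},\delta_{t})$-DP mechanisms to canonical randomized-response pairs and the optimized tail bound --- is the actual content of the lemma; deferring it means your proposal establishes a materially different (and, for this paper's purposes, insufficient) statement. Since the result is a known theorem, citing it as the paper does is the appropriate resolution; a self-contained proof would have to reproduce that analysis rather than invoke Azuma.
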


\begin{lemma} \label{lem: postp}
	(Post Processing~\cite{dwork2014algorithmic}) 
	Given a randomized algorithm $\mathcal{A}$ that is $(\varepsilon, \delta)$-DP. For arbitrary mapping $h$ from the set of possible outputs of $\mathcal{A}$ to an arbitrary set, $h(\mathcal{A}(\cdot))$ is $(\varepsilon, \delta)$-DP.
\end{lemma}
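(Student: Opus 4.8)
The plan is to reduce the post-processing guarantee to the $(\varepsilon,\delta)$-DP property of $\mathcal{A}$ itself by pulling output events back through the map $h$. First I would treat the case where $h$ is a deterministic measurable mapping. Fix any pair of neighboring datasets $\mathcal{D}, \mathcal{D}^{\prime}$ and any measurable set $\mathcal{O}$ in the codomain of $h$. The key identity is that the event $\{ h(\mathcal{A}(\mathcal{D})) \in \mathcal{O} \}$ coincides with $\{ \mathcal{A}(\mathcal{D}) \in h^{-1}(\mathcal{O}) \}$, where $h^{-1}(\mathcal{O}) = \{ y : h(y) \in \mathcal{O} \}$ is a subset of the output space of $\mathcal{A}$. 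Hence
\[
\mathbb{P}\{ h(\mathcal{A}(\mathcal{D})) \in \mathcal{O} \} = \mathbb{P}\{ \mathcal{A}(\mathcal{D}) \in h^{-1}(\mathcal{O}) \}.
\]
Applying the $(\varepsilon,\delta)$-DP of $\mathcal{A}$ with the output set $h^{-1}(\mathcal{O})$ yields
\[
\mathbb{P}\{ \mathcal{A}(\mathcal{D}) \in h^{-1}(\mathcal{O}) \} \leq e^{\varepsilon} \mathbb{P}\{ \mathcal{A}(\mathcal{D}^{\prime}) \in h^{-1}(\mathcal{O}) \} + \delta,
\]
and translating the right-hand side back through the same identity gives the claim for deterministic $h$.

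Second, I would extend to randomized $h$ by writing it as a mixture of deterministic maps. Any randomized mapping can be modeled as $h(\cdot) = h_{\omega}(\cdot)$, where $\omega$ is drawn from a distribution independent of $\mathcal{A}$ and each $h_{\omega}$ is deterministic. Conditioning on $\omega$, invoking the deterministic case, and taking expectation over $\omega$ gives
\[
\mathbb{P}\{ h(\mathcal{A}(\mathcal{D})) \in \mathcal{O} \} = \mathbb{E}_{\omega}\left[ \mathbb{P}\{ h_{\omega}(\mathcal{A}(\mathcal{D})) \in \mathcal{O} \} \right] \leq \mathbb{E}_{\omega}\left[ e^{\varepsilon} \mathbb{P}\{ h_{\omega}(\mathcal{A}(\mathcal{D}^{\prime})) \in \mathcal{O} \} + \delta \right].
\]
Because the bound $e^{\varepsilon} p + \delta$ is affine in $p$, the expectation passes through the right-hand side to produce $e^{\varepsilon} \mathbb{P}\{ h(\mathcal{A}(\mathcal{D}^{\prime})) \in \mathcal{O} \} + \delta$. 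Since $\mathcal{D}, \mathcal{D}^{\prime}$, and $\mathcal{O}$ were arbitrary, this establishes that $h(\mathcal{A}(\cdot))$ is $(\varepsilon,\delta)$-DP.

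The only genuine difficulties are measure-theoretic rather than combinatorial. I must ensure that $h^{-1}(\mathcal{O})$ is a measurable set, which holds whenever $h$ and $\mathcal{O}$ are measurable, so that the DP guarantee of $\mathcal{A}$ legitimately applies to it; and in the randomized case I must justify that the internal randomness $\omega$ is independent of the output of $\mathcal{A}$ and that Fubini--Tonelli permits interchanging the expectation over $\omega$ with the probability over $\mathcal{A}$. These independence and measurability conditions are the main points requiring care, whereas the core inequality itself is a one-line consequence of pulling the event back through $h$ and invoking the hypothesis on $\mathcal{A}$.
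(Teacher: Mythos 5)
Your proof is correct. Note that the paper itself gives no proof of this lemma---it is stated as a known result quoted from~\cite{dwork2014algorithmic}---and your argument (pulling the event back through $h^{-1}(\mathcal{O})$ for deterministic $h$, then decomposing a randomized $h$ into a mixture of deterministic maps and using that the bound $e^{\varepsilon}p+\delta$ is affine in $p$) is precisely the standard proof appearing in that cited reference, so it matches the intended justification.
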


\begin{lemma} \label{lem: topk}
	For any input vector $x \in \mathbb{R}^{d}$, the output of the $\emph{\text{\ttfamily Top}}_{k}$, $\mathcal{S}(x)$ satisfies
	\begin{equation*}
		\| \mathcal{S}(x) - x \|^{2} \leq \left(1-\frac{k}{d} \right)\|x\|^{2}.
	\end{equation*}
\end{lemma}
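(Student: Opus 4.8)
The plan is to reduce the inequality to an elementary statement about sorted nonnegative sequences. First I would fix the coordinate ordering so that $|[x]_{1}| \geq |[x]_{2}| \geq \cdots \geq |[x]_{d}|$, as in the definition of the $\text{\ttfamily Top}_{k}$ sparsifier. Since $\mathcal{S}(x)$ keeps exactly the $k$ largest coordinates in magnitude and zeroes out the rest, the residual $\mathcal{S}(x) - x$ is supported on the discarded coordinates $l > k$, so that
\begin{equation*}
\| \mathcal{S}(x) - x \|^{2} = \sum_{l=k+1}^{d} [x]_{l}^{2}, \qquad \|x\|^{2} = \sum_{l=1}^{d} [x]_{l}^{2}.
\end{equation*}
Writing $a_{l} := [x]_{l}^{2} \geq 0$, which forms a nonincreasing sequence, the claim becomes $\sum_{l=k+1}^{d} a_{l} \leq (1 - k/d)\sum_{l=1}^{d} a_{l}$, which is equivalent to $\sum_{l=1}^{k} a_{l} \geq \frac{k}{d}\sum_{l=1}^{d} a_{l}$.

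The key step is the averaging inequality: the mean of the $k$ largest squared magnitudes is at least the mean of all $d$ of them. I would prove this by a direct rearrangement. It suffices to show $(d-k)\sum_{l=1}^{k} a_{l} \geq k \sum_{l=k+1}^{d} a_{l}$. Because the sequence is nonincreasing, every retained term satisfies $a_{l} \geq a_{k}$ for $l \leq k$, giving $\sum_{l=1}^{k} a_{l} \geq k a_{k}$; likewise every discarded term satisfies $a_{l} \leq a_{k}$ for $l > k$, giving $\sum_{l=k+1}^{d} a_{l} \leq (d-k) a_{k}$. Combining these two bounds yields $(d-k)\sum_{l=1}^{k} a_{l} \geq (d-k)k a_{k} \geq k\sum_{l=k+1}^{d} a_{l}$, which is precisely the inequality required.

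Finally I would unwind the equivalence: the averaging inequality gives $\sum_{l=1}^{k} a_{l} \geq \frac{k}{d}\sum_{l=1}^{d} a_{l}$, so that $\sum_{l=k+1}^{d} a_{l} = \sum_{l=1}^{d} a_{l} - \sum_{l=1}^{k} a_{l} \leq (1 - k/d)\|x\|^{2}$, and substituting back the expressions for the two norms completes the proof. The only real obstacle is establishing the averaging inequality cleanly; everything else is bookkeeping about which coordinates survive the sparsifier. A pitfall to watch is the degenerate case where ties occur among the magnitudes, but the argument above invokes only $a_{l} \geq a_{k}$ on $l \leq k$ and $a_{l} \leq a_{k}$ on $l > k$, and is therefore insensitive to how ties are broken.
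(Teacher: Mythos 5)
Your proof is correct and complete. Note that the paper states this lemma without any proof at all (it is a standard property of the \texttt{Top}$_{k}$ sparsifier), so there is no internal argument to compare against; the route you take --- reducing the claim to the fact that the $k$ largest terms of a nonincreasing nonnegative sequence account for at least a $k/d$ fraction of the total, proved by bounding both the retained and discarded sums against $a_{k}$ --- is the standard argument, every reduction step (support of the residual, the algebraic equivalence, the averaging inequality) checks out, and your observation that the argument is insensitive to tie-breaking is accurate.
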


% \section{Proof of Theorem~\ref{thm: privacy}}
% %\labelsubseccounter{app: thm_dp} 
% \label{app: thm_dp}

\section{Proof of Lemma~\ref{lem: m}}
%\labelsubseccounter{app: lem_m}
\label{app: lem_m}

According to~\eqref{eq: m_update}, we have
	\begin{align*}
		m_{i,t} = \sum_{t^{\prime}=0}^{t-1} \beta^{t-1-t^{\prime}}\eta_{i,t^{\prime}}(g_{i,t^{\prime}} + \theta_{i,t^{\prime}}).
	\end{align*}
	Therefore,
	\begin{align*}
		\mathbb{E}\left[ \| m_{i,t} \|^{2} \right] =& \mathbb{E}\left[ \left\| \sum_{t^{\prime}=0}^{t-1} \beta^{t-1-t^{\prime}}\eta_{i,t^{\prime}}(g_{i,t^{\prime}} + \theta_{i,t^{\prime}}) \right \|^{2} \right] \\
		=& \mathbb{E}\left[v_{t}^{2} \left\| \sum_{t^{\prime=0}}^{t} \frac{\beta^{t-1-t^{\prime}}}{v_{t}} \eta_{i,t^{\prime}}(g_{i,t^{\prime}} + \theta_{i,t^{\prime}}) \right\|^{2} \right] \\
		\leq & \frac{1}{1-\beta} \sum_{t^{\prime=0}}^{t} \beta^{t-1-t^{\prime}} \mathbb{E} \left[ \left\| \eta_{i,t^{\prime}}(g_{i,t^{\prime}} + \theta_{i,t^{\prime}})  \right\|^{2} \right] \\
		\leq & \frac{p(G^{2} + \sigma^{2}d)}{(1-\beta)^{2}},
	\end{align*}
	where $v_{t} =\sum_{t^{\prime}=0}^{t-1} \beta^{t-1-t^{\prime}} \leq \frac{1}{1-\beta}$. The first inequality holds from the convexity of $\ell_{2}$ norm.

\section{Proof of Lemma~\ref{lem: x}}
%\labelsubseccounter{app: lem_x}
\label{app: lem_x}

We introduce an auxiliary variable, $o_{i,t+1} = x_{i,t} - \eta_{i,t} \alpha m_{i,t+1}$, and thus, $x_{i,t+1} = o_{i,t+1} + \gamma \sum w_{ij} (\hat{x}_{j,t} - \hat{x}_{i,t})$. 
To simplify the presentation, we use $\mathbf{1}$ and $I$ to replace $\mathbf{1}_{n}$ and $I_{n}$.
	Denote 
	\begin{align*}
	X_{t} =& \begin{bmatrix}
		x_{1,t} & x_{2,t} & \cdots & x_{n,t}
	\end{bmatrix} \in \mathbb{R}^{d \times n} \\
	\bar{X}_{t} =& \begin{bmatrix}
		\bar{x}_{t} & \bar{x}_{t} & \cdots & \bar{x}_{t}
	\end{bmatrix} \in \mathbb{R}^{d \times n} \\
	\hat{X}_{t} =& \begin{bmatrix}
			\hat{x}_{1,t} & \dots & \hat{x}_{n,t}
		\end{bmatrix} \in \mathbb{R}^{d \times n} \\
	M_{t} =& \begin{bmatrix}
		m_{1,t} & m_{2,t} & \cdots & m_{n,t}
	\end{bmatrix} \in \mathbb{R}^{d \times n} \\
	O_{t} =& \begin{bmatrix}
			o_{1,t} & \dots & o_{n,t}
		\end{bmatrix} \in \mathbb{R}^{d \times n} \\
		\boldsymbol{\eta}_{t} =& \text{diag}\{\eta_{i,t}, \dots, \eta_{n,t}\} \in \mathbb{R}^{n \times n} \\
		\bar{O}_{t} =& \frac{1}{n}O_{t}\mathbf{1}\mathbf{1}^{\top} \in \mathbb{R}^{n \times n}.
	\end{align*}
According to~\eqref{eq: algo_3}, we have
\begin{subequations}
	\begin{align}
	\label{eq: o_com} O_{t+1} =& X_{t} - \alpha M_{t+1} \boldsymbol{\eta}_{t}, \\
	\label{eq: x_com} X_{t+1} =& O_{t+1} + \gamma \hat{X}_{t}(W-I), \\
	\hat{X}_{t+1} =& \hat{X}_{t} + \mathcal{S}(X_{t+1} - \hat{X}_{t})\boldsymbol{\eta}_{t}, 
\end{align}
\end{subequations}
where
\begin{align*}
	&\mathcal{S}(X_{t+1} - \hat{X}_{t}) \\
		= & \begin{bmatrix}
			\mathcal{S}(x_{i,t+1} - \hat{x}_{i,t}) & \dots & \mathcal{S}(x_{n,t+1} - \hat{x}_{n,t})
		\end{bmatrix} \in \mathbb{R}^{d \times n}.
\end{align*}
Based on~\eqref{eq: o_com} and~\eqref{eq: x_com}, we obtain
\begin{align*}
	\bar{X}_{t+1} = \frac{1}{n}X_{t+1}\mathbf{1}\mathbf{1}^{\top} =  \bar{O}_{t+1} = \bar{X}_{t} - \frac{\alpha}{n}M_{t+1} \boldsymbol{\eta}_{t} \mathbf{1}\mathbf{1}^{\top},
\end{align*}
where the second equality follows from Assumption~\ref{assum: W}. 

There is
\begin{align} \label{eq: x_bar_1}
	& \sum_{i=1}^{n}\mathbb{E}\left[ \|\bar{x}_{t} - x_{i,t} \|^{2} \right] = \mathbb{E}\left[ \left\| X_{t} - \bar{X}_{t} \right\|_{F}^{2} \right] \nonumber \\
	\leq & \mathbb{E}\left[ \left\| X_{t} - \bar{X}_{t} \right\|_{F}^{2} \right] + \mathbb{E}\left[ \left\| X_{t} - \hat{X}_{t} \right\|_{F}^{2} \right].
\end{align}
For the first term of~\eqref{eq: x_bar_1}, we have
\begin{align} \label{eq: x_bar_2}
	& \mathbb{E}\left[ \left\| X_{t} - \bar{X}_{t} \right\|_{F}^{2} \right] \nonumber \\
	=& \mathbb{E}\left[ \left\| O_{t} + \gamma \hat{X}_{t-1}(W-I) - \bar{O}_{t} \right\|_{F}^{2} \right] \nonumber \\
	=& \mathbb{E}\left[ \left\| O_{t} + \gamma \hat{X}_{t-1}(W-I) - \bar{O}_{t} - \gamma \bar{O}_{t}(W-I) \right\|_{F}^{2} \right] \nonumber \\
	=& \mathbb{E}\bigg[ \Big \| (O_{t} - \bar{O}_{t})\left
	[I + \gamma \left(W -I \right) \right] \nonumber \\
	 & \quad \quad  
	 + \gamma (\hat{X}_{t-1} - O_{t})(W-I) \Big \|_{F}^{2} \bigg] \nonumber \\
	 \leq & (1+c_{1}) \|O_{t} - \bar{O}_{t} \|_{F}^{2}  \left\| I + \gamma \left(W -I \right) \right\|_{2}^{2}  \nonumber \\
	 & + \left(1+ \frac{1}{c_{1}}\right) \gamma^{2} \|(\hat{X}_{t-1} - O_{t})(W-I)\|_{F}^{2}, 
\end{align}
where the last inequality is due to $\|a+b\|^{2} \leq (1+\nu)\|a\|^{2} + (1+\nu^{-1})\|b\|^{2}$ for any given vectors $a, b$, and $\nu > 0$. 
For the first term of~\eqref{eq: x_bar_2}, we get
\begin{align*}
	& \|(O_{t} - \bar{O}_{t})\left
	[I + \gamma \left(W -I \right) \right] \|_{F} \\
	\leq & (1-\gamma) \| O_{t} - \bar{O}_{t} \|_{F} + \gamma \| (O_{t} - \bar{O}_{t})W\|_{F} \\
	= & (1-\gamma) \| O_{t} - \bar{O}_{t} \|_{F} + \gamma \left \| (O_{t} - \bar{O}_{t}) \left(W - \frac{1}{n}\mathbf{1}\mathbf{1}^{\top} \right) \right\|_{F} \\
	\leq & (1-\gamma \rho)\| O_{t} - \bar{O}_{t} \|_{F},
\end{align*}
where the last inequality holds from $\|AB \|_{F} \leq \|A\|_{F}\|B\|_{2}$ for given matrix $A, B$, and Assumption~\ref{assum: W}.
Therefore, we obtain
\begin{align} \label{eq: x_bar_4}
	\mathbb{E}\left[ \left\| X_{t} - \bar{X}_{t} \right\|_{F}^{2} \right]
	\leq & (1+c_{1}) (1-\gamma \rho)^{2} \|O_{t} - \bar{O}_{t} \|_{F}^{2} \nonumber \\
	 &+ \left(1+ \frac{1}{c_{1}}\right) \gamma^{2} \phi^{2} \left\| \hat{X}_{t-1} - O_{t}\right \|_{F}^{2}.
\end{align} 

%For the $\Xi_{1}$, we have
%\begin{align*}
%	\Xi_{1} =& \mathbb{E}\left[ \left\| I + \gamma (W-I) - \gamma (W-I) + \gamma (W-I)\boldsymbol{\eta}_{t-1} \right\|_{F}^{2} \right] \\
%	\leq & 2 (1-\gamma \rho)^{2} + 2\gamma^{2}\rho^{2} \mathbb{E}\left[\| I - \boldsymbol{\eta}_{t-1} \|_{F}^{2} \right] \\
%	=& 2 (1-\gamma \rho)^{2} + 2\gamma^{2}\rho^{2}n(1-p)
%\end{align*}
%Therefore, we have
%\begin{align} \label{eq: x_bar_2}
%	& \mathbb{E}\left[ \left\| X_{t} - \bar{X}_{t} \right\|_{F}^{2} \right] \nonumber \\
%	\leq & (1+c_{1})\left[2 (1-\gamma \rho)^{2} + 2\gamma^{2}\rho^{2}n(1-p) \right]\|O_{t} - \bar{O}_{t} \|_{F}^{2} \nonumber \\
%	& + \left(1+ \frac{1}{c_{1}}\right)np \gamma^{2} \rho^{2} \left\| \hat{X}_{t-1} - O_{t}\right \|_{F}^{2}
%\end{align}

For the second term of~\eqref{eq: x_bar_1}, we have
\begin{align} \label{eq: x_bar_3}
	& \mathbb{E}\left[ \left\| X_{t} - \hat{X}_{t} \right\|_{F}^{2} \right] \nonumber \\
	=& \mathbb{E}\left[ \left\| X_{t} - \hat{X}_{t-1} - \mathcal{S}(X_{t} - \hat{X}_{t-1})\boldsymbol{\eta}_{t-1} \right\|_{F}^{2} \right] \nonumber \\
	= & \sum_{i=1}^{n}\mathbb{E}\left[ \left\| x_{i,t} - \hat{x}_{i,t-1} - \mathcal{S}(x_{i,t} - \hat{x}_{i,t-1}) \eta_{i,t-1} \right \|^{2} \right] \nonumber \\
	\leq & \left(1-\frac{pk}{d} \right) \sum_{i=1}^{n} \| x_{i,t} - \hat{x}_{i,t-1} \|^{2} \nonumber \\
	 =& \left(1-\frac{pk}{d} \right)\| X_{t} - \hat{X}_{t-1} \|_{F}^{2} \nonumber \\
	=& \left(1- \frac{pk}{d} \right) \left\| O_{t} + \gamma \hat{X}_{t-1}(W-I) - \hat{X}_{t-1} \right \|^{2} \nonumber \\
	=& \left(1-\frac{pk}{d} \right) \Big\| (O_{t} - \hat{X}_{t-1})[I - \gamma(W-I)] \nonumber \\
	& \quad \quad \quad \quad \quad -\gamma (O_{t} - \bar{O}_{t})(W-I) \Big \|_{F}^{2} \nonumber \\
	\leq & \left(1-\frac{pk}{d} \right)(1+c_{2})(1+\gamma \phi)^{2}\| O_{t} - \hat{X}_{t-1} \|_{F}^{2} \nonumber \\
	& + \left(1- \frac{pk}{d} \right)\left(1+ \frac{1}{c_{2}} \right)\gamma^{2}\phi^{2}\|O_{t} - \bar{O}_{t} \|_{F}^{2}
\end{align}
Combining~\eqref{eq: x_bar_4} and~\eqref{eq: x_bar_3}, we have
\begin{small}
\begin{align*}
	& \mathbb{E}\left[ \left\| X_{t} - \bar{X}_{t} \right\|_{F}^{2} \right] + \mathbb{E}\left[ \left\| X_{t} - \hat{X}_{t} \right\|_{F}^{2} \right] \\
	\leq & \left[(1+c_{1}) (1-\gamma  \rho)^{2} + \left(1-\frac{pk}{d} \right)\left(1+ \frac{1}{c_{2}} \right)\gamma^{2}\phi^{2} \right] \|O_{t} - \bar{O}_{t} \|_{F}^{2} \\
	 & + \left[ \left(1+ \frac{1}{c_{1}}\right) \gamma^{2} \phi^{2} + \left(1-\frac{pk}{d} \right)(1+c_{2})(1+\gamma \phi)^{2} \right] \| O_{t} - \hat{X}_{t-1} \|_{F}^{2}.
\end{align*}
\end{small}
Similar to~\cite{koloskova2019decentralized}, by choosing $c_{1} = \frac{\gamma \rho}{2}$ and $c_{2} = \frac{p k}{2d}$, and $\gamma = \frac{\rho p k}{d(16\rho + \rho^{2} + 4 \phi^{2} + 2\rho \phi^{2}) - 8\rho p k}$, we can get $c = \frac{\rho^{2}p k}{82d}<1$ such that
\begin{align*}
	& \mathbb{E}\left[ \left\| X_{t} - \bar{X}_{t} \right\|_{F}^{2} \right] + \mathbb{E}\left[ \left\| X_{t} - \hat{X}_{t} \right\|_{F}^{2} \right] \\
	\leq & (1-c) \|O_{t} - \bar{O}_{t} \|_{F}^{2} + (1-c)\| O_{t} - \hat{X}_{t-1} \|_{F}^{2} \\
	=& (1-c) \mathbb{E} \left[\left \|X_{t-1} - \bar{X}_{t-1} - \alpha M_{t}\boldsymbol{\eta}_{t-1}\left(I - \frac{1}{n}\mathbf{1} \mathbf{1}^{\top} \right) \right \|_{F}^{2} \right] \\
	&+ (1-c) \mathbb{E} \left[\left \|X_{t-1} - \alpha M_{t}\boldsymbol{\eta}_{t-1} - \hat{X}_{t-1} \right \|_{F}^{2} \right] \\
	\leq & (1-c)\left(1+\frac{1}{c_{3}} \right) \mathbb{E}\left[ \left \|X_{t-1} - \bar{X}_{t-1} \right \|_{F}^{2} + \left \|X_{t-1} - \hat{X}_{t-1} \right \|_{F}^{2} \right] \\
 &+ (1-c)(1+c_{3})\alpha^{2}\mathbb{E}\left[ 2 \|M_{t} \boldsymbol{\eta}_{t-1} \|_{F}^{2} \right]
	%&+ (1-c)(1+c_{3})\alpha^{2}\mathbb{E}\left[ \left\| M_{t}\boldsymbol{\eta}_{t-1}\left(I - \frac{1}{n}\mathbf{1} \mathbf{1}^{\top} \right) \right \|_{F}^{2} + \|M_{t} \boldsymbol{\eta}_{t-1} \|_{F}^{2} \right]
\end{align*}
By setting $c_{3} = \frac{2}{c}$ and based on Lemma~\ref{lem: m}, we have
\begin{align*}
	& \mathbb{E}\left[ \left\| X_{t} - \bar{X}_{t} \right\|_{F}^{2} \right] + \mathbb{E}\left[ \left\| X_{t} - \hat{X}_{t} \right\|_{F}^{2} \right] \\
	\leq & \left(1-\frac{c}{2} \right) \mathbb{E}\left[ \left\| X_{t-1} - \bar{X}_{t-1} \right\|_{F}^{2} + \left\| X_{t-1} - \hat{X}_{t-1} \right\|_{F}^{2} \right] \\
	& + \frac{4p\alpha^{2}(G^{2}+ \sigma^{2}d)n}{c(1-\beta)^{2}} \\
	\leq & \frac{8 p \alpha^{2}(G^{2}+ \sigma^{2}d)n}{c^{2}(1-\beta)^{2}},
\end{align*}
where the last step is due to the recursive expansion. Therefore, we have
\begin{align*}
	& \sum_{i=1}^{n}\mathbb{E}\left[ \|\bar{x}_{t} - x_{i,t} \|^{2} \right] = \mathbb{E}\left[ \left\| X_{t} - \bar{X}_{t} \right\|_{F}^{2} \right] \nonumber \\
 \leq &  \frac{8 p \alpha^{2}(G^{2}+ \sigma^{2}d)n}{c^{2}(1-\beta)^{2}}.
	%\leq & \frac{4\alpha^{2}(G^{2} + \sigma^{2}d)n}{(1-\beta)^{2}}\left(1 + \frac{4}{c^{2}} \right)
\end{align*}

%%%%%%%%%%%%%%%%%%%%%%%%%%%%%%%

\section{Proof of Theorem~\ref{thm: convergence}}
%\labelsubseccounter{app: thm_convergence} 
\label{app: thm_convergence}
According to~\eqref{eq: x_update}, we have
\begin{equation*}
	\bar{x}_{t+1} = \bar{x}_{t} - \frac{\alpha}{n}\sum_{i=1}^{n} \eta_{i,t} m_{i,t+1}.
\end{equation*}
We introduce an auxiliary variable as follows:
\begin{equation*}
	z_{t} = \left\{
	\begin{array}{cl}
		\bar{x}_{t}, & t=0, \\
		\frac{1}{1-\beta}\bar{x}_{t} - \frac{\beta}{1-\beta}\bar{x}_{t-1} , & t \geq 1. \\
	\end{array}
	\right.
\end{equation*}
Then, we can derive the following relationships:
\begin{align} \label{eq: zx}
	z_{t} - \bar{x}_{t}&= \frac{\beta}{1-\beta}(\bar{x}_{t} - \bar{x}_{t-1}) \nonumber \\
	&= -\frac{\alpha\beta}{n(1-\beta)}\sum_{i=1}^{n} \eta_{i,t} m_{i,t+1},
\end{align}
and
\begin{equation} \label{eq: z_update}
\begin{aligned}
	& z_{t+1} - z_{t} \\
	=& \frac{1}{1-\beta}\bar{x}_{t+1} - \frac{1+\beta}{1-\beta}\bar{x}_{t} + \frac{\beta}{1-\beta}\bar{x}_{t-1} \\
	=& -\frac{\alpha}{n(1-\beta)}\sum_{i=1}^{n}[\eta_{i,t}m_{i,t+1} - \beta \eta_{i,t-1}m_{i,t}] \\
	=& -\frac{\alpha}{n(1-\beta)}\sum_{i=1}^{n}\eta_{i,t} (g_{i,t} + \theta_{i,t}) \\
	& - \frac{\alpha\beta}{n(1-\beta)}\sum_{i=1}^{n}(\eta_{i,t}m_{i,t} - \eta_{i,t-1}m_{i,t}).
\end{aligned}  
\end{equation}
We first show some results about $\mathbb{E}\left[ \left\| z_{t} - \bar{x}_{t}\right\|^{2} \right]$ and $\mathbb{E}\left[ \left\| z_{t+1} - z_{t}\right\|^{2} \right]$.

\begin{lemma} \label{lem: zx}
Under Assumption~\ref{assum: f}, for the generated data $\{x_{i,t}, i \in [i] \}$ under Algorithm~\ref{algo: one}, we have that for $t \geq 1$, 
\begin{align*}
	&\mathbb{E}[\|z_{t} - \bar{x}_{t} \|^{2}] \\
\leq & \frac{\alpha^{2}\beta^{2}( \varsigma^{2} + \sigma^{2}d)p}{(1-\beta)^{4}n} \\
&+ \frac{\alpha^{2} \beta^{2}}{(1-\beta)^{3}} \mathbb{E}\left[ \sum_{t^{\prime}=0}^{t} \beta^{t-t^{\prime}}\left\|\frac{1}{n}\sum_{i=1}^{n} \eta_{i,t^{\prime}} \nabla f_{i}(x_{i,t^{\prime}}) \right\|^{2} \right].
\end{align*}
\end{lemma}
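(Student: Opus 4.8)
The plan is to start from the explicit expression for $z_t - \bar{x}_t$ already derived in~\eqref{eq: zx}, namely $z_t - \bar{x}_t = -\frac{\alpha\beta}{n(1-\beta)}\sum_{i=1}^{n}\eta_{i,t}m_{i,t+1}$, and to unroll the momentum recursion~\eqref{eq: m_update} to write $m_{i,t+1}$ as a geometric sum $\sum_{t'=0}^{t}\beta^{t-t'}\eta_{i,t'}(g_{i,t'}+\theta_{i,t'})$. Substituting this in gives $z_t - \bar{x}_t$ as a double sum over agents $i$ and past times $t'$ of the perturbed stochastic gradients, weighted by $\beta$-powers and by the activation indicators $\eta$. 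The target bound has two pieces, so the natural move is to split each stochastic gradient as $g_{i,t'} = \nabla f_i(x_{i,t'}) + (g_{i,t'} - \nabla f_i(x_{i,t'}))$, separating the mean-zero sampling noise plus injected Gaussian noise from the deterministic conditional-mean gradient term.

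First I would take the conditional expectation and use independence across agents of the zero-mean fluctuations $g_{i,t'}-\nabla f_i(x_{i,t'})$ (variance bounded by $\varsigma^2$ via Assumption~\ref{assum: f}\,(ii)) and the injected noise $\theta_{i,t'}$ (variance $\sigma^2 d$), together with the activation probability $\mathbb{E}[\eta_{i,t'}^2]=p$. Because these terms are mean-zero and independent across $i$, their cross terms vanish, and the $\frac{1}{n^2}$ prefactor combined with summing $n$ independent variance-$(\varsigma^2+\sigma^2 d)$ contributions produces the $\frac{1}{n}$ scaling and the $(\varsigma^2+\sigma^2 d)p$ factor; the $\beta$-power weights summed geometrically yield the $\frac{1}{(1-\beta)^4}$ denominator after accounting for the $\frac{\alpha^2\beta^2}{(1-\beta)^2}$ prefactor. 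This accounts for the first term on the right-hand side. For the remaining deterministic part, I would apply the Cauchy–Schwarz / Jensen inequality to the geometric sum (exactly the convexity-of-norm trick used in the proof of Lemma~\ref{lem: m}), pulling out one factor of $\sum_{t'}\beta^{t-t'}\le \frac{1}{1-\beta}$ and leaving the weighted sum $\sum_{t'=0}^{t}\beta^{t-t'}\|\frac{1}{n}\sum_i \eta_{i,t'}\nabla f_i(x_{i,t'})\|^2$ inside the expectation, which reproduces the second term with its $\frac{\alpha^2\beta^2}{(1-\beta)^3}$ coefficient.

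The main obstacle I anticipate is the careful separation of the stochastic and deterministic contributions under the correct filtration: the activation variables $\eta_{i,t'}$, the sampling randomness in $g_{i,t'}$, and the Gaussian $\theta_{i,t'}$ are all independent, but $\nabla f_i(x_{i,t'})$ depends on the history, so one must condition appropriately and verify that all cross terms between the mean-zero noise and the gradient term genuinely vanish in expectation rather than merely being bounded. Getting the constants to land exactly on $\frac{1}{(1-\beta)^4}$ versus $\frac{1}{(1-\beta)^3}$ requires tracking whether the geometric weighting is applied once (for the deterministic Jensen step) or effectively twice (for the independent-variance step), and it is here that an off-by-one power of $(1-\beta)$ is easiest to introduce. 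Once the bookkeeping of independence and $\beta$-powers is pinned down, the remaining manipulations are routine, so I would not grind through them in detail.
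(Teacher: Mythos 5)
Your ingredients are the right ones---unrolling the momentum via \eqref{eq: m_update}, splitting $g_{i,t'} = \nabla f_i(x_{i,t'}) + (g_{i,t'}-\nabla f_i(x_{i,t'}))$, exploiting zero-mean independence across agents for the fluctuation-plus-Gaussian part, and Jensen with geometric weights for the gradient part---but the order in which you deploy them creates a genuine gap. You split the \emph{entire} double sum $\sum_{t'}\beta^{t-t'}\frac{1}{n}\sum_i \eta_{i,t'}(g_{i,t'}+\theta_{i,t'})$ into a noise part $A$ and a deterministic part $B$ first, and then treat $\mathbb{E}[\|A\|^{2}]$ and $\mathbb{E}[\|B\|^{2}]$ separately; for the stated constants this requires $\mathbb{E}[\langle A,B\rangle]=0$. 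That cross term does \emph{not} vanish: it contains lagged contributions $\mathbb{E}[\langle \eta_{i,t'}(\xi_{i,t'}+\theta_{i,t'}), \eta_{j,t''}\nabla f_j(x_{j,t''})\rangle]$ with $t''>t'$, and $x_{j,t''}$ depends on the noise realized at time $t'$, which propagates through $m_{j,t'+1}$ into all later iterates. Conditioning cannot kill these terms: heuristically, $\nabla f_j(x_{j,t'+1}) \approx \nabla f_j(x_{j,t'}) - \alpha\eta_{j,t'}\nabla^{2} f_j(x_{j,t'})\theta_{j,t'} + \cdots$, so $\mathbb{E}[\langle\theta_{j,t'},\nabla f_j(x_{j,t'+1})\rangle]$ is of order $\alpha\sigma^{2}d$ and generically nonzero. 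The only generic remedy---Cauchy--Schwarz or Young's inequality on the cross term---inflates both coefficients (e.g.\ the second term would acquire a factor $2$), proving a strictly weaker bound than the lemma states.

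The paper's proof avoids this by reversing your two steps. It first applies the convexity/Jensen step across the time sum (exactly the trick from Lemma~\ref{lem: m}), obtaining $\mathbb{E}\left[\left\|\frac{1}{n}\sum_i\eta_{i,t}m_{i,t+1}\right\|^{2}\right] \le \frac{1}{1-\beta}\sum_{t'}\beta^{t-t'}\,\mathbb{E}\left[\left\|\frac{1}{n}\sum_i \eta_{i,t'}(g_{i,t'}+\theta_{i,t'})\right\|^{2}\right]$, and only \emph{then} splits each fixed-$t'$ summand into mean and fluctuation. Within a single time slice the cross term pairs noise drawn at $t'$ with $\nabla f_i(x_{i,t'})$, which is measurable with respect to the history available before that noise is drawn, so conditioning makes it vanish exactly; independence across agents then gives the $\frac{p(\varsigma^{2}+\sigma^{2}d)}{n}$ term, and resumming the geometric weights produces the $(1-\beta)^{-4}$ and $(1-\beta)^{-3}$ powers. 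If you swap your ordering---Jensen over $t'$ first, split second---your argument goes through and coincides with the paper's proof.
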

\begin{proof}
	The proof is provided in Appendix~\ref{app: lem_zx}.
\end{proof}

\begin{lemma} \label{lem: z_update}
	Under Assumption~\ref{assum: f}, for the generated data $\{x_{i,t}, i \in [i] \}$ under Algorithm~\ref{algo: one}, we have that for $t \geq 1$, there is
	\begin{equation*}
	\begin{aligned}
		&\mathbb{E}\left[ \|z_{t+1} - z_{t} \|^{2} \right]\\
		 \leq & \frac{2p \alpha^{2} (\varsigma^2{ + \sigma^{2}d})}{n(1-\beta)^{2}} + \frac{2\alpha^{2}}{(1-\beta)^{2}} \mathbb{E}\left[ \left\| \frac{1}{n} \sum_{i=1}^{n} \eta_{i,t} \nabla f_{i}(x_{i,t}) \right\|^{2} \right] \\
		 &+ \frac{4\alpha^{2} \beta^{2}(G^{2}+\sigma^{2}d)p^{2}}{n(1-\beta)^{4}}
	\end{aligned}
	\end{equation*}
\end{lemma}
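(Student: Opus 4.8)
The plan is to work directly from the explicit increment~\eqref{eq: z_update}, viewing $z_{t+1}-z_{t}$ as the sum of a \emph{stochastic-gradient-plus-noise} term $-\tfrac{\alpha}{n(1-\beta)}\sum_{i}\eta_{i,t}(g_{i,t}+\theta_{i,t})$ and a \emph{momentum-difference} term $-\tfrac{\alpha\beta}{n(1-\beta)}\sum_{i}(\eta_{i,t}-\eta_{i,t-1})m_{i,t}$. First I would apply $\|a+b\|^{2}\le 2\|a\|^{2}+2\|b\|^{2}$ to separate $\mathbb{E}[\|z_{t+1}-z_{t}\|^{2}]$ into two contributions, which will produce respectively the first two terms and the last term of the claim. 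The common prefactor $2\alpha^{2}/(n^{2}(1-\beta)^{2})$ (resp. $2\alpha^{2}\beta^{2}/(n^{2}(1-\beta)^{2})$) is then tracked through each contribution.

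For the gradient-plus-noise term I would condition on the filtration $\mathcal{F}_{t}$ generated by the history up to $x_{i,t}$ together with the activations $\{\eta_{i,t}\}$, and decompose $g_{i,t}=\nabla f_{i}(x_{i,t})+\xi_{i,t}$ with $\xi_{i,t}:=g_{i,t}-\nabla f_{i}(x_{i,t})$. Since $\xi_{i,t}$ and $\theta_{i,t}$ are zero-mean and independent across agents and of each other, the cross term with $\sum_{i}\eta_{i,t}\nabla f_{i}(x_{i,t})$ vanishes in conditional expectation, giving $\mathbb{E}\|\sum_{i}\eta_{i,t}(g_{i,t}+\theta_{i,t})\|^{2}=\mathbb{E}\|\sum_{i}\eta_{i,t}\nabla f_{i}(x_{i,t})\|^{2}+\sum_{i}\eta_{i,t}^{2}\,\mathbb{E}\|\xi_{i,t}+\theta_{i,t}\|^{2}$. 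Using idempotence $\eta_{i,t}^{2}=\eta_{i,t}$ with $\mathbb{E}[\eta_{i,t}]=p$, the variance bound $\mathbb{E}\|\xi_{i,t}\|^{2}\le\varsigma_{i}^{2}$ from Assumption~\ref{assum: f}, $\mathbb{E}\|\theta_{i,t}\|^{2}=\sigma^{2}d$, and writing $\varsigma^{2}=\tfrac1n\sum_{i}\varsigma_{i}^{2}$, the prefactor converts these two pieces into exactly $\tfrac{2\alpha^{2}}{(1-\beta)^{2}}\mathbb{E}\|\tfrac1n\sum_{i}\eta_{i,t}\nabla f_{i}(x_{i,t})\|^{2}$ and $\tfrac{2p\alpha^{2}(\varsigma^{2}+\sigma^{2}d)}{n(1-\beta)^{2}}$.

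The momentum-difference term is where the work lies. Conditioning on $\mathcal{F}_{t}$, each $\eta_{i,t}$ is a fresh $\mathrm{Bern}(p)$ variable independent across agents, so I would split $\sum_{i}(\eta_{i,t}-\eta_{i,t-1})m_{i,t}$ into its conditional mean $\sum_{i}(p-\eta_{i,t-1})m_{i,t}$ and the fluctuation $\sum_{i}(\eta_{i,t}-p)m_{i,t}$; the latter has conditional second moment $p(1-p)\sum_{i}\|m_{i,t}\|^{2}$, which with Lemma~\ref{lem: m} and $p(1-p)\le\tfrac14$ contributes an $O(n)$ quantity. The main obstacle is the conditional-mean term: because $m_{i,t}$ depends on $\eta_{i,t-1}$ through~\eqref{eq: m_update}, the summands $(p-\eta_{i,t-1})m_{i,t}$ are not exactly centered, so one cannot naively treat them as independent zero-mean vectors, and a crude $\|\sum_{i}a_{i}\|^{2}\le n\sum_{i}\|a_{i}\|^{2}$ bound would lose an order of $n$. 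Exploiting that distinct agents draw independent activations and gradients, and again invoking Lemma~\ref{lem: m}, I would control this term at the $O(n)$ level; it is precisely this cross-agent near-independence that delivers the factor $1/n$. The condition $p\ge\tfrac12$ is used to fix the sign of the coefficient $p+(1-2p)\eta_{i,t-1}\le p$ that appears after conditioning, keeping all constants at the level $p^{2}$. Substituting the prefactor $2\alpha^{2}\beta^{2}/(n^{2}(1-\beta)^{2})$ then yields $\tfrac{4\alpha^{2}\beta^{2}(G^{2}+\sigma^{2}d)p^{2}}{n(1-\beta)^{4}}$, and summing the two contributions completes the proof.
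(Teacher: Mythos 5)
Your first two steps coincide with the paper's own proof: the same $\|a+b\|^{2}\le 2\|a\|^{2}+2\|b\|^{2}$ split of~\eqref{eq: z_update}, and the same zero-mean/independence decomposition of the gradient-plus-noise term into $\frac{2p\alpha^{2}(\varsigma^{2}+\sigma^{2}d)}{n(1-\beta)^{2}}$ plus the mean-gradient term. The gap is in your third step, and it is genuine. You are right that $m_{i,t}$ depends on $\eta_{i,t-1}$ through~\eqref{eq: m_update}, so the summands $(p-\eta_{i,t-1})m_{i,t}$ are not centered; but your assertion that ``cross-agent near-independence delivers the factor $1/n$'' for this part is false. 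Let $\mathcal{F}_{t-1}$ denote the history that determines $x_{i,t-1}$ and $m_{i,t-1}$, so that $\eta_{i,t-1},\zeta_{i,t-1},\theta_{i,t-1}$ are fresh and mutually independent given $\mathcal{F}_{t-1}$. Then, using $m_{i,t}=\eta_{i,t-1}(g_{i,t-1}+\theta_{i,t-1})+\beta m_{i,t-1}$,
\begin{equation*}
\mathbb{E}\bigl[(p-\eta_{i,t-1})m_{i,t}\mid\mathcal{F}_{t-1}\bigr]
=\mathbb{E}\bigl[(p-\eta_{i,t-1})\eta_{i,t-1}\bigr]\,\nabla f_{i}(x_{i,t-1})
=-p(1-p)\,\nabla f_{i}(x_{i,t-1})\neq 0 .
\end{equation*}
Conditional independence across agents only removes the cross-covariances of the fluctuations around these conditional means; it does nothing to the means themselves, which add coherently. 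By conditional Jensen,
\begin{equation*}
\mathbb{E}\Bigl[\Bigl\|\sum_{i=1}^{n}(p-\eta_{i,t-1})m_{i,t}\Bigr\|^{2}\Bigr]
\;\ge\; p^{2}(1-p)^{2}\,\mathbb{E}\Bigl[\Bigl\|\sum_{i=1}^{n}\nabla f_{i}(x_{i,t-1})\Bigr\|^{2}\Bigr],
\end{equation*}
and the right-hand side is $\Theta(n^{2})$ whenever the local gradients align, e.g.\ at $t=1$, where $x_{i,0}=x_{0}$ for all $i$ and the sum equals $n\nabla f(x_{0})$. After the prefactor $2\alpha^{2}\beta^{2}/(n^{2}(1-\beta)^{2})$ this leaves a contribution of order $\alpha^{2}\beta^{2}p^{2}(1-p)^{2}\|\nabla f(x_{0})\|^{2}/(1-\beta)^{2}$ with no $1/n$ decay, which cannot be absorbed into the lemma's third term $\frac{4\alpha^{2}\beta^{2}(G^{2}+\sigma^{2}d)p^{2}}{n(1-\beta)^{4}}$ for large $n$. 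So the plan as written cannot be completed.

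For comparison, the paper's proof takes exactly the ``naive'' route you rejected: it tabulates the law of $(\eta_{i,t}-\eta_{i,t-1})m_{i,t}$ as if $\eta_{i,t-1}$ were independent of $m_{i,t}$, concludes $\mathbb{E}[(\eta_{i,t}-\eta_{i,t-1})m_{i,t}]=0$, bounds the per-agent second moment via Lemma~\ref{lem: m}, and sums over agents with no cross terms. In other words, the dependence you flagged is precisely what the paper's argument overlooks, and ignoring it is what produces the clean $O(n)$ bound there. Your instinct about the subtlety is sound, but your proposed repair replaces an unjustified step with a false one: to close the argument you would either have to justify discarding the coherent conditional-mean term (which, by the computation above, is not small in general), or carry it along as an extra gradient-dependent term, which changes the statement of the lemma itself.
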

\begin{proof}
	The proof is provided in Appendix~\ref{app: lem_zupdate}.
\end{proof}

According to the $L$-smoothness of $f$, we have
\begin{align} \label{eq: final_1}
	&\mathbb{E}[f(z_{t+1})] \nonumber \\
	\leq & f(z_{t}) + \mathbb{E}\left[ \left< \nabla f(z_{t}), z_{t+1} - z_{t} \right> \right] + \frac{L}{2} \mathbb{E} \left[\|z_{t+1} - z_{t}\|^{2} \right] 
\end{align}
For the second term of~\eqref{eq: final_1}, we have
\begin{align*}
	& \mathbb{E}\left[ \left< \nabla f(z_{t}), z_{t+1} - z_{t} \right> \right] \\
	=& -\frac{\alpha}{n(1-\beta)} \mathbb{E}\left[ \left< \nabla f(z_{t}), \sum_{i=1}^{n}\eta_{i,t} (g_{i,t} + \theta_{i,t}) \right> \right] \\
	& - \frac{\alpha\beta}{n(1-\beta)} \mathbb{E}\left[ \left< \nabla f(z_{t}), \sum_{i=1}^{n}(\eta_{i,t} - \eta_{i,t-1})m_{i,t} \right> \right] \\
	=& -\frac{\alpha}{n(1-\beta)} \mathbb{E}\left[ \left< \nabla f(z_{t}), \sum_{i=1}^{n}\eta_{i,t}  \nabla f_{i}(x_{i,t}) \right> \right] \\
	=& -\frac{\alpha p}{1-\beta} \mathbb{E}\left[ \left< \nabla f(z_{t}) - \nabla f(\bar{x}_{t}) ,\frac{1}{n} \sum_{i=1}^{n}  \nabla f_{i}(x_{i,t}) \right> \right] \\
	& -\frac{\alpha p}{1-\beta} \mathbb{E}\left[ \left< \nabla f(\bar{x}_{t}) ,\frac{1}{n} \sum_{i=1}^{n}  \nabla f_{i}(x_{i,t}) \right> \right] \\
	\leq & \frac{\alpha p}{2\nu_{1}(1-\beta)} \mathbb{E}\left[ \| \nabla f(z_{t}) - \nabla f(\bar{x}_{t})\|^{2} \right] \\
	&+ \frac{\nu_{1}\alpha p}{2(1-\beta)}\mathbb{E}\left[ \left\| \frac{1}{n} \sum_{i=1}^{n}  \nabla f_{i}(x_{i,t}) \right \|^{2} \right]  \\
	&+ \frac{\alpha p}{2(1-\beta)} \mathbb{E}\left[ \left\| \nabla f(\bar{x}_{t}) - \frac{1}{n} \sum_{i=1}^{n}  \nabla f_{i}(x_{i,t}) \right\|^{2} \right] \\
	&- \frac{\alpha p}{2(1-\beta)}\mathbb{E}\left[ \left\| \nabla f(\bar{x}_{t}) \right\|^{2} \right] \\
	& - \frac{\alpha p}{2(1-\beta)}\mathbb{E}\left[ \left\|\frac{1}{n} \sum_{i=1}^{n}  \nabla f_{i}(x_{i,t}) \right\|^{2} \right].
\end{align*}
By setting $\nu_{1} = \frac{\alpha  \beta L}{(1-\beta)^{2}}$ and based on Assumption~\ref{assum: f}, we have
\begin{align} \label{eq: final_2}
	& \mathbb{E}\left[ \left< \nabla f(z_{t}), z_{t+1} - z_{t} \right> \right] \nonumber \\
	\leq & \frac{(1-\beta)Lp}{2\beta } \mathbb{E}\left[ \|z_{t} - \bar{x}_{t} \|^{2} \right] + \frac{\alpha p L^{2}}{2(1-\beta)n} \sum_{i=1}^{n}\mathbb{E}\left[ \|\bar{x}_{t} - x_{i,t} \|^{2} \right] \nonumber \\
	& + \left( \frac{\alpha^{2} p \beta L}{2(1-\beta)^{3}}- \frac{\alpha p}{2(1-\beta)} \right) \mathbb{E}\left[ \left\| \frac{1}{n} \sum_{i=1}^{n}  \nabla f_{i}(x_{i,t}) \right\|^{2} \right] \nonumber \\
	& - \frac{\alpha p}{2(1-\beta)}\mathbb{E}\left[ \left\| \nabla f(\bar{x}_{t}) \right\|^{2} \right] .
\end{align}

Then, we have
\begin{align*}
	& f(z_{t+1}) \\
\leq & f(z_{t}) - \frac{\alpha p}{2(1-\beta)}\mathbb{E}\left[ \left\| \nabla f(\bar{x}_{t}) \right\|^{2} \right] + \frac{(1-\beta)Lp}{2\beta } \mathbb{E}\left[ \|z_{t} - \bar{x}_{t} \|^{2} \right] \\
&+  \frac{\alpha p L^{2}}{2(1-\beta)n} \sum_{i=1}^{n}\mathbb{E}\left[ \|\bar{x}_{t} - x_{i,t} \|^{2} \right] \\
&+ \left( \frac{\alpha^{2} p \beta L}{2(1-\beta)^{3}}- \frac{\alpha p}{2(1-\beta)} \right) \mathbb{E}\left[ \left\| \frac{1}{n} \sum_{i=1}^{n}  \nabla f_{i}(x_{i,t}) \right\|^{2} \right] \\
& + \frac{p \alpha^{2} (\varsigma^2{ + \sigma^{2}d})L}{n(1-\beta)^{2}} + \frac{\alpha^{2}L}{(1-\beta)^{2}} \mathbb{E}\left[ \left\| \frac{1}{n} \sum_{i=1}^{n} \eta_{i,t} \nabla f_{i}(x_{i,t}) \right\|^{2} \right] \\
		 &+ \frac{2\alpha^{2} \beta^{2}(G^{2}+\sigma^{2}d)p^{2}L}{n(1-\beta)^{4}} \\
\leq & f(z_{t}) - \frac{\alpha p}{2(1-\beta)}\mathbb{E}\left[ \left\| \nabla f(\bar{x}_{t}) \right\|^{2} \right] +  \frac{\alpha^{2} \beta (\varsigma^{2}+\sigma^{2}d)Lp}{2(1-\beta)^{3}n} \\
&+ \frac{4\alpha^{3}p^{2}(G^{2} + \sigma^{2}d)L^{2}}{(1-\beta)^{3}c^{2}} \\ 
%\frac{2\alpha^{3}pL^{2}(G^{2} + \sigma^{2}d)}{(1-\beta)^{3}}\left(1+\frac{4}{c^{2}} \right) \\
&+ \left( \frac{\alpha^{2} p \beta L}{2(1-\beta)^{3}}- \frac{\alpha p}{2(1-\beta)} + \frac{\alpha^{2}Lp}{(1-\beta)^{2}} \right) \frac{1}{n}\sum_{i=1}^{n} \mathbb{E}\left[ \left\| \nabla f_{i}(x_{i,t}) \right\|^{2} \right] \\
&+ \frac{\alpha^{2}\beta Lp}{2(1-\beta)^{2}}\mathbb{E}\left[\sum_{t^{\prime}=0}^{t} \beta^{t-t^{\prime}} \left\| \frac{1}{n}\sum_{i=0}^{n} \eta_{i,t^{\prime}} \nabla f_{i}(x_{i,t^{\prime}}) \right\|^{2} \right] \\
& + \frac{p \alpha^{2} (\varsigma^2{ + \sigma^{2}d})L}{n(1-\beta)^{2}}  + \frac{2\alpha^{2} \beta^{2}(G^{2}+\sigma^{2}d)p^{2}L}{n(1-\beta)^{4}},
\end{align*}
Rearranging the above inequality and summing $t$ from $0$ to $T-1$, we can get
\begin{align*}
	& \frac{\alpha p}{2(1-\beta)} \sum_{t=0}^{T-1} \mathbb{E}\left[ \left\| \nabla f(\bar{x}_{t}) \right\|^{2} \right] \\
\leq & f(z_{0}) - \mathbb{E}[f(z_{T})] \\
 &+ \left[ \frac{\alpha^{2} \beta (\varsigma^{2}+\sigma^{2}d)Lp}{2(1-\beta)^{3}n} + \frac{4\alpha^{3}p^{2}(G^{2} + \sigma^{2}d)L^{2}}{(1-\beta)^{3}c^{2}}  \right]T \\
&+ \frac{1}{n} \left( \frac{\alpha^{2} p \beta L}{(1-\beta)^{3}}- \frac{\alpha p}{2(1-\beta)} + \frac{\alpha^{2}Lp}{(1-\beta)^{2}} \right) \sum_{i=1}^{n} \sum_{t=0}^{T-1} \mathbb{E}\left[ \left\| \nabla f_{i}(x_{i,t}) \right\|^{2} \right] \\
& + \left[ \frac{p \alpha^{2} (\varsigma^2{ + \sigma^{2}d})L}{n(1-\beta)^{2}}  + \frac{2\alpha^{2} \beta^{2}(G^{2}+\sigma^{2}d)p^{2}L}{n(1-\beta)^{4}} \right]T,
\end{align*}
where we use $\sum_{t=0}^{T-1}\mathbb{E}\left[\sum_{t^{\prime}=0}^{t} \beta^{t-t^{\prime}} \left\| \frac{1}{n}\sum_{i=0}^{n} \eta_{i,t^{\prime}} \nabla f_{i}(x_{i,t^{\prime}}) \right\|^{2} \right] \\ \leq \frac{1}{n(1-\beta)}\sum_{t=0}^{T-1}\sum_{i=0}^{n} \mathbb{E}\left[ \left\| \nabla f_{i}(x_{i,t}) \right\|^{2} \right]$.
By setting $\alpha < \frac{(1-\beta)^{2}}{2L}$, we can get
% \begin{align*}
% 	& \frac{1}{T} \sum_{t=0}^{T-1} \mathbb{E}\left[ \left\| \nabla f(\bar{x}_{t}) \right\|^{2} \right] \\
% \leq & \frac{2(1-\beta)(f(x_{0}) - f^{*})}{\alpha p T} + \frac{2L\alpha B (\varsigma^{2} + G^{2} + \sigma^{2}d)}{(1-\beta)n}.
% \end{align*}
\begin{align*}
    & \frac{1}{T} \sum_{t=0}^{T-1} \mathbb{E}\left[ \left\| \nabla f(\bar{x}_{t}) \right\|^{2} \right] \\
\leq & \frac{2(1-\beta)(f(x_{0}) - f^{*})}{\alpha p T} \\
&+ \frac{\alpha L(\beta + 2 + 4 \beta^{2}p)(G^{2} + \varsigma^{2} + \sigma^{2}d)}{n(1-\beta)^{3}} \\
& + \frac{8\alpha^{2}p(G^{2} + \sigma^{2}d)L^{2}}{(1-\beta)^{2}c^{2}}
\end{align*}

%where 
%\begin{align*}
%%	A =& \frac{\alpha \beta^{2} \varsigma^{2}L}{(1-\beta)^{2}n} 
%%	+  \frac{4\alpha^{2}L^{2}G^{2}}{(1-\beta)^{2}}\left(1+\frac{4}{c^{2}} \right) \\
%%	&+ \frac{2 \alpha \varsigma^2L}{n(1-\beta)}  
%%	+ \frac{4\alpha \beta^{2}G^{2}(1-p)L}{n(1-\beta)^{3}}, \\
%	A =& \frac{\alpha \beta^{2}L}{(1-\beta)^{2}n} 
%	+  \frac{4\alpha^{2}L^{2}}{(1-\beta)^{2}}\left(1+\frac{4}{c^{2}} \right) \\
%	&+ \frac{2 \alpha L}{n(1-\beta)}  
%	+ \frac{4\alpha \beta^{2}(1-p)L}{n(1-\beta)^{3}}.
%\end{align*}

%\subsection{Proof of Corollary~\ref{cor: optimal}}
%\labelsubseccounter{app: cor_opt} 
%Based on Theorem~\ref{thm: privacy}, we have
%\begin{align*}
%	& \frac{1}{T} \sum_{t=0}^{T-1} \mathbb{E}\left[ \left\| \nabla f(\bar{x}_{t}) \right\|^{2} \right] \\
%\leq & \frac{2(1-\beta)(f(x_{0}) - f^{*})}{\alpha p T} + A(\varsigma^{2} + G^{2}) \\
%& + A \frac{20 k p^{2} T \log(1.25/\delta_{0})G^{2}}{ q^{2} d \varepsilon^{2}}.
%\end{align*}
%Using $T = \mathcal{O}\left( \frac{(\varsigma^{2} + G^{2})q^{2}d \varepsilon^{2}}{20kp^{2}\log(1.25/\delta_{0})G^{2}} \right)$, we get the result shown in 

\section{Proof of Lemma~\ref{lem: zx}}
%\labelsubseccounter{app: lem_zx} 
\label{app: lem_zx}

Based on~\eqref{eq: zx}, we obtain
\begin{small}
	\begin{align*}
		&\mathbb{E}[\|z_{t} - \bar{x}_{t} \|^{2}] = \frac{\alpha^{2} \beta^{2}}{(1-\beta)^{2}}\mathbb{E}\left[ \left\| \frac{1}{n} \sum_{i=1}^{n} \eta_{i,t} m_{i,t+1} \right\|^{2} \right]
	\end{align*}
According to the definition of $m_{i,t}$, we have
\begin{align*}
	& \mathbb{E}\left[ \left\| \frac{1}{n} \sum_{i=1}^{n} \eta_{i,t} m_{i,t+1} \right\|^{2} \right] \\
	 %\leq & \mathbb{E}\left[ \left\| \frac{1}{n} \sum_{i=1}^{n} \eta_{i,t} m_{i,t+1} \right\|^{2} \right] \\
	 \leq & \mathbb{E} \left[ \left\| \sum_{t^{\prime}=0}^{t} \beta^{t-t^{\prime}} \frac{1}{n} \sum_{i=1}^{n} \eta_{i,t^{\prime}} (g_{i,t^{\prime}} + \theta_{i,t^{\prime}}) \right\|^{2} \right] \\
	 \leq & v_{t} \sum_{t^{\prime}=0}^{t} \beta^{t-t^{\prime}} \mathbb{E}\left[ \left\| \frac{1}{n} \sum_{i=1}^{n} \eta_{i,t^{\prime }}(g_{i,t^{\prime}} + \theta_{i,t^{\prime}}) \right\|^{2}  \right] \\
	 \leq & \frac{1}{1-\beta} \sum_{t^{\prime}=0}^{t} \beta^{t-t^{\prime}}  \left( \frac{p(\varsigma^{2} + \sigma^{2}d)}{n} + \mathbb{E}\left[\left\| \frac{1}{n} \sum_{i=1}^{n} \eta_{i,t^{\prime}} \nabla f_{i}(x_{i,t^{\prime}}) \right\|^{2} \right] \right)  \\
	 \leq & \frac{p(\varsigma^{2} + \sigma^{2}d)}{n(1-\beta)^{2}} + \frac{1}{1-\beta}\mathbb{E}\left[\sum_{t^{\prime}=0}^{t} \beta^{t-t^{\prime}} \left\| \frac{1}{n} \sum_{i=1}^{n} \eta_{i,t^{\prime}} \nabla f_{i}(x_{i,t^{\prime}}) \right\|^{2}  \right].
\end{align*}
\end{small}
Therefore, we get the result shown in Lemma~\ref{lem: zx}.

\section{Proof of Lemma~\ref{lem: z_update}}
%\labelsubseccounter{app: lem_zupdate}
\label{app: lem_zupdate}

Based on the update of $z_{t}$ in~\eqref{eq: z_update}, we have
\begin{align} \label{eq: z_1}
	& \mathbb{E}\left[\|z_{t+1} - z_{t} \|^{2}\right] \nonumber \\
	\leq & \frac{2\alpha^{2}}{(1-\beta)^{2}}\mathbb{E}\left[ \left\| \frac{1}{n} \sum_{i=1}^{n} \eta_{i,t}(g_{i,t} + \theta_{i,t}) \right\|^{2} \right] \nonumber \\
	& + \frac{2\alpha^{2}\beta^{
    2}}{n^{2}(1-\beta)^{2}}\mathbb{E}\left[ \left\| \sum_{i=1}^{n} m_{i,t}(\eta_{i,t} - \eta_{i,t-1}) \right\|^{2} \right] 
	%\leq & \frac{2\alpha^{2}p(\varsigma^{2} + \sigma^{2})}{n(1-\beta)^{2}} + 
\end{align}
For the first term of~\eqref{eq: z_1}, we have
\begin{align*}
	& \mathbb{E}\left[ \left\| \frac{1}{n} \sum_{i=1}^{n} \eta_{i,t}(g_{i,t} + \theta_{i,t}) \right\|^{2} \right] \\
	=& \mathbb{E}\left[ \left\| \frac{1}{n} \sum_{i=1}^{n} \eta_{i,t}(g_{i,t} - \nabla f_{i}(x_{i,t}) ) \right\|^{2}\right] \\
	&+ \mathbb{E}\left[ \left\| \frac{1}{n} \sum_{i=1}^{n} \eta_{i,t} \nabla f_{i}(x_{i,t}) \right\|^{2} \right] + \mathbb{E}\left[ \left\| \frac{1}{n} \sum_{i=1}^{n} \eta_{i,t} \theta_{i,t} \right\|^{2} \right] \\
	\leq & \frac{p(\varsigma^2{ + \sigma^{2}d})}{n} + \mathbb{E}\left[ \left\| \frac{1}{n} \sum_{i=1}^{n} \eta_{i,t} \nabla f_{i}(x_{i,t}) \right\|^{2} \right].
\end{align*}
For the second term of~\eqref{eq: z_1}, we have
\begin{equation*}
	(\eta_{i,t} - \eta_{i,t-1})m_{i,t} = \left\{
	\begin{array}{cl}
		0, & w.p. \ p^{2}+(1-p)^{2}, \\
		m_{i,t}, & w.p.\ p(1-p), \\
		-m_{i,t}, & w.p. \ p(1-p),
	\end{array}
	\right.
\end{equation*}
and thus, $\mathbb{E}[(\eta_{i,t} - \eta_{i,t-1})m_{i,t}]=0$, and $\mathbb{E}[\| (\eta_{i,t} - \eta_{i,t-1})m_{i,t} \|^{2}]\leq \frac{2p(1-p)(G+\sigma^{2}d)}{(1-\beta)^{2}} \leq \frac{2p^{2}(G+\sigma^{2}d)}{(1-\beta)^{2}}$ based on Lemma~\ref{lem: m}. Therefore, we get the result shown in Lemma~\ref{lem: z_update}.

\section{Proof of Corollary~\ref{cor: optimal}}
\label{app: cor_optimal} 
To obtain the final convergence rate, we carefully tune the step size. First, we consider an auxiliary lemma.
% \begin{lemma} \label{lem: step_tune}
%     For any parameters $r_{0} \geq 0$, $b \geq 0$, $h \geq 0$, $u \geq 0$, $d \geq 0$, there exists a constant step size $\alpha \leq \frac{1}{d}$ such that
%     \begin{align*}
%         \Phi_{T} :=& \frac{r_{0}}{\alpha (T+1)} + b \alpha + h \eta^{2} + u \eta^{3} \\
%         \leq & 2 \left( \frac{br_{0}}{T+1}\right)^{\frac{1}{2}} + 2h^{\frac{1}{3}} \left( \frac{r_{0}}{T+1} \right)^{\frac{2}{3}} + 2u^{\frac{1}{4}}\left( \frac{r_{0}}{T+1} \right)^{\frac{3}{4}} \\
%         & + \frac{dr_{0}}{T+1}.
%     \end{align*}
% \end{lemma}
% \begin{proof}
%     The proof is provided in Appendix~\ref{app: step_tune}.
% \end{proof}
\begin{lemma}\cite{koloskova2019decentralized} \label{lem: step_tune}
        For any parameters $r_{0} \geq 0$, $b \geq 0$, $h \geq 0$, $d \geq 0$, there exists a constant step size $\alpha \leq \frac{1}{d}$ such that
    \begin{align*}
        \Phi_{T} :=& \frac{r_{0}}{\alpha (T+1)} + b \alpha + h \alpha^{2} \\
        \leq & 2 \left( \frac{br_{0}}{T+1}\right)^{\frac{1}{2}} + 2h^{\frac{1}{3}} \left( \frac{r_{0}}{T+1} \right)^{\frac{2}{3}} + \frac{dr_{0}}{T+1}.
    \end{align*}
\end{lemma}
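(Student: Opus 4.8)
The plan is to choose the constant step size as the minimum of three candidates, each of which balances the decreasing term $\frac{r_{0}}{\alpha(T+1)}$ against one of the increasing contributions while respecting the cap $\alpha \le \frac{1}{d}$. Explicitly, I would take
\[
\alpha = \min\left\{ \frac{1}{d}, \ \left(\frac{r_{0}}{b(T+1)}\right)^{\frac{1}{2}}, \ \left(\frac{r_{0}}{h(T+1)}\right)^{\frac{1}{3}} \right\},
\]
which immediately satisfies $\alpha \le \frac{1}{d}$. (When $b$, $h$, or $d$ vanishes, the associated candidate is $+\infty$ and is simply never selected, while its matching term in $\Phi_{T}$ is zero, so no generality is lost.)

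The crucial step is bounding the decreasing term. Since $\alpha$ is a minimum of positive quantities, its reciprocal is at most the sum of the three reciprocals,
\[
\frac{1}{\alpha} \le d + \left(\frac{b(T+1)}{r_{0}}\right)^{\frac{1}{2}} + \left(\frac{h(T+1)}{r_{0}}\right)^{\frac{1}{3}}.
\]
Multiplying through by $\frac{r_{0}}{T+1}$ and simplifying each product shows that $\frac{r_{0}}{\alpha(T+1)}$ is bounded by one copy each of $\frac{dr_{0}}{T+1}$, $\big(\frac{br_{0}}{T+1}\big)^{1/2}$, and $h^{1/3}\big(\frac{r_{0}}{T+1}\big)^{2/3}$.

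It remains to handle the two increasing terms, which is immediate from $\alpha \le \big(\frac{r_{0}}{b(T+1)}\big)^{1/2}$ and $\alpha \le \big(\frac{r_{0}}{h(T+1)}\big)^{1/3}$: these give $b\alpha \le \big(\frac{br_{0}}{T+1}\big)^{1/2}$ and $h\alpha^{2} \le h^{1/3}\big(\frac{r_{0}}{T+1}\big)^{2/3}$ respectively, contributing a second copy of the square-root and cube-root terms. Adding the three pieces yields exactly the claimed bound, with the factor $2$ appearing on the first two terms. I do not anticipate any genuine obstacle here: the argument is a standard step-size tuning lemma, and the only care required is the routine algebra confirming that each substituted candidate reproduces its target term and that the minimum distributes correctly across the reciprocal bound.
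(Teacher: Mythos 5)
Your proof is correct. Note that the paper does not prove this lemma at all---it imports it directly from the cited reference~\cite{koloskova2019decentralized}---and your argument is essentially the standard one behind that citation: the same three-way minimum for the step size, with the only stylistic difference that you bound the reciprocal of the minimum by the sum of the three reciprocals instead of splitting into three cases according to which candidate attains the minimum; both routes give the identical factor-2 bound.
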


Based on Theorem~\ref{thm: convergence} and Lemma~\ref{lem: step_tune}, 
with $r_{0} = \frac{2(1-\beta)(f(x_{0}) - f^{*})}{p}$, $b = \frac{L(\beta + 2 + 4 \beta^{2}p)(G^{2} + \varsigma^{2} + \sigma^{2}d)}{n(1-\beta)^{3}}$, and $h = \frac{8p(G^{2} + \sigma^{2}d)L^{2}}{(1-\beta)^{2}c^{2}}$,
choosing $\alpha = \min  \left\{ \left(\frac{r_{0}}{b(T+1)} \right)^{\frac{1}{2}}, \left( \frac{r_{0}}{h(T+1)}\right)^{\frac{1}{3}}, \frac{(1-\beta)^{2}}{2L} \right \} \leq \frac{(1-\beta)^{2}}{2L}$,
we have
\begin{align*}
     & \frac{1}{T} \sum_{t=0}^{T-1} \mathbb{E}\left[ \left\| \nabla f(\bar{x}_{t}) \right\|^{2} \right] \\
\leq & 2 \left( \frac{2L(\beta + 2 + 4 \beta^{2}p)(G^{2} + \varsigma^{2} + \sigma^{2}d) (f(x_{0}) - f^{*})   }{n(1-\beta)^{2}pT}  \right)^{\frac{1}{2}} \\
& + 2 \left( \frac{8p(G^{2} + \sigma^{2}d)L^{2}}{c^{2}}  \right)^{\frac{1}{3}} \left( \frac{2(f(x_{0}) - f^{*})}{pT} \right)^{\frac{2}{3}} \\
&+ \frac{4L(f(x_{0}) - f^{*})}{(1-\beta)pT} \\
\leq & O \left(   \left( \frac{(1+p)(1+\sigma^{2}d)}{npT}  \right)^{\frac{1}{2}} +  
\left(   \frac{\sqrt{p(1+\sigma^{2}d)}}{cpT}  \right)^{\frac{2}{3}} + \frac{1}{pT}
\right).
\end{align*}

With constant step size $\alpha = \sqrt{\frac{n}{T}}$ for $T \geq \frac{4nL^{2}}{(1-\beta)^{4}}$, there is
\begin{align*}
    & \frac{1}{T} \sum_{t=0}^{T-1} \mathbb{E}\left[ \left\| \nabla f(\bar{x}_{t}) \right\|^{2} \right] \\
\leq & \frac{2(f(x_{0}) - f^{*})(1-\beta)}{p\sqrt{nT}} + \frac{ L(\beta + 2 + 4 \beta^{2}p)(G^{2} + \varsigma^{2} + \sigma^{2}d) }{(1-\beta)^{3}\sqrt{nT}} \\
&+ \frac{8np(G^{2} + \sigma^{2}d)L^{2}}{(1-\beta)^{2}c^{2}T}.
\end{align*}

\end{document}